\let\mathbb\varmathbb
\newcommand{\savehyperref}[2]{\texorpdfstring{\hyperref[#1]{#2}}{#2}}
\newcommand{\Sref}[1]{\hyperref[#1]{\S\ref*{#1}}}
\newcommand{\SSEH}{\textsf{SSEH}}
\newcommand{\SSEHeq}{\textsf{SSEH}_{=}}
\newcommand{\GapSSEHeq}{\textsf{Gap-SSEH}_{=}}
\newcommand{\SSEHleq}{\textsf{SSEH}_{\leq}}
\newcommand{\USSEH}{\textsf{USSEH}}
\newcommand{\Pclass}{\textsf{P}}
\newcommand{\NP}{\textsf{NP}}
\newcommand{\e}{\eps}
\newcommand{\Paren}[1]{\left(#1\right)}
\newcommand{\Brac}[1]{\left[#1\right]}
\newcommand{\Abs}[1]{\left\lvert#1\right\rvert}
\newcommand{\norm}[1]{\lVert#1\rVert}
\newcommand{\Norm}[1]{\left\lVert#1\right\rVert}
\newcommand{\iprod}[1]{\langle#1\rangle}
\newcommand{\mper}{\,.}
\newcommand{\mcom}{\,,}
\newtheorem{problem}[theorem]{Problem}
\newcommand{\cD}{\mathcal D}
\newcommand{\cN}{\mathcal N}
\newcommand\MYcurrentlabel{xxx}
\newcommand{\MYstore}[2]{%
  \global\expandafter \def \csname MYMEMORY #1 \endcsname{#2}%
}
\newcommand{\MYload}[1]{%
  \csname MYMEMORY #1 \endcsname%
}
\newcommand{\MYnewlabel}[1]{%
  \renewcommand\MYcurrentlabel{#1}%
  \MYoldlabel{#1}%
}
\newcommand{\MYdummylabel}[1]{}
\newcommand{\torestate}[1]{%
  \let\MYoldlabel\label%
  \let\label\MYnewlabel%
  #1%
  \MYstore{\MYcurrentlabel}{#1}%
  \let\label\MYoldlabel%
}
\newcommand{\restatetheorem}[1]{%
  \let\MYoldlabel\label
  \let\label\MYdummylabel
  \begin{theorem*}[Restatement of \cref{#1}]
    \MYload{#1}
  \end{theorem*}
  \let\label\MYoldlabel
}
\newcommand{\restatelemma}[1]{%
  \let\MYoldlabel\label
  \let\label\MYdummylabel
  \begin{lemma*}[Restatement of \cref{#1}]
    \MYload{#1}
  \end{lemma*}
  \let\label\MYoldlabel
}
\newcommand{\restateprop}[1]{%
  \let\MYoldlabel\label
  \let\label\MYdummylabel
  \begin{proposition*}[Restatement of \cref{#1}]
    \MYload{#1}
  \end{proposition*}
  \let\label\MYoldlabel
}
\newcommand{\restatefact}[1]{%
  \let\MYoldlabel\label
  \let\label\MYdummylabel
  \begin{fact*}[Restatement of \prettyref{#1}]
    \MYload{#1}
  \end{fact*}
  \let\label\MYoldlabel
}
\newcommand{\restate}[1]{%
  \let\MYoldlabel\label
  \let\label\MYdummylabel
  \MYload{#1}
  \let\label\MYoldlabel
}
\title{How Hard is Robust Mean Estimation?}
\author{%
  Samuel B. Hopkins\thanks{University of California, Berkeley, \protect\url{hopkins@berkeley.edu}. Supported by a UC Berkeley Miller Fellowship.}
\and
Jerry Li \thanks{Microsoft Research, \protect \url{jerrl@microsoft.com}.}
}
\begin{document}

\maketitle

\begin{abstract}%
  Robust mean estimation is the problem of estimating the mean $\mu \in \R^d$ of a $d$-dimensional distribution $D$ from a list of independent samples, an $\e$-fraction of which have been arbitrarily corrupted by a malicious adversary.
  Recent algorithmic progress has resulted in the first polynomial-time algorithms which achieve \emph{dimension-independent} rates of error: for instance, if $D$ has covariance $I$, in polynomial-time one may find $\hat{\mu}$ with $\|\mu - \hat{\mu}\| \leq O(\sqrt{\e})$.
  However, error rates achieved by current polynomial-time algorithms, while dimension-independent, are sub-optimal in many natural settings, such as when $D$ is sub-Gaussian, or has bounded $4$-th moments.

  In this work we give worst-case complexity-theoretic evidence that improving on the error rates of current polynomial-time algorithms for robust mean estimation may be computationally intractable in natural settings.
  We show that several natural approaches to improving error rates of current polynomial-time robust mean estimation algorithms would imply efficient algorithms for the small-set expansion problem, refuting Raghavendra and Steurer's small-set expansion hypothesis (so long as $\Pclass \neq \NP$).
  We also give the first direct reduction to the robust mean estimation problem, starting from a plausible but nonstandard variant of the small-set expansion problem.
 \end{abstract}

\newpage

\tableofcontents

\newpage


\section{Introduction}

Robust mean estimation is the following basic statistical problem: given a list of $n$ samples $X_1,\ldots,X_n$ from some unknown probability distribution $D$ on $\R^d$, an unknown $\e$-fraction of which have been arbitrarily corrupted by a malicious adversary, find a vector $\hat{\mu}$ such that $\|\hat{\mu} - \E_{X \sim \cD} X \|$ is as small as possible, where (for this paper) $\| \cdot \|$ is the Euclidean norm.

Among other natural settings, robust mean estimation models estimation using data sets which contain outliers -- due to random corruptions or malicious data poisoning -- and, if $D$ is assumed to lie in some class $\mathcal C$ of distributions, estimation when nature only produces data from a distribution which is $\e$-close to some distribution in $\mathcal C$ in statistical distance.
It is the most elementary of many high-dimensional statistical estimation problems which become both statistically and computationally difficult in the presence of a small constant fraction of adversarial corruptions: robust covariance estimation, robust learning of hidden-variable models, and more.

Statisticians have studied estimation under adversarially-chosen corruptions since the 1960s, originally with the notion of ``breakdown points'' \cite{anscombe1960rejection,tukey1960,huber1964robust,tukey1975mathematics}.
However, until recently, statistically-optimal rates of error when an $\e$-fraction of data is corrupted were out of reach for computationally efficient algorithms.
For instance, if $X_1,\ldots,X_n$ are $\e$-corrupted samples from $\cN(\mu,I)$, then the estimator which outputs the \emph{Tukey median} of $X_1,\ldots,X_n$ with high probability achieves $\| \mu - \text{TukeyMedian}(X_1,\ldots,X_n) \| \leq O(\e)$, when $n \geq d/\e^2$ \cite{tukey1975mathematics}.
Unfortunately, the Tukey median is $\NP$-hard to compute in high dimensions, at least for worst-case $X_1,\ldots,X_n$ \cite{bernholt2006robust}.

Naive polynomial-time approaches, such as individually pruning $X_i$'s at large distance to the rest of $X_1,\ldots,X_n$, suffer much worse rates of error: typically they lead to estimators $\hat{\mu}$ with $\| \mu - \hat{\mu} \| \leq O(\sqrt{\e d})$, even when the uncorrputed samples come from a nice distribution, such as a Gaussian as above.
Notably, the rate of error for such estimators grows with the ambient dimension $d$.

Recently, the first dimension-independent error rates for robust mean estimation were achieved by \cite{DBLP:conf/focs/DiakonikolasKK016}.
Simultaneously and independently, \cite{DBLP:conf/focs/LaiRV16} achieved error for robust mean estimation scaling with the dimension as $O(\log d)$.
These works sparked a great deal of activity in algorithm design for robust statistics, leading to new algorithms for robust mean estimation under sparsity assumptions, robust clustering and robust learning of mixture models, robust linear regression, and more (see~\cite{li-thesis,steinhardt-thesis} for surveys of recent work).

In spite of the substantial algorithmic success, current algorithms remain statistically sub-optimal in many settings, especially with respect to the dependence of the error rate $\| \mu - \hat{\mu}\|$ on $\e$.
In this paper we are interested in the question:
\begin{quote}
  Do current polynomial-time algorithms for high-dimensional robust mean estimation achieve optimal error rates among all polynomial-time algorithms?
\end{quote}

\paragraph{Contributions}
Our main contribution is a family of reductions from several variants of the \emph{small-set expansion problem}, a close cousin of Khot's unique games problem, to robust mean estimation and related problems.
These reductions show that (a) current approaches for improving error rates of existing algorithms for robust mean estimation under natural assumptions on $D$ (such as bounded $4$-th moments) would refute Raghavendra and Steurer's small-set expansion hypothesis, and (b) any efficient algorithm improving on the error rates of current algorithms for robust mean estimation under Steinhardt, Charikar, and Valiant's \emph{resilience} assumption on $D$ (see below) would refute a strengthened version of the small-set expansion hypothesis.

Our reductions employ tools from spectral graph theory.
We reinterpret and strengthen ideas from Barak et al's proof that the $2 \rightarrow 4$ norm of a matrix is hard to approximate to any constant factor under the small set expansion hypothesis \cite{DBLP:conf/stoc/BarakBHKSZ12}.
Our reinterpretation results in a simple characterization of small sets of vectors in the spectral embedding of a small-set expander (see Section~\ref{sec:spectral-bounds}).
This characterization leads to our main results.
Along the way we dramatically simplify (and generalize)~\cite{DBLP:conf/stoc/BarakBHKSZ12}'s proof of small-set expansion hardness of $2 \rightarrow q$ norms, which may be of independent interest.

\paragraph{Beating $\sqrt{\e}$: the complexity landscape}
We turn to a more quantitative discussion of our main question.
In order for robust mean estimation to be information-theoretically solvable with nontrivial error guarantees (that is, solvable by any algorithm, irrespective of running time), some assumption must be made on the underlying distribution $D$.
A common and mild assumption is that $D$ has covariance $\Sigma \preceq I$.
In this case, robust mean estimation is possible both information-theoretically and by polynomial-time algorithms with error rate $O(\sqrt{\e})$, and (up to constants) this is information-theoretically optimal.

Better scaling with $\e$ is possible under stronger assumptions on $D$.
For instance, if $D$ has $p$-th moments bounded by a dimension-independent constant, then error rate $O(\e^{1-1/p})$ is information-theoretically achievable.
Relatedly, Steinhardt, Charikar, and Valiant \cite{steinhardt2017resilience} introduced a weaker notion: $(\sigma,\e)$-\emph{resilience}.
A distribution $D$ is $(\sigma,\e)$-resilient if every event of probability at least $(1-\e)$ has conditional mean $\mu'$ with $\|\mu' - \mu\| \leq \sigma$, where $\mu$ is the mean of $D$.
They show that $\e$-robust mean estimation is then possible with error $O(\sigma)$.

So far, no polynomial-time algorithm is known which achieves error better than $O(\sqrt{\e})$ under any resilience assumption, nor is a polynomial-time algorithm known which achieves error better than $O(\sqrt{\e})$ under a bounded $p$-th moments assumption.
Thus, a second question which motivates this paper is:
\begin{quote}
What structure in the distribution $D$ of uncorrupted samples can be exploited by polynomial-time algorithms to perform robust mean estimation with error $\e^{1/2 + \Omega(1)}$?
\end{quote}
Of course \emph{a priori} it could be that no polynomial-time algorithm has error better than $O(\sqrt{\e})$, but this is not the case.
If $D$ is Gaussian, then error $O(\e \log(1/\e))$ can be achieved in polynomial time \cite{DBLP:conf/focs/DiakonikolasKK016}.
And, if $D$ has \emph{certifiably} bounded $p$-th moments (a strengthening of $p$-th moment boundedness introduced independently by~\cite{hopkins2018mixture} and by~\cite{kothari2018robust}), then error $O(\e^{1-1/p})$ is achievable in polynomial time.
Furthermore, many natural distributions fall into the latter category: product distributions and strongly log-concave distributions, for example.

Thus, there is a nontrivial complexity landscape in robust mean estimation.
Our results point to new points of hardness in this landscape.
We show that \emph{current approaches} to robust estimation under both resilience and moment-boundedness (which in particular also solve \emph{certification} problems associated to moments and to resilience) would refute the small-set expansion hypothesis if they could be improved to error rate $\e^{1/2 + \Omega(1)}$.
And we show that \emph{any efficient algorithm} achieving error $\e^{1/2 + \Omega(1)}$ under a resilience assumption would refute a nonstandard version of the small-set expansion hypothesis (see Section~\ref{sec:unique-sse}).

\paragraph{Complexity of learning under niceness assumptions}
Typically, results on computational complexity of learning take one of three forms: (1) reduction from an NP-hard problem, (2) reduction from a problem which is believed to be average-case hard, such as planted clique or learning parities with noise, or (3) unconditional lower bounds against of a restricted class of algorithms, such as statistical query (SQ) algorithms or particular hierarchies of convex programs.

Approach (1) is appealing because it can yield lower bounds which apply to all polynomial-time algorithms based on weak and well-tested assumptions like $P \neq NP$.
Often, however, applications of approach (1) prove hardness of learning problems under input distributions which do not satisfy natural \emph{niceness conditions} -- assumptions like such as input data being drawn from a Gaussian distribution or from a distribution with bounded moments -- because they rely on embedding gadgets in the input distribution.
Algorithm designers often avoid such complexity results by assuming niceness conditions like these.

We follow approach (1) as well (subject to the small set expansion hypothesis), but our reductions produce nice input distributions, satisfying regularity conditions such as bounded moments or resilience; we therefore provide evidence from worst-case complexity that robust learning is hard \emph{even under niceness assumptions}.
The majority of technical work in this paper is devoted to showing that our reductions produce such nice distributions.

We note that approach (3), in the form of SQ lower bounds, has been investigated for the robust mean estimation problem -- see Section~\ref{sec:related-work}.



\paragraph{Open problems}
This paper only begins the study of hardness of robust estimation problems based on worst-case complexity assumptions: there is a great deal left to do!
We outline several open problems in Section~\ref{sec:open-probs}.


\subsection{Related Work}
\label{sec:related-work}

\paragraph{Robust statistics} The study of robust statistics, and specifically robust mean estimation, was initiated by seminal work of statisticians in the 60's and 70's~\cite{anscombe1960rejection,tukey1960,huber1964robust,tukey1975mathematics}.
However, it was not until recently that efficient algorithms were discovered for robust mean estimation in high dimensions which acheive nearly optimal error guarantees~\cite{DBLP:conf/focs/DiakonikolasKK016,DBLP:conf/focs/LaiRV16,diakonikolas2017being}.
The field has since experienced an explosion of algorithmic work.
For a survey on more recent algorithmic results, see~\cite{li-thesis,steinhardt-thesis}.

\paragraph{PAC learning lower bounds}
While there is a large literature on lower bounds for distributional learning problems either from average case assumptions or applying to a restricted classes of algorithms, there are only a handful of results we are aware of which base hardness of such problems on worst case hardness assumptions~\cite{kearns1994learnability,DBLP:journals/siamcomp/GuruswamiR09,feldman2006new,DBLP:conf/focs/ApplebaumBX08,DBLP:journals/jacm/Regev09,feldman2012computational,bun2016order,bubeck2018adversarial}.
Moreover, the lower bounds tend to be proved in a PAC learning sense, where the learning problem is \emph{worst-case over distributions}.
We consider a version of robust mean estimation which is worst-case over input distributions \emph{belonging to a class of nice distributions, i.e. resilient distributions or those with bounded moments.}
This amount of worst-case-ness allows us to base our results on worst-case hardness assumptions, but requires significant work in our reductions to produce such nice input distributions.

\paragraph{Computational lower bounds in robust statistics}
In the context of robust estimation, almost all known lower bounds were either against restricted classes of algorithms, notably statistical query algorithms~\cite{diakonikolas2017statistical,diakonikolas2018list,diakonikolas2019efficient}, or against specific estimators~\cite{johnson1978densest,bernholt2006robust}.

In particular, \cite{diakonikolas2017statistical} proves an SQ lower bound in the setting of robust mean estimation for Gaussian distributions suggesting that the $O(\e)$ vs $O(\e \log 1/\e)$ gap between information-theoretically optimal error rates and those of known polynomial-time algorithms is likely inherent.
For at least one of the problems we investigate -- complexity of robust mean estimation under bounded moment assumptions -- it would not be possible to prove an analogous SQ lower bound.
This is because for every fixed $p$ there is a simple (folklore) SQ \emph{algorithm} which makes $\poly(d)$ statistical queries (with $1/\poly(d)$ tolerance) and robustly estimates the mean of a distribution with bounded $p$-th moments to (information-theoretically optimal) accuracy $O(\e^{1-1/p})$.\footnote{For simplicity we ignore the dependence of the number of statistical queries and tolerance on $\e$.}
The only implementations of this algorithm we are aware of require $\exp(d)$ additional running time \emph{but the SQ framework only allows for lower bounds on the number and tolerance of queries, not on the additional running time to process the answers to those queries}.
We expect similar guarantees to be unachievable in polynomial time (especially in light of the present work), SQ lower bounds cannot evidence this.
A different approach, such as the reduction-based arguments we pursue here, is required.

The only other work we are aware of giving a reduction from small-set expansion to prove complexity of a robust learning problem is~\cite{hardt2013algorithms}, which gives lower bounds from small set expansion for the problem of identifying a low-dimensional subspace which contains a large fraction of a high-dimensional data set.
While both their work and ours show reductions from the small-set expansion problem, the works otherwise diverge on a technical level -- our reductions employ spectral graph theory, while theirs is largely combinatorial -- and the results are incomparable.
Furthermore, besides the constraint that the distribution of ``good'' samples lives on a low dimensional subspace, they enforce no additional niceness conditions.
In particular, the distribution which results from their reduction is exponentially ill-conditioned.
This stands again in contrast to the relative niceness of the distributions resulting from our reductions.

Klivans and Kothari \cite{DBLP:conf/approx/KlivansK14} show hardness of robustly learning halfspaces with respect to Gaussian data; however, they start from an average-case hardness assumption (learning sparse parities with noise) rather than a worst-case one as we do here.


\subsection{Results}

The fundamental problem of study in this paper is robust mean estimation.
At a high level, the question is as follows: given samples from a distribution $D$, a small fraction of which have been corrupted, estimate the mean of $D$ as well as possible.
There are several possible corruption models to consider.
In this work, we will show lower bounds against the following (relatively weak) notion of corruption, which dates back to work of Huber in the 1960s~\cite{huber1964robust}:
\begin{definition}[$\eps$-contamination]
Let $D$ be a distribution over $\R^d$.
We say that that $X_1, \ldots, X_n$ is an $\eps$-contaminated set of samples from $D$ if the $X_i$ are drawn i.i.d. from $(1 - \eps) D + \eps N$, where $N$ is an arbitrary, unknown distribution.
\end{definition}
This model is also known as \emph{Huber's contamination model} in the robust statistics literature.
The recent efficient algorithms~\cite{DBLP:conf/focs/DiakonikolasKK016,DBLP:conf/stoc/CharikarSV17,steinhardt2017resilience} actually work for slightly stronger notions of corruption.
All of our lower bounds will be against learning from $\eps$-contaminated samples, so in particular, they are also lower bounds against learning from corrupted samples as considered in these papers.

With these definitions, we can now formally state the robust mean estimation problem.
\begin{problem}[Robust mean estimation]
\label{prob:robust-mean}
Let $D$ be a distribution with mean $\mu$.
Given $\delta$-contaminated samples from $D$, output $\widehat{\mu}$ minimizing $\| \mu - \muhat \|_2$ with high probability.
\end{problem}
We briefly note, as matter of notation, that in Problem~\ref{prob:robust-mean} and the remainder of the paper, we will use $\delta$ (rather than $\eps$, as is standard in robust statistics) to denote the fraction of corrupted samples.
This will be helpful to stay notationally consistent with the literature on small-set expansion that we heavily rely on.

Without additional assumptions on $D$, Problem~\ref{prob:robust-mean} is impossible: there is no way to distinguish between $D$ and $\delta D + (1 - \delta) N$, and since $N$ can be arbitrary, the means of these two distributions can be arbitrarily far away.
To make this problem statistically tractable, we must impose some conditions on $D$.
In this paper we will focus on two previously considered conditions, namely, bounded moments and resilience.

\subsubsection{Bounded moments}
A canonical assumption in this area is that $D$ has some number of bounded moments.
For instance, arguably the most natural assumption is that $D$ has bounded covariance.
In this case, we have efficient algorithms matching the information theoretic lower bound:
\begin{fact}[\cite{DBLP:conf/icml/DiakonikolasKK017}]
\label{fact:second-moment-eff}
Let $\cD$ be the class of distributions over $\R^d$ distribution over $\R^d$ whose covariance have spectral norm at most $1$, or equivalently, which have
\begin{equation}
\label{eq:2-moment-bound}
\E_{X \sim D} \Abs{\iprod{v, X} - \iprod{v, \E_{X \sim D} X}}^2 \leq 1 \; ,
\end{equation}
for all unit vectors $v$.
There is a polynomial time algorithm, which for all small-enough $\delta > 0$ and all $D \in \cD$, given a $\delta$-contaminated set of samples from $D$ of size $\poly (d, 1 / \delta)$, outputs $\widehat{\mu}$ which with probability at least $9 / 10$, satisfies $\Norm{\widehat{\mu} - \E_{X \sim D} \mu } \leq O(\sqrt{\delta})$.
Moreover, no estimator (efficient or not) can achieve $\Norm{\widehat{\mu} - \E_{X \sim D} X } \leq o(\sqrt{\delta})$ with probability greater than $1/10$ for all $X \in \cD$.
\end{fact}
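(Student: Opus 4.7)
The statement has two components: (i) an efficient algorithm achieving error $O(\sqrt{\delta})$ under bounded covariance, and (ii) an information-theoretic lower bound of $\Omega(\sqrt{\delta})$ for the same class. I would handle these separately.

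For the algorithmic upper bound, the plan is to run an iterative spectral filtering algorithm in the style of \cite{DBLP:conf/focs/DiakonikolasKK016,DBLP:conf/icml/DiakonikolasKK017}. The key structural observation is: if $S$ is the current set of samples and $\widehat{\Sigma}_S$ is its empirical covariance, then either (a) $\|\widehat{\Sigma}_S\|_{\mathrm{op}} \leq 1 + C\delta$, in which case the empirical mean $\widehat{\mu}_S$ already satisfies $\|\widehat{\mu}_S - \mu\| = O(\sqrt{\delta})$, or (b) some unit direction $v$ has $v^\top \widehat{\Sigma}_S v \gg 1$, in which case the contribution to this large variance must come disproportionately from the $\delta$-fraction of corruptions (since the clean samples' covariance is concentrated around $\Sigma \preceq I$ by standard concentration, using $\poly(d, 1/\delta)$ samples). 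In case (b), I would use a soft-thresholding filter: assign each sample a weight that decreases with its squared projection onto $v$, and show that in expectation this removes more corrupted than clean points, creating a potential function (e.g., the ratio of removed clean mass to removed corrupted mass) that bounds the total iterations. Iterating until case (a) holds yields the claimed guarantee. The key quantitative step — translating ``spectral norm close to $1$'' into ``mean close to $\mu$'' — is a direct consequence of expanding $\|\widehat{\mu}_S - \mu\|^2$ as an inner product, bounding the contribution of the $\delta$-fraction of adversarial points by Cauchy--Schwarz against the operator norm of the second moment, and controlling clean-sample deviations by standard matrix concentration.

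For the information-theoretic lower bound, I would use a two-point Le Cam argument. Define, in one dimension, $D_0 = \delta_0$ (the point mass at zero) and $D_1 = (1-\delta)\delta_0 + \delta\,\delta_v$ where $v = 1/\sqrt{\delta}$. Then $\mu_0 = 0$, $\mu_1 = \delta v = \sqrt{\delta}$, and a direct computation shows both distributions have covariance at most $1$, so both lie in $\cD$. Now observe that $\delta$-contaminated samples from $D_0$, under the adversarial choice $N_0 = \delta_v$, produce the mixture $(1-\delta)\delta_0 + \delta\,\delta_v$, which is exactly $D_1$; and $\delta$-contaminated samples from $D_1$, under the choice $N_1 = D_1$, also produce $D_1$. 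Since the two observation distributions are identical, no estimator can succeed with probability $>1/2$ on both, and any estimator $\widehat{\mu}$ must have $\|\widehat{\mu} - \mu_i\| \geq \tfrac{1}{2}|\mu_1 - \mu_0| = \tfrac{1}{2}\sqrt{\delta}$ for one of $i \in \{0,1\}$ with probability at least $1/2 > 1/10$.

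The main technical obstacle is the filtering analysis for the upper bound: one must carefully track a potential function during iterations to ensure that the filter removes a strictly larger fraction of corrupted than clean samples, and that the concentration of clean samples' empirical covariance is sharp enough so that ``small operator norm'' translates into an $O(\sqrt{\delta})$ mean bound rather than something weaker. Everything else — the sample complexity $\poly(d,1/\delta)$, matrix Bernstein-type concentration, and the two-point construction — is routine given the framework.
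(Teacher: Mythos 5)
The paper does not prove Fact~\ref{fact:second-moment-eff}; it is cited to \cite{DBLP:conf/icml/DiakonikolasKK017} and stated without proof, so there is no in-paper argument to compare against. Your reconstruction is essentially the right route: the upper bound via iterative spectral filtering is precisely the approach of that paper, and the structural step (operator-norm bound on the empirical covariance $\Rightarrow$ $O(\sqrt{\delta})$ mean error, via the bias-variance decomposition $\mathrm{Var}_S(\iprod{v,X}) \geq \delta(1-\delta)\iprod{v,\mu_b-\mu_g}^2$ plus concentration of the clean samples) is the correct key lemma.

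One genuine slip is in your lower bound. The two-point Le Cam argument you give correctly shows that the two observation distributions coincide, hence the success events $\{\|\widehat{\mu}-\mu_0\|<\tfrac12\sqrt{\delta}\}$ and $\{\|\widehat{\mu}-\mu_1\|<\tfrac12\sqrt{\delta}\}$ are disjoint and one of them has probability at most $1/2$. But you conclude ``fails with probability at least $1/2>1/10$,'' and this is a logical inversion: the statement asserts that no estimator \emph{succeeds} with probability greater than $1/10$ on every $D\in\cD$, which requires exhibiting some $D$ on which the success probability is at most $1/10$, not at most $1/2$. Your two-point construction only rules out success probability exceeding $1/2$. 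To match the stated constant you would need a $K$-point version of the argument (e.g.\ $K\geq 10$ hypotheses $D_0,\dots,D_{K-1}$ whose means are pairwise $\Omega(\sqrt{\delta})$-separated yet all give rise to the same $\delta$-contaminated observation distribution, so the disjoint success events force $\min_i \Pr[\text{success on }D_i]\leq 1/K$), or else note that the paper's $1/10$ should really read $1/2$. Everything else in the proposal is sound.
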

\noindent
Thus in this case, up to constants, there is no gap between the robustness of efficient and inefficient estimators.
An obvious question is whether this can be strengthened by making additional structural assumptions on the data.
For instance, what if we assume $p$ bounded moments, for $p > 2$?
Indeed, in this setting something stronger is possible, at least with exponential running time:
\begin{fact}[folklore]
\label{fact:pth-moment-ineff}
Let $p > 2$ and let $\cD_p$ be the class of distributions over $\R^d$ whose $p$-th central moment is at most $1$: that is, $D \in \cD_p$ if and only if
\begin{equation}
\label{eq:moment-bound}
\E_{X \sim D} \Abs{\iprod{v, X} - \iprod{v, \E_{X \sim D} X }}^p \leq 1
\end{equation}
for all unit vectors $v$,
There exists an exponential-time algorithm which for all $\delta > 0$ sufficiently small and all $D \in \cD_p$, given a $\delta$-contaminated set of samples from $D$ of size $\poly (d, 1 / \delta)$, outputs $\widehat{\mu}$ so that $\Norm{\widehat{\mu} - \E_{X \sim D} X} \leq O(\delta^{1-1/p})$ with probability at least $9 / 10$.
Moreover, no estimator achieves error $\Norm{\widehat{\mu} - \E_{X \sim D} X} < o(\delta^{1 - 1/p})$ with probability at least $1/10$ over all of $\cD_p$.
\end{fact}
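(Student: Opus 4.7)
Both directions are folklore; the upper bound uses a brute-force ``Tukey-median-like'' search combined with the resilience property of bounded-moment distributions, and the lower bound is a two-point Le Cam construction.

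\emph{Upper bound.} The key structural fact is that every $D \in \cD_p$ is $\Paren{O(\delta^{1-1/p}),\,\delta}$-\emph{resilient}: for every event $E$ with $\Pr_{X \sim D}[E] \geq 1 - \delta$ and every unit vector $v$, using $\E[\iprod{v,X-\mu}] = 0$ and H\"older's inequality with exponents $p$ and $p/(p-1)$ gives
\begin{equation*}
\Abs{\iprod{v, \E[X\mid E] - \mu}} \;=\; \frac{\Abs{\E\brac{\iprod{v,X-\mu}\mathbf{1}_{E^c}}}}{\Pr[E]} \;\leq\; \frac{\Pr[E^c]^{1-1/p}}{\Pr[E]}\Paren{\E\Abs{\iprod{v,X-\mu}}^p}^{1/p} \;\leq\; O(\delta^{1-1/p})\mper
\end{equation*}
The algorithm takes $n = \poly(d,1/\delta)$ contaminated samples and exhaustively enumerates every subset $S \subseteq [n]$ of size $(1-\delta)n$ whose empirical $p$-th central moment is at most a fixed constant $C$ in every direction, returning the empirical mean of any such $S$. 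A standard $\eps$-net/Markov argument shows that once $n$ is large enough, the set $G$ of clean samples is itself valid with probability $\geq 9/10$. Any valid $S$ overlaps $G$ in at least $(1-2\delta)n$ points; applying the empirical analogue of the resilience bound to both $G$ and $S$, with $S \cap G$ playing the role of a high-probability event under each empirical distribution, yields $\Norm{\widehat{\mu}_S - \mu} \leq O(\delta^{1-1/p})$.

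\emph{Lower bound.} On $\R$ consider
\begin{equation*}
D_1 \;=\; \delta_0 \mcom \qquad D_2 \;=\; (1-\delta)\,\delta_0 \,+\, \delta\,\delta_{\delta^{-1/p}}\mper
\end{equation*}
Direct computation shows that $D_2$ has mean $\delta^{1-1/p}$ and $p$-th central moment $(1-\delta)\delta^{p-1} + (1-\delta)^p = O(1)$, so after rescaling by an absolute constant both $D_1$ and $D_2$ lie in $\cD_p$. Since $D_2$ is literally a $\delta$-contamination of $D_1$ and vice versa (taking the noise distribution $N$ to be the appropriate point mass), the joint distribution of $\delta$-contaminated samples can be made identical under the two models, and so they are information-theoretically indistinguishable from $\delta$-contaminated samples. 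Their means differ by $\Omega(\delta^{1-1/p})$, and the standard Le Cam two-point argument forces any estimator to err by $\Omega(\delta^{1-1/p})$ on at least one of $D_1,D_2$ with probability $\geq 1/2$.

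\emph{Main obstacle.} The only non-trivial step is the empirical-moment concentration in the upper bound: with only $p$-th-moment control on $D$, obtaining bounded empirical $p$-th central moments uniformly over every direction at the accuracy required by the resilience argument uses an $\eps$-net over the sphere combined with a truncation/Markov argument. This is routine but is what keeps the proof from being a single line.
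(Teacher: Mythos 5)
The paper gives no proof of Fact~\ref{fact:pth-moment-ineff} (it is labeled folklore), so there is nothing to compare against directly; your proposal is a correct and standard instantiation of the folklore argument. The resilience bound via H\"older is exactly the content of \prettyref{cor:resilience-to-moments}, and the exhaustive-search-over-candidate-subsets upper bound together with a Le Cam two-point lower bound is the canonical way to prove this.

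Two small points worth fixing. First, in the lower bound you claim $D_2$ is a $\delta$-contamination of $D_1$ ``and vice versa''; the vice versa is false, since $\tfrac{1}{\delta}\bigl(D_1 - (1-\delta)D_2\bigr)$ is a signed measure with a negative component and hence not a legal noise distribution $N$. This is harmless: the argument only needs a single observable law compatible with both hypotheses, and $D_2$ itself serves, being $(1-\delta)D_1 + \delta\,\delta_{\delta^{-1/p}}$ on the one hand and trivially $(1-\delta)D_2 + \delta D_2$ on the other. Second, the ``routine $\eps$-net/Markov'' step is genuinely the delicate point and deserves more than a wave: Markov for a fixed direction gives only a constant failure probability, which does not survive a union bound over $e^{\Theta(d)}$ net directions. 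One typically either replaces the raw empirical $p$-th moment by a truncated or median-of-means version, or argues directly that the clean sample set is empirically resilient (which requires only controlling conditional means of large sub-events, not uniform moment concentration) via a VC/chaining argument. Flagging this explicitly would make the proof complete, but the overall structure is right.
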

In particular, Fact~\ref{fact:pth-moment-ineff} says that for $p > 2$, it is possible to outperform the guarantees of the algorithm in Fact~\ref{fact:second-moment-eff} asymptotically as $\delta \to 0$.
However, despite much work in the area, no efficient algorithms are known which achieve error better than $O(\delta^{1/2})$, i.e. the rate in the $p = 2$ case, unless even stronger assumptions are made.
This leads to the question:
\begin{question}
\label{q:moments}
Is there some $p > 2$ and a polynomial-time algorithm which for all sufficiently-small $\delta > 0$ and all $D \in \cD_p$ finds $\widehat{\mu}$ satisfying $\|\widehat{\mu} - \E_{X \sim D} X\| < o(\sqrt \delta )$ with probability at least $9/10$ when given a $\delta$-corrupted set of $\poly(d,1/\delta)$ samples from $D$?
\end{question}

Towards answering Question~\ref{q:moments}, we offer evidence that current techniques to algorithmically exploit moment boundedness cannot be extended to positively answer Question~\ref{q:moments}.
The algorithms which achieve the guarantees in Fact~\ref{fact:second-moment-eff} solve, as a subroutine, the problem of maximizing the left-hand side of \eqref{eq:2-moment-bound} over all unit $v$.
The algorithms of \cite{hopkins2018mixture,kothari2018robust} which exploit $p$-th moment boundedness when the $p$-th moments satisfy additional structural assumptions analogously require subroutines which certify upper bounds on the left-hand side of \eqref{eq:moment-bound}.

A theorem of Barak et al. on hardness of computing the $2\rightarrow q$ norm of a matrix already shows that this approach cannot be extended to $p \geq 4$ under only the assumptions specified in Question~\ref{q:moments} without violating the small-set expansion hypothesis.
In the following, $D$ should be thought of as the uniform distribution over the vectors $a_1,\ldots,a_n$.

\begin{theorem}[\cite{DBLP:conf/stoc/BarakBHKSZ12}]
\label{thm:barak-hardness}
  If for any even $q \geq 2$ there is a polynomial-time algorithm which given $a_1,\ldots,a_n \in \R^d$ outputs a constant-factor approximation to $\max_{\|x\|=1} \frac 1 n \sum_{i=1}^n \iprod{a_i, x}^q$, then there is a polynomial-time algorithm for small set expansion.
\end{theorem}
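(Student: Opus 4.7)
The plan is a reduction from small-set expansion: given a regular graph $G$ on $n$ vertices promised to either have a set of size $\delta n$ with edge-expansion $\leq \e$ (YES) or all sets of size $\leq \delta n$ with expansion $\geq 1-\e$ (NO), I would construct vectors $a_1,\dots,a_n \in \R^d$ so that the normalized objective $\Phi := \max_{\|x\|=1} \frac1n \sum_{i} \iprod{a_i,x}^q$ is above a large threshold in YES-instances and below a small one in NO-instances. Let $A$ be the normalized adjacency matrix of $G$ (or $(A+I)/2$, to ensure nonnegative spectrum), let $V_\eta$ be the span of eigenvectors with eigenvalue $\geq 1-\eta$, let $P = UU^\top$ be the corresponding rank-$d$ projector with $U \in \R^{n\times d}$ having orthonormal columns, and set $a_i := \sqrt n\cdot U^\top e_i$, so that $\Phi = n^{q/2-1}\max_{\|x\|=1}\|Ux\|_q^q$.

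First I would handle completeness. If $|S|=\delta n$ has expansion at most $\e$, then $\iprod{1_S, A 1_S} \geq (1-\e)|S|$, so choosing $\eta \ll \e$ forces $\|P 1_S\|_2^2 \geq (1 - O(\e/\eta))|S|$, i.e.\ $P 1_S \approx 1_S$. Taking $x := U^\top 1_S / \|U^\top 1_S\|_2$ gives $Ux = P 1_S / \|P 1_S\|_2$; since $P 1_S$ is essentially supported on $|S|$ coordinates, the $\ell_2$-vs-$\ell_q$ inequality (any $\ell_2$-unit vector on $k$ coordinates has $\ell_q$-norm at least $k^{1/q - 1/2}$) yields $\|Ux\|_q^q \gtrsim (\delta n)^{1-q/2}$ and hence $\Phi \gtrsim (1/\delta)^{q/2-1}$, which grows without bound as $\delta \to 0$.

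The soundness direction -- showing $\Phi = O(1)$ in the NO case -- is the main obstacle. Given a unit $x$ with $\|Ux\|_q^q \gg n^{1-q/2}$, set $y := Ux \in V_\eta$ with $\|y\|_2 = 1$. Splitting $y$ into positive and negative parts, one of them (say $y^+$) carries at least half the $\ell_q$-mass; a dyadic level-set decomposition then produces a scale $\tau$ for which $S_\tau := \{i : y_i \geq \tau\}$ satisfies $\tau^q |S_\tau| \gtrsim \|y^+\|_q^q / \log n$ and $|S_\tau| \leq 1/\tau^2$. Using $Py = y$ and Cauchy--Schwarz,
\[
(\tau |S_\tau|)^2 \leq \iprod{y,1_{S_\tau}}^2 = \iprod{y, P 1_{S_\tau}}^2 \leq \|P 1_{S_\tau}\|_2^2,
\]
so $\|P 1_{S_\tau}\|_2^2 / |S_\tau| \geq \tau^2 |S_\tau|$. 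Since $P$ projects onto an eigenspace with eigenvalues $\geq 1-\eta$, the Rayleigh quotient $\iprod{1_{S_\tau}, A 1_{S_\tau}}/|S_\tau|$ is bounded away from zero whenever $\tau^2|S_\tau|$ is a constant; combined with $|S_\tau| \leq \delta n$, this produces a small set with edge-expansion strictly below $1-\e$, contradicting the NO-hypothesis. The $\log n$-factor loss in the level-set step would be removed by the tensoring trick of~\cite{DBLP:conf/stoc/BarakBHKSZ12}: replacing $U$ by $U^{\otimes t}$ raises the effective exponent to $qt$ and converts the polylogarithmic slack into a constant ratio.

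Calibrating $\delta, \eta, \e$ makes the YES/NO ratio a fixed constant strictly greater than $1$, so any constant-factor approximation of $\Phi$ distinguishes the two cases; combined with the standard equivalence between small-set expansion and its gap-version (implicit in the SSE hypothesis), this yields a polynomial-time SSE algorithm. The principal technical hurdle is the soundness step -- extracting a genuine small non-expanding set from $\ell_q$-mass concentration of a top-eigenspace vector -- and it is the tensoring step that turns the naive polylogarithmic slack into the constant-factor hardness gap the statement demands.
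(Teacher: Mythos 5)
Your construction (isotropic spectral embedding restricted to the top eigenspace) and completeness argument match both the Barak et al.\ original and the paper's version, modulo one typo: to get $\|P\1_S\|_2^2 \geq (1-O(\e/\eta))|S|$ you need $\e \ll \eta$, not $\eta \ll \e$; the paper fixes $\eta = 1/2$ and lets $\e \to 0$.

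The real issue is the soundness step. You take a single dyadic level set $S_\tau$ of $y = Ux$ maximizing $\tau^q|S_\tau|$, then assert via Cauchy--Schwarz that $\|P\1_{S_\tau}\|_2^2/|S_\tau| \geq \tau^2|S_\tau|$ and that ``the Rayleigh quotient \ldots is bounded away from zero whenever $\tau^2|S_\tau|$ is a constant.'' But you never establish $\tau^2|S_\tau| = \Omega(1)$, and it can fail: the dyadic decomposition only controls $\tau^q|S_\tau|$, and the level carrying most of the $\ell_q$-mass can concentrate on $O(1)$ coordinates (where $\tau \approx \|y\|_\infty$ and $|S_\tau| = O(1)$), which in a regular graph without self-loops has expansion $\approx 1$. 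This is precisely what the paper flags: ``Barak et al.'s procedure to find $S$ from $v$ is complex. In particular, it departs from the elegance of Cheeger's inequality that $S$ can be taken to be a level set of $v$'' --- and the paper's Theorem~\ref{thm:2-to-4-rounding} shows the correct rounding is a level set of $G^t\1_S + G^{t+1}\1_S$ after running a random walk from a level set $S$, not a level set of $y$ directly. Relatedly, your soundness target $\Phi = O(1)$ is not what a single application of the local Cheeger inequality yields: for the region $\tau < (\delta n)^{-1/2}$ the trivial bound $|S_\tau| \leq 1/\tau^2$ already contributes $\Theta(\delta^{1-q/2})$ to the normalized objective, the same order as the YES value, so without invoking the quantitative form of SSE (the paper's Proposition~\ref{prop:quant-SSE}, giving expansion bounds at \emph{every} set size) there is no gap at all.

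More broadly, you are attempting to reproduce the Barak et al.\ primal route (decompose the optimizing vector $y$ and round), whereas the paper argues in the dual: Lemma~\ref{lem:spectral-ub} shows that in a small-set expander \emph{every} small set $T$ has small conditional mean $\|\E[X \mid T]\|$, by running a random walk from $\1_T$ and iterating the local Cheeger inequality until the walk either spreads out or exposes a small cut; Fact~\ref{fact:means-to-moments} then converts uniform tail bounds into moment bounds, with no tensoring and no level-set decomposition of $y$. That dual argument sidesteps exactly the $\tau^2|S_\tau|$ issue you ran into. If you want to complete your route, you would need to either track the full dyadic sum against the quantitative SSE expansion profile (à la the paper's Lemma~\ref{lem:moment-cert1}), or incorporate a random-walk step as in Theorem~\ref{thm:2-to-4-rounding}; neither is captured by the one-level-set argument as written.
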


In this work we strengthen Barak et al.'s result in several ways.
Barak et al.'s result shows that for $c,s$ with $c/s$ arbitrarily large it is SSE-hard to distinguish a distribution with $4$-th moment at least $c$ from one with $4$-th moment at most $s$.
In statistical settings, however, it is natural to assume niceness of many moments.
For instance: is it possible to distinguish a distribution $D$ all of whose $q$-th moments for $q \leq 100$ have sub-Gaussian-type behavior (i.e. growing like $q^{q/2}$) from one whose $4$-th moment is very large?
An algorithm which could solve this decision problem seems likely to lead to an algorithm to improve on error $o(\sqrt{\delta})$, at least under the assumption that $D$ has $100$ sub-Gaussian moments.

We show that this apparently easier decision problem is still SSE hard.
This requires modifying Barak et al.'s reduction so that in one case a distribution with sub-Gaussian moments is obtained; we do this by composing the reduction of Barak et al. with a smoothing/averaging step which we analyze via Rosenthal's moment inequality.
The result addresses an open problem of Jacob Steinhardt \cite{steinhardt-talk}.
Additionally, we extend Barak et al.'s result to the case $p = 2 + \gamma$ for arbitrarily small $\gamma$, and we substantially simplify their proof.



\begin{theorem}[Informal, see Theorem~\ref{thm:moment-cert}]
\label{thm:moments-intro}
  For any $p > 2$ and $q \in (2,p]$ and $c > s > s_0$ for some universal contant $s_0$, a polynomial time algorithm to distinguish the following two cases would yield a polynomial-time algorithm for the small-set expansion problem.
  Given $a_1,\ldots,a_n \in \R^d$, distinguish between:
  \textbf{yes:} there is a unit $x$ that $\frac 1 n \sum_{i=1}^n |\iprod{a_i,x}|^q > (cq)^{q/2}$, and \textbf{no:} for all unit $x$ and $q \leq p$ it holds that $\frac 1 n \sum_{i=1}^n |\iprod{a_i,x}|^{q} \leq (sq)^{q/2}$.
\end{theorem}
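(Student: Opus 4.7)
The plan is to combine the spectral-embedding characterization from \prettyref{sec:spectral-bounds} with a smoothing-by-random-sums step analyzed via Rosenthal's moment inequality. First I would take an SSE instance on a graph $G$ with small-set parameter $\delta$ and form the rescaled spectral embedding $a_1,\ldots,a_n \in \R^d$ of $G$'s top-eigenspace, normalized so that $\frac{1}{n}\sum_i a_i a_i^{\mathsf{T}} = I_d$. In the \emph{yes} case (when $G$ has a small non-expanding set of measure $\delta$) there is a unit $x$ (roughly the top eigenvector concentrated on the bad set) with $\frac{1}{n}\sum |\iprod{a_i,x}|^q \geq L \gtrsim \delta^{1-q/2}$, which can be made arbitrarily large by choosing $\delta$ small. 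In the \emph{no} case ($G$ is a small-set expander) the top eigenvectors are delocalized, so $\max_i\|a_i\| \leq M$ for a graph-dependent bound $M$ (e.g., $M = O(\sqrt{r})$ for Cayley-graph hard instances, where $r$ is the top-eigenspace dimension). The uniform bound $\|a_i\|\leq M$ immediately gives $\frac{1}{n}\sum|\iprod{a_i,x}|^{q'} \leq M^{q'-2}$ for every unit $x$ and $q' \geq 2$, but $M^{q'-2}$ need not match the sub-Gaussian target $(sq')^{q'/2}$.

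To close this gap I would compose with a smoothing step, defining
\[
b_j \;=\; \tfrac{1}{\sqrt{k}}\sum_{t=1}^{k} a_{i_{j,t}}, \qquad i_{j,t} \text{ i.i.d.\ uniform on } [n],
\]
with $k$ chosen so that $k \asymp M^2$. For any unit $x$ and $q' \in [2,p]$, $\iprod{b_j,x}$ is a normalized sum of $k$ i.i.d.\ copies of $Y = \iprod{a_I,x}$, and Rosenthal's inequality gives
\[
\Esymb\,|\iprod{b_j,x}|^{q'} \;\leq\; K_{q'}\Bigl(\tfrac{\Esymb|Y|^{q'}}{k^{q'/2-1}} \;+\; (\Esymb Y^2)^{q'/2}\Bigr).
\]
In the no case $\Esymb|Y|^{q'} \leq M^{q'-2}$ and $\Esymb Y^2 \leq 1$, so the first term is at most $1$ once $k \geq M^2$; in this regime the second-moment term dominates, which licenses the sharp sub-Gaussian Rosenthal/Latala constant $K_{q'} = O(q')^{q'/2}$, yielding $\Esymb|\iprod{b_j,x}|^{q'} \leq (sq')^{q'/2}$ for a universal constant $s$ and all $q' \leq p$. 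Standard concentration over the random draws $i_{j,t}$ lets me replace expectations by empirical averages using only $\mathrm{poly}(n)$ smoothed samples.

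For the yes case I apply the matching Rosenthal-type lower bound (Latala/Hitczenko) to the spiky $x$ arising from the bad set:
\[
\Esymb\,|\iprod{b_j,x}|^{q} \;\geq\; c_q\,\max\!\bigl(L/k^{q/2-1},\,1\bigr).
\]
Choosing the SSE small-set parameter $\delta$ small enough that $L \gtrsim k^{q/2-1}(cq)^{q/2}$ (i.e., $\delta$ at most some polynomial in $1/M$ and $q$) forces $\frac{1}{n'}\sum_j|\iprod{b_j,x}|^q > (cq)^{q/2}$. Since $k \asymp M^2$ is only polylogarithmic in $n$ for standard SSE-hard graphs, a polynomially small $\delta$ suffices, as permitted by the SSE hardness assumption.

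The main obstacle is landing exactly at the sub-Gaussian scale $(sq')^{q'/2}$ rather than the loose generic Rosenthal $(q'/\log q')^{q'}$ bound: the analysis must remain in the Gaussian-dominated regime uniformly across $q' \in [2,p]$, and the threshold $k \asymp M^2$ is tuned precisely for this. A secondary point is that Barak et al.'s SoS-based argument was restricted to even integer $q$; the extension to arbitrary non-even $q \in (2,p]$ (and in particular $p = 2+\gamma$ for small $\gamma$) relies on the spectral-embedding viewpoint from \prettyref{sec:spectral-bounds}, which handles any $q > 2$ directly without requiring even-integer or polynomial structure.
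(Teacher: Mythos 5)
Your high-level outline matches the paper's skeleton — take the isotropic spectral embedding of an SSE instance and compose with a Rosenthal-analyzed averaging step — but the critical \emph{no}-case analysis is different from the paper's, and as written it has a genuine gap.

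The gap is the claim that in the no case $\max_i\|a_i\| \leq M$ for some controlled $M$ (``$M = O(\sqrt{r})$ for Cayley-graph hard instances,'' ``polylogarithmic in $n$ for standard SSE-hard graphs''). $\SSEH$/$\SSEHleq$ concern \emph{arbitrary} graphs, and the small-set-expander condition $\Phi_G^{\leq}(\delta) \geq 1-\e$ does not furnish a polylogarithmic (or even $n^{o(1)}$-with-a-fixed-exponent) bound on a single row of the embedding: applying \prettyref{lem:spectral-ub} to a singleton $T=\{i\}$ only gives $\|b_i\| \lesssim (\delta n)^{O(1/\log(1/\e))}\,\e^{0.1}/\sqrt{\delta}$, a genuine (if slow) polynomial in $n$ for any fixed $\e$. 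Tying $k \asymp M^2$ to that quantity therefore does not yield a constant number of summands, and hence neither the deterministic $n^{O(k)}$-point construction nor the proposed ``sample $\poly(n)$ smoothed points'' shortcut is justified — the latter also quietly requires uniform convergence of $\frac1{n'}\sum_j |\iprod{b_j,x}|^{q'}$ over the entire unit sphere in $\R^d$, which is not ``standard'' when $d$ can be $\Theta(n)$.

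What the paper actually does, and what you are missing, is the following chain. It first reduces from the \emph{quantitative} form of $\SSEHleq$ (\prettyref{prop:quant-SSE}), which in the no case gives $\Phi_G^{\leq}(\delta') \geq 1-(\delta')^{\Omega(\e)}-\gamma/\delta'$ at \emph{every} scale $\delta'$. Feeding this into \prettyref{lem:spectral-ub} bounds the conditional mean $\|\Esymb X\mid E - \Esymb X\|$ for events $E$ of \emph{all} sizes (not just singletons), and \prettyref{fact:means-to-moments} converts these mean-shift bounds into the moment bound $\Esymb|\iprod{v,X}|^r \leq \delta^{cr}/\delta^{r/2-1}$ of \prettyref{lem:moment-cert1}. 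The crucial feature is the extra $\delta^{cr}$ factor, which vanishes as $\delta\to 0$. This is exactly what lets the Rosenthal step run with a \emph{constant} number of summands $k=\alpha/\delta$ (so $|S'|=n^{O(1/\delta)}$ is honestly polynomial): in
$(C_1 r)^r (\delta/\alpha)^{r/2-1}\Esymb|\iprod{v,X}|^r + (C_2 r)^{r/2}$
the first term becomes $(C_1 r)^r(1/\alpha)^{r/2-1}\delta^{cr}$, which is killed by shrinking $\delta$, leaving the sub-Gaussian second term. Your uniform $M^{r-2}$ bound, by contrast, does not improve as $\delta\to 0$, which is why you were forced into the uncontrolled $k\asymp M^2$. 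To repair your argument you would need essentially to re-derive \prettyref{lem:spectral-ub} plus \prettyref{fact:means-to-moments} plus the quantitative SSE reduction — i.e., the actual content of the paper's proof. Your \emph{yes}-case analysis (mean shift from the small non-expanding set plus a lower bound for the moments of the averaged vector) is consistent with what the paper does via \prettyref{fact:means-to-moments}; the use of a Latala/Hitczenko lower bound in place of the direct ``one of the $k$ summands lands in the bad set'' estimate is a fine alternative, though heavier machinery than needed.
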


\subsubsection{Resilience}
Another recently introduced assumption is that of \emph{resilience:}
\begin{definition}[Resilience, see \cite{steinhardt-thesis}]
  Let $X$ be an $\R^d$-valued random variable with mean $\E X = \mu$.
  $X$ is $(\sigma,\delta)$-\emph{resilient} in a norm $\| \cdot \|$ if for all events $A$ with $\Pr A \geq 1-\delta$, we have
  $\Norm{ \E X \, | \, A - \mu } \leq \sigma$.
  Equivalently, $X$ is $(\sigma,\delta)$-resilient if for all events $A$ with $\Pr A \leq \delta$, we have$
  \Norm{\E X \, | \, A - \mu} \leq \sigma \cdot \frac{1 - \Pr A}{\Pr A}$.
\end{definition}
In the remainder of the paper we will primarily consider the case where the norm $\| \cdot \|$ is the $\ell_2$ norm in $\R^d$, since that is the setting in which our hardness results will apply.
For the proof of equivalence, see Lemma 3 and Lemma 10 in \cite{steinhardt2017resilience}.

It is not hard to show (see Corollary~\ref{cor:resilience-to-moments}) that if $D$ has second moments bounded by $1$, then $D$ is $(\sqrt{\delta}, \delta)$-resilient for all $\delta \leq 1/2$.
Thus it might not be surprising that in this setting, we can achieve rates for robust mean estimation similar to those in Fact~\ref{fact:second-moment-eff}, at least inefficiently.
However, there is already some asymptotic gap here between what is information-theoretically achievable and what is know to be achievable in polynomial time, since $(\sqrt{\delta},\delta)$ resilience for a \emph{fixed} $\delta$ is somewhat weaker than second moments bounded by $1$.

\begin{fact}[\cite{steinhardt2017resilience}]
\label{fact:resil-sqrt}
Let $\cD_\delta$ be the class of distributions over $\R^d$ which are $(\sqrt{\delta},\delta)$-resilient.
There exists an (exponential time) algorithm, which for all small-enough $\delta > 0$ and all $D \in \cD_\delta$, given a $\delta$-contaminated set of samples from $D$ of size $\poly (d, 1 / \delta)$, outputs $\widehat{\mu}$ which with probability at least $9 / 10$, satisfies $\Norm{\widehat{\mu} - \E_{X \sim D} X} \leq O(\sqrt{\delta})$.
Furthermore, there is a polynomial-time algorithm which achieves $\Norm{\widehat{\mu} - \E_{X \sim D} X} \leq O(\sqrt{\delta \log(1/\delta)})$.
Moreover, no estimator achieves error $\Norm{\widehat{\mu} - \E_{X \sim D} X} < c \sqrt{\delta}$ with probability at least $1/10$.
\end{fact}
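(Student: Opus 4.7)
The plan is to prove the upper bounds and lower bound separately. For the exponential-time $O(\sqrt{\delta})$ estimator I would use an exhaustive search over subsets: enumerate every subset $T$ of the contaminated samples $X_1,\ldots,X_n$ of size $(1-\delta)n$, and accept one whose empirical distribution $\widehat{D}_T$ is itself $(O(\sqrt{\delta}),\delta)$-resilient; output the empirical mean $\overline{X}_T$. To see correctness, first argue by a sampling/concentration lemma that the subset $G$ of uncorrupted samples has $\widehat{D}_G$ resilient with parameters $(O(\sqrt{\delta}),\delta)$ with high probability once $n \geq \poly(d,1/\delta)$, so the search succeeds. Then, for \emph{any} accepted $T$, observe $|T\cap G|\geq (1-2\delta)n$, write $\overline{X}_T - \overline{X}_G = \tfrac{|T\setminus G|}{|T|}(\overline{X}_{T\setminus G}-\overline{X}_{T\cap G}) - \text{(symmetric)}$, and apply the resilience of $\widehat{D}_T$ and $\widehat{D}_G$ to each of the four conditional means, obtaining $\|\overline{X}_T - \overline{X}_G\| = O(\sqrt{\delta})$. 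Combined with $\|\overline{X}_G - \mu\| = O(\sqrt{\delta})$ (again by resilience transferred to the empirical distribution of the uncorrupted samples) this yields the stated rate.

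For the polynomial-time $O(\sqrt{\delta \log(1/\delta)})$ algorithm I would replace the combinatorial search by a convex relaxation: optimize over weights $w\in \Delta_n$ with $w_i \leq \tfrac{1}{(1-\delta)n}$, and use an iterative filtering routine whose invariant is that the spectral norm of the weighted second-moment operator $M(w) = \sum_i w_i(X_i-\overline{X}_w)(X_i-\overline{X}_w)^\top$ is controlled. The reason the extra $\sqrt{\log(1/\delta)}$ factor appears is that $(\sqrt{\delta},\delta)$-resilience of $D$ only implies a bound of order $\log(1/\delta)$ on the operator norm of $D$'s (truncated) covariance: integrating the resilience inequality $\|\E[X\mid A]-\mu\|\leq \sqrt{\delta}\cdot\tfrac{1-\Pr A}{\Pr A}$ against the layer-cake representation of the covariance contributes a harmonic factor. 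Plug this into the standard filtering analysis (each round either returns a good mean or deletes at least one bad sample while deleting few good ones) and terminate after $O(n)$ rounds.

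For the lower bound I would give a one-dimensional two-point construction. Let $D_1 = (1-\delta)\mathbf{1}_0 + \delta \mathbf{1}_{1/\sqrt{\delta}}$ and $D_2 = \mathbf{1}_0$; then $|\mu_1-\mu_2| = \sqrt{\delta}$, and both distributions are $(\sqrt{\delta},\delta)$-resilient since the only nontrivial conditioning removes the $\delta$-atom of $D_1$, shifting its mean by exactly $\sqrt{\delta}$. Choosing noise $N_1 = \mathbf{1}_0$ and $N_2 = \delta \mathbf{1}_0 + (1-\delta)\mathbf{1}_{1/\sqrt{\delta}}$ gives $(1-\delta)D_1 + \delta N_1 = (1-\delta)D_2 + \delta N_2$, so the two corruption scenarios produce identical sample distributions and no algorithm can simultaneously estimate the means of both with error $< \sqrt{\delta}/2$ with probability $>9/10$.

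The main obstacle is the polynomial-time upper bound: the exhaustive search and the two-point lower bound are routine, but making the filtering argument work under \emph{only} resilience (rather than a direct moment bound) requires carefully showing that resilience alone suffices to bound the spectrum of $M(w)$ at a good weight vector, and that when the spectrum is large one can still locate coordinates to down-weight without losing too much good mass—this is where the $\sqrt{\log(1/\delta)}$ slack enters and must be tracked through every step.
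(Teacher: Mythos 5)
The paper does not prove this fact itself; it cites it directly from \cite{steinhardt2017resilience}, and your sketch reconstructs the approach of that reference: exhaustive search over $(1-\delta)n$-subsets checked for resilience (exponential time), spectral filtering (polynomial time), and a two-point indistinguishability argument (lower bound). Two caveats. First, the heuristic that $(\sqrt{\delta},\delta)$-resilience yields an $O(\log(1/\delta))$ bound on the operator norm of the truncated covariance by integrating the resilience inequality against a layer-cake representation is not literally correct: resilience at the single scale $\delta$ does not control the bulk of the distribution, and one can construct $(\sqrt{\delta},\delta)$-resilient distributions whose second moment in some direction is $\Theta(1/\delta)$ (e.g.\ a $\delta^2$-mass atom at distance $\delta^{-3/2}$ from the origin). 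The actual Steinhardt--Charikar--Valiant argument obtains the $\sqrt{\log(1/\delta)}$ loss through a more involved duality/reweighting analysis, not a naive truncation of $D$; you correctly flag this step as the crux, but the mechanism you offer for the $\log$ factor would not survive scrutiny as written. Second, the two-point construction rules out success probability exceeding $1/2$, not $1/10$ as literally stated; this is an inessential constant (the $1/10$ vs.\ $9/10$ thresholds in such statements are conventional), but a fully rigorous match to the stated claim would need a brief additional remark. Everything else, including the exhaustive-search completeness argument via $|T \cap G| \geq (1-2\delta)n$ and the algebraic verification that the two contamination scenarios induce the same sample distribution, is sound modulo routine constant-factor adjustments to the resilience scale.
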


A reasonable strengthening of this considers the condition that $D$ is $(\sigma,\delta)$-resilient for some $\sigma \ll \sqrt{\delta}$.
The following basic fact about resilience shows that such assumptions suffice information-theoretically to achieve improved error rates.

\begin{fact}[\cite{steinhardt2017resilience}]
\label{fact:resil-info}
There is an (inefficient) algorithm which given $\poly(d,1/\delta)$ $\delta$-contaminated samples from a $(\sigma,\delta)$-resilient distribution $D$ outputs $\widehat{\mu}$ such that with probability at least $9/10$ it holds that $\Norm{\widehat{\mu} - \E_{X \sim D} X} \leq O(\sigma)$.
\end{fact}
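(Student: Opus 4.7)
The plan is to analyze the canonical exponential-time ``resilient-subset finder'' estimator: brute-force search for a subset of roughly $(1-\delta)n$ of the samples whose empirical distribution is itself resilient, then output its empirical mean. I would organize the argument into three steps, with the main work concentrated in the first.

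First, the main technical step is a uniform concentration statement: if $n = \poly(d, 1/\delta)$ and $S \subseteq \{X_1, \ldots, X_n\}$ denotes the uncorrupted samples, then with probability at least $99/100$ the uniform distribution on $S$ is itself empirically $(O(\sigma), 2\delta)$-resilient, meaning every $S' \subseteq S$ with $|S'| \geq (1-2\delta)|S|$ has $\Norm{\tfrac{1}{|S'|}\sum_{i \in S'} X_i - \mu} \leq O(\sigma)$. I would dualize the resilience condition, reducing it to bounding, for every unit $v \in \R^d$ and every $B \subseteq S$ of size at most $2\delta|S|$, the projected conditional deviation $\iprod{v, \tfrac{1}{|B|}\sum_{i \in B} X_i - \mu}$. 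The class of halfspace-induced selectors has VC dimension $O(d)$, so combining a net over $v \in S^{d-1}$ with a VC-uniform concentration bound reduces matters to the analogous functional on $D$ itself, which is precisely what the population $(\sigma,\delta)$-resilience hypothesis furnishes (with a mild inflation of the fraction parameter from $\delta$ to $2\delta$).

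Next, the algorithm enumerates all subsets $T \subseteq \{X_1, \ldots, X_n\}$ of size $\lceil (1-\delta)n \rceil$ and, for each, certifies via further enumeration whether $T$ is empirically $(C\sigma, 2\delta)$-resilient for a sufficiently large constant $C$. By the previous step, at least one such $T$ exists with high probability (namely $T = S$), so the algorithm outputs the empirical mean $\widehat{\mu}$ of some accepted $T$. The running time is exponential in $n$, but no efficiency is claimed.

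Finally, to bound $\Norm{\widehat{\mu} - \mu}$ I would invoke the standard two-set intersection argument. For any accepted $T$, inclusion-exclusion gives $|T \cap S| \geq (1 - 2\delta)\max(|T|, |S|)$, so $T \cap S$ is a heavy subset of both $T$ and $S$. Let $\mu_{T \cap S}$ denote its empirical mean. The empirical resilience of $T$ yields $\Norm{\widehat{\mu} - \mu_{T \cap S}} \leq O(\sigma)$, the empirical resilience of $S$ from Step 1 yields $\Norm{\mu_{T \cap S} - \mu} \leq O(\sigma)$, and the triangle inequality completes the bound. The entire difficulty is concentrated in Step 1: once uniform empirical resilience of $S$ is established, Steps 2 and 3 are essentially mechanical chaining arguments.
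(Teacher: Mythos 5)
The paper states Fact~\ref{fact:resil-info} purely as a citation to \cite{steinhardt2017resilience} and does not reprove it, so there is no in-paper argument to compare against; your proposal is a faithful reconstruction of the Steinhardt--Charikar--Valiant proof (brute-force an empirically resilient $(1-\delta)$-fraction of the samples, then chain means through the common intersection). The one step you treat too lightly is the finite-sample resilience claim in Step~1: a bare VC bound controls the empirical \emph{probabilities} of halfspace-threshold events, but resilience constrains conditional \emph{means}, i.e.\ truncated first moments of the form $\tfrac 1 n \sum_{i : \iprod{v,X_i-\mu} > t} \iprod{v,X_i-\mu}$, which are a priori unbounded in magnitude and do not form a VC class. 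To control these uniformly you must feed back the tail bound $\Pr\bigl[\iprod{v,X-\mu}>t\bigr] \leq O(\sigma/t)$ that population $(\sigma,\delta)$-resilience itself supplies, truncate at a data-dependent level, and integrate the survival function; this is precisely the content of SCV's generalization lemma. Your skeleton is the right one, and as you note yourself, the real work is concentrated in fleshing out that single step.
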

\noindent
In particular, Fact~\ref{fact:resil-info} implies that if $\sigma \leq o(\sqrt{\delta})$, then it is information-theoretically possible to outperform even the exponential time algorithm from Fact~\ref{fact:resil-sqrt}.
This leads to the question:

\begin{question}
\label{question:resilience}
  Is there a function $\sigma(\delta)$ and a polynomial-time algorithm which for all small-enough $\delta > 0$ given $\poly(d,1/\delta)$ $\delta$-contaminated samples from any $(\sigma,\delta)$-resilient distribution $D$ can find $\widehat{\mu}$ such that $\|\widehat{\mu} - \E_{X \sim D} X \| \leq o(\sqrt \delta)$ with probability at least $9/10$?
\end{question}

We prove two theorems suggesting a negative answer to Question~\ref{question:resilience}.
The first is in a similar spirit to Theorem~\ref{thm:moments-intro}.
Existing algorithms (both efficient and inefficient) for robust mean estimation under resilience assumptions solve as a subroutine the problem of determining whether (the uniform distribution over) a set of samples is $(\sigma,\delta)$-resilient.
Thus, a potential route to design an algorithm for Question~\ref{question:resilience} is to improve existing guarantees for algorithms to check if a set of points is resilient.
We show that such improvements would violate the small-set expansion hypothesis.

\begin{theorem}[Informal, see Theorem~\ref{thm:resilience-hardness}]
\label{thm:resilience-cert-intro}
  For every sufficiently-small $s > 0$ there exists $\delta > 0$ such that an efficient algorithm for the following problem would yield an efficient algorithm for small set expansion:
  Given a set of points $a_1,\ldots,a_n \in \R^d$, distinguish between the cases \textbf{yes:} the uniform distribution on $\{a_1,\ldots,a_n\}$ is $(s\sqrt{\delta},\delta)$ resilient, and \textbf{no:} it is not $(0.4 \sqrt{\delta}, \delta)$-resilient.
\end{theorem}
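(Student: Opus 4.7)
The plan is to reduce from Raghavendra--Steurer's small-set expansion problem via a spectral embedding close in spirit to the one used in Section~\ref{sec:spectral-bounds}. Given a regular SSE instance on $n$ vertices at set-size $\delta n$ and expansion gap $\eta$, let $T$ be the lazy random-walk matrix of the graph, so that $T \succeq 0$, all eigenvalues of $T$ lie in $[0,1]$, and $T\mathbf{1} = \mathbf{1}$. Let $P$ be the orthogonal projection onto the span of eigenvectors of $T$ with eigenvalue at least an absolute threshold $\tau$ (say $\tau = 1/2$). The reduction outputs the points $a_v := \sqrt{n}\, Pe_v \in \mathbb R^n$, one per vertex; since $P\mathbf{1} = \mathbf{1}$, the uniform distribution on $\{a_v\}$ has mean $\mu = \tfrac{1}{\sqrt n}\mathbf{1}$.

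The crux of the analysis is the exact identity
\[
\Bignorm{\frac{1}{|S|}\sum_{v\in S} a_v - \mu}^2 \;=\; \frac{n\,\Snorm{P\mathbf{1}_S}}{|S|^2} \;-\; 1\,,
\]
which follows from $\Iprod{P\mathbf{1}_S,\mathbf{1}} = |S|$. The remaining task is to translate the SSE YES/NO dichotomy into matching control of $\Snorm{P\mathbf{1}_S}$. On the YES side, every $S$ with $|S| \leq \delta n$ has $\Iprod{T\mathbf{1}_S, \mathbf{1}_S} \leq \eta |S|$, and the inequality $\Snorm{P\mathbf{1}_S} \leq \tau^{-1}\Iprod{T\mathbf{1}_S,\mathbf{1}_S}$ (valid since $T \succeq 0$ and $P$ projects onto eigenvalues $\geq \tau$) yields $\Snorm{P\mathbf{1}_S} \leq (\eta/\tau)|S|$. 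On the NO side, a single witness $S$ with $|S| = \delta n$ and $\Iprod{T\mathbf{1}_S,\mathbf{1}_S} \geq (1-\eta)|S|$ satisfies $\Snorm{P\mathbf{1}_S} \geq \tfrac{1-\eta-\tau}{1-\tau}|S| = (1-O(\eta))|S|$, by a one-line eigenvalue truncation on the spectral decomposition of $\mathbf{1}_S$.

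Plugging into the identity: in the NO case the mean shift is $\sqrt{(1-O(\eta))/\delta - 1} = \Omega(1/\sqrt \delta)$, which exceeds the non-resilience threshold $0.4\sqrt{\delta}(1-\delta)/\delta$ for all sufficiently small $\delta$ and $\eta$. In the YES case any $S$ with $|S| = \delta' n \leq \delta n$ has shift at most $\sqrt{\eta/(\tau\delta')}$, and a short computation shows this is below the resilience threshold $s\sqrt{\delta}(1-\delta')/\delta'$ as soon as $\eta \leq c\,s^2 \tau$ for a small absolute constant $c$ (using $\delta' \leq \delta \leq 1/2$ to absorb $(1-\delta')^2 \geq 1/4$). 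So for any sufficiently small $s > 0$, we set $\eta = \Theta(s^2)$ and invoke the SSE hypothesis to obtain a corresponding $\delta > 0$ at which the promise problem is hard.

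The principal obstacle is that $(\sigma,\delta)$-resilience demands control over \emph{all} subsets of size up to $\delta n$, while the SSE hypothesis is typically phrased at a single scale; fortunately the small-set expansion property passes to all smaller scales, and the SSE-based upper bound on $\Snorm{P\mathbf{1}_S}$ scales in $|S|$ at precisely the rate the resilience threshold requires. The rest is book-keeping: tuning $\tau$, $\eta$, and $\delta$ to balance the YES parameter $s$ against the fixed NO parameter $0.4$, and verifying that the spectral projection is computable in polynomial time to adequate numerical precision.
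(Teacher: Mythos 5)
Your overall framework — take the isotropic spectral embedding of the SSE instance, note that the mean-shift of the embedding of a subset $S$ is controlled by $\snormt{P\1_S}$, and split into SSE yes/no cases — matches the paper's strategy, and your derivation of the exact identity
\[
\Bignorm{\tfrac{1}{|S|}\textstyle\sum_{v\in S} a_v - \mu}^2 = \frac{n\,\snormt{P\1_S}}{|S|^2} - 1
\]
is correct, as is your eigenvalue-truncation lower bound for the SSE no case. However, there is a fatal error in the SSE yes case, and it is precisely the step the paper devotes the local Cheeger machinery (Lemma~\ref{lem:local-cheeger-2.5}) to handling.

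You introduce the lazy random walk $T = \tfrac12(I+G)$ specifically so that $T \succeq 0$, which makes the inequality $\snormt{P\1_S} \leq \tau^{-1}\iprod{T\1_S,\1_S}$ valid. But the laziness also forces $\iprod{T\1_S, \1_S} = \tfrac12|S| + \tfrac12\iprod{G\1_S,\1_S} \geq \tfrac12|S|$ for \emph{every} $S$, so your claim that the SSE hypothesis gives $\iprod{T\1_S,\1_S} \leq \eta|S|$ is impossible for $\eta < 1/2$. What the hypothesis $\Phi_G(S) \geq 1-\eta$ actually yields is $\iprod{T\1_S,\1_S} \leq \tfrac{1+\eta}{2}|S|$, which via your chain of inequalities gives only $\snormt{P\1_S} = \Theta(|S|)$ — barely better than the trivial bound $\snormt{P\1_S} \leq |S|$. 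Plugging this in, the mean shift at scale $\Pr E = \delta'$ is $\Theta(1/\sqrt{\delta'})$, which exceeds the resilience threshold $s\sqrt{\delta}(1-\delta')/\delta'$ for all $\delta' \gtrsim s^2\delta$ — i.e., for essentially the whole relevant range — and the reduction gives no information.

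If you instead drop the laziness and keep $T = G$, the inequality $\snormt{P\1_S} \leq \tau^{-1}\iprod{G\1_S,\1_S}$ is simply false when $G$ has negative eigenvalues: the right-hand side can be small (or negative) from cancellation between large positive and large negative eigenvalue contributions while the left-hand side stays $\Theta(|S|)$. The paper sidesteps this by writing $\Pi_{1/2} \preceq 4GG^\top$ (valid for \emph{any} $G$, since $GG^\top$ is automatically PSD) and then bounding $\|G\1_S\|^2$ via Lemma~\ref{lem:local-cheeger-2.5}: if $\|G\1_S\|^2$ were large, a level set of $(1-\eta)\1_S + \eta G\1_S$ would exhibit a small non-expanding set, contradicting the SSE assumption. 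That rounding step is the missing ingredient; there is no eigenvalue-truncation shortcut around it. Separately, your assertion that ``the small-set expansion property passes to all smaller scales'' is stated without proof and is not literally true (the expansion bound degrades like $\delta/\delta'$ — a clique of $\sqrt{n}$ vertices can escape detection at scale $\delta n$); the paper instead uses the gap-amplified form $\GapSSEHeq$, which guarantees control only over a bounded window of scales, together with the trivial bound $\snormt{P\1_S} \leq |S|$ below that window.
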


Our final theorem is the first in the literature to directly attack hardness for robust mean estimation via reduction from a worst-case complexity assumption, rather than reducing to related problems like certifying moment bounds or checking resilience as in Theorem~\ref{thm:barak-hardness}, Theorem~\ref{thm:moments-intro}, and Theorem~\ref{thm:resilience-cert-intro}.
We are able to show a negative answer to Question~\ref{question:resilience} under a strengthened small-set expansion hypothesis.
Our strengthened version, which we call the \emph{unique small-set expansion hypothesis} is as follows:

\begin{hypothesis}[Unique Small-Set Expansion Hypothesis]
For every $\e > 0$ there exists $\delta > 0$ such that given a graph $G$, it is $\NP$-hard to distinguish the following cases: \textbf{no} every set $S \subseteq [n]$ of $\delta n$ vertices has expansion $\Phi_G(S) \geq 1- \epsilon$, or \textbf{yes:} there exists a set $S \subseteq [n]$ of $\delta n$ vertices in $G$ such that $\Phi_G(S) \leq \epsilon$, and every other subset $T \subseteq [n]$ of $\delta n$ vertices with $S \cap T = \emptyset$ has $\Phi_G(T) \geq 1- \epsilon$.
\end{hypothesis}

Here \emph{unique} refers to the fact that in the \textbf{yes} case, the set $S$ is the unique small nonexpanding set in $G$.\footnote{This use of ``unique'' should not be confused with Unique Games!}
While we are not aware of this strengthening being considered previously in the literature, we also do not know any algorithmic techniques which could refute it.
Hence we view the following theorem as at least a barrier to improving existing algorithms for robust mean estimation.

\begin{theorem}[Informal, see Theorem~\ref{thm:usseh-robust-mean}]
\label{thm:unique-intro}
  If Question~\ref{question:resilience} has an affirmative answer then the Unique Small Set Expansion Hypothesis is false (or $P = NP$).
\end{theorem}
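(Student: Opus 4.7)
The plan is to reduce directly from $\USSEH$ to the algorithm hypothesized in Question~\ref{question:resilience}. Given an instance $G$ of $\USSEH$ with parameters $(\eps,\delta)$, form the spectral embedding $a_1,\dots,a_n \in \R^d$ of its vertices using the top-$k$ eigenvectors of (a $\sqrt n$-rescaled version of) its normalized adjacency matrix, as in the machinery of Section~\ref{sec:spectral-bounds}. Let $D$ be the uniform distribution on $\{a_1,\dots,a_n\}$, with mean $\mu$, which is efficiently computable from $G$. The reduction hands uncorrupted samples from $D$ to the algorithm with contamination level $\delta$, and outputs ``\textbf{yes}'' iff the returned $\widehat{\mu}$ satisfies $\|\widehat{\mu}-\mu\| \geq c\sqrt{\delta}$ for an appropriate constant $c$.

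Correctness follows from a spectral dichotomy. In the \textbf{no} case, every $\delta n$-sized vertex set has expansion at least $1-\eps$; choosing $\eps$ sufficiently small in terms of $\delta$ and applying the spectral tools of Section~\ref{sec:spectral-bounds} shows that $D$ itself is $(\sigma,\delta)$-resilient for some $\sigma = o(\sqrt{\delta})$, so the algorithm must return $\widehat{\mu}$ within $o(\sqrt{\delta})$ of $\mu$. In the \textbf{yes} case, let $S$ be the unique $\delta n$-sized non-expander and set $D' := D|_{[n]\setminus S}$. The uniqueness clause of $\USSEH$ says every $\delta n$-sized subset disjoint from $S$ expands, and this translates through Section~\ref{sec:spectral-bounds} into $(\sigma,\delta)$-resilience of $D'$. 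The decomposition $D = (1-\delta) D' + \delta D|_S$ exhibits the input as a legitimate $\delta$-contamination of $D'$, so the algorithm must return $\widehat{\mu}$ within $o(\sqrt{\delta})$ of $\mu_{D'}$. A short spectral calculation using the non-expansion of $S$ together with the $\sqrt n$ scaling yields $\|\mu_{D'} - \mu\| = \Omega(\sqrt{\delta})$, and hence $\|\widehat{\mu}-\mu\| = \Omega(\sqrt{\delta})$.

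The main obstacle is ruling out \emph{spurious} resilient decontaminations in the \textbf{yes} case: the algorithm is only obliged to output near the mean of \emph{some} resilient $D^*$ with $D = (1-\delta)D^* + \delta N$, so one must show that every such $D^*$ has $\mu_{D^*}$ within $o(\sqrt{\delta})$ of $\mu_{D'}$. The approach is to bound $\beta := D^*(S)$ by applying $(\sigma,\delta)$-resilience of $D^*$ to the cluster $S$: the spectral non-expansion of $S$ forces $\mu_{D^*|_S}$ to lie $\Omega(1/\sqrt{\delta})$ away from the spectral centroid of $[n]\setminus S$, while resilience bounds $\|\mu_{D^*|_S} - \mu_{D^*}\| \leq \sigma(1-\beta)/\beta$; combining these forces $\beta = o(\delta)$ and hence $\mu_{D^*}$ is within $o(\sqrt{\delta})$ of $\mu_{D'}$. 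This is precisely where the \emph{uniqueness} clause of $\USSEH$ is indispensable: under the standard $\SSEH$, many disjoint small non-expanders could coexist, and a resilient $D^*$ could then avoid placing mass on $S$ while being supported on a different non-expanding cluster with $\mu_{D^*}$ close to $\mu$, defeating the reduction. A subsidiary calibration tunes $\eps$ as a sufficiently small polynomial in $\delta$ so that the $o(\sqrt{\delta})$ and $\Omega(\sqrt{\delta})$ scales genuinely separate, following the spectral bookkeeping already used in Theorem~\ref{thm:moments-intro} and Theorem~\ref{thm:resilience-cert-intro}.
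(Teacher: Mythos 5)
Your reduction is, at the top level, the same as the paper's in Theorem~\ref{thm:usseh-robust-mean}: take the isotropic spectral embedding $\{b_i\}$ of the $\USSEH$ instance $G$, feed it to the hypothesized robust mean estimator with contamination level $\Theta(\delta)$, and accept iff the returned $\widehat\mu$ is far from $e_1 = \E_{i\sim[n]} b_i$. The soundness step (resilience of the whole embedding in the \textbf{no} case) and the key completeness ingredient — showing $\{b_i\}_{i\notin S}$ is $(O(\e^{0.05})\sqrt\delta,\delta/4)$-resilient, which is where the uniqueness clause enters via Lemma~\ref{lem:unique-mean-shift} — all match the paper.

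Where you diverge is the "main obstacle" you identify about \emph{spurious} resilient decontaminations $D^*$. Under the interpretation used in the paper, this step is not needed: the algorithm in Question~\ref{question:resilience} is required to work for \emph{any} $(\sigma,\delta)$-resilient $D$ consistent with the observed input, since the true $D$ is adversarial and unknown to the algorithm. Once the paper establishes that $D' := D|_{[n]\setminus S}$ is resilient, the output $\widehat\mu$ must be close to $\mu_{D'}$ by the algorithm's own guarantee, with no further argument about other $D^*$'s. (If two resilient size-$(1-\delta)n$ decontaminations had means much more than $\sigma$ apart, no estimator — efficient or not — could satisfy Fact~\ref{fact:resil-info} on that input, so the conclusion is automatic.) Your extra argument is therefore conservative rather than essential; it would only matter under a weaker reading of the guarantee where the estimator need only match \emph{some} resilient decontamination. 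Moreover, your sketch of it has a concrete gap: a valid decontamination $D^*$ is uniform on some $T\subseteq[n]$ of size $(1-\delta)n$, so $D^*|_S$ is uniform on $T\cap S$ and not on $S$, and $\Phi_G(S)\leq\e$ does not force $\Phi_G(T\cap S)$ below $1/2$ for an arbitrary subset $T\cap S\subset S$ — so Lemma~\ref{lem:spectral-lb} does not yield the claimed $\Omega(1/\sqrt\delta)$ deviation. Similarly $D^*|_{[n]\setminus S}$ is uniform on $T\setminus S$, not on $[n]\setminus S$, so bounding $\beta = D^*(S)$ alone does not give $\mu_{D^*}\approx\mu_{D'}$. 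One more minor point: the paper fixes $\e$ as a constant depending only on $\alpha,\beta^*$ and lets $\USSEH$ supply a hard $\delta$ for that $\e$; tuning $\e$ as a polynomial in $\delta$ inverts the quantifier order that $\USSEH$ actually provides.
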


It remains an interesting open problem to see if Theorem~\ref{thm:unique-intro} can be strengthened to yield an algorithm for the (vanilla) small set expansion problem.

\subsubsection{Spectral graph theory: Cheeger-style rounding for analytically sparse vectors}
Our reductions involve spectral graph theory for small-set expanders, and one of our technical contributions is to substantially simplify current understanding of a simple structural question in spectral graph theory.
This leads to the proofs of our main theorems, and answers an open question of Barak on simplification of the proof that the $2\rightarrow 4$ norm is hard to approximate under the small-set expansion hypothesis (see Exercise 6.2 in \cite{barak-open-problem}).

We review definitions formally in Section~\ref{sec:prelims}, but let us briefly recall some basics.
For a regular $n$-node graph $G$ and a set $S \subseteq [n]$, the expansion of $S$, denoted $\Phi_G(S)$, is the probability that a random walk initialized uniformly in $S$ leaves it after one step.
If we denote also by $G$ the normalized adjacency matrix, then the expansion is $\Phi_G(S) = 1 - \iprod{1_S, G 1_S} / |S|$.
Of course, this makes sense only for indicator vectors $1_S$ of sets of vertices.

Cheeger's inequality extends the relationship between the quadratic form of $G$ and expansion to other vectors.
A consequence of Cheeger's inequality is the following fact:
\begin{fact}[Consequence of Cheeger's inequality]
If $v$ is any unit vector $v$ where $\iprod{v, Gv} \geq 1/2$ (and $v$ is orthogonal to the all-$1$'s vector), there is a level set $S$ of the vector with $w_i = |v_i|$ with expansion $\Phi_G(S) \leq 0.99$.
\end{fact}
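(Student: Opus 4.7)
The plan is a Cheeger-style sweep-cut argument applied to the non-negative vector $w$ defined by $w_i = |v_i|$. Because $G$ is the normalized adjacency matrix of a graph and therefore has non-negative entries, taking absolute values can only increase the quadratic form: $\langle w, G w\rangle = \sum_{i,j} G_{ij}|v_i||v_j| \geq \sum_{i,j} G_{ij} v_i v_j = \langle v, G v\rangle \geq 1/2$, while $\|w\|_2 = \|v\|_2 = 1$. Writing $L = I - G$, this gives $\langle w, L w\rangle \leq 1/2$.

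Next I would work with the sweep cuts $S_t = \{i : w_i^2 \geq t\}$ for $t > 0$, which are exactly the level sets of $w$. Two co-area identities drive the argument: summing over coordinates gives $\int_0^\infty |S_t|\,dt = \sum_i w_i^2 = 1$; and noting that an edge $\{i,j\} \in E(G)$ with $w_i > w_j$ is cut by $S_t$ precisely when $t \in (w_j^2, w_i^2]$ gives $\int_0^\infty |E(S_t, \bar{S}_t)|\,dt = \sum_{\{i,j\} \in E(G)} |w_i^2 - w_j^2|$.

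The remaining step is to bound the edge integral by Cauchy--Schwarz and compare with the volume integral. Writing $|w_i^2 - w_j^2| = |w_i - w_j|(w_i + w_j)$, Cauchy--Schwarz bounds the edge integral by the product of $\bigl(\sum_{\{i,j\} \in E}(w_i - w_j)^2\bigr)^{1/2}$ and $\bigl(\sum_{\{i,j\} \in E}(w_i + w_j)^2\bigr)^{1/2}$. Since $G$ is $d$-regular with $G = A/d$, these two sums equal $d\langle w, L w\rangle$ and $d(\|w\|^2 + \langle w, G w\rangle)$ respectively; by the bounds from the first paragraph they are at most $d/2$ and $3d/2$, giving a Cauchy--Schwarz product of at most $d\sqrt{3}/2 < 0.87\,d$. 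If every level set $S_t$ had $\Phi_G(S_t) > 0.99$, then integrating $|E(S_t, \bar{S}_t)| > 0.99\,d\,|S_t|$ in $t$ would force the edge integral to exceed $0.99\,d$, contradicting this bound.

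The only subtle point is the numerical constant. The textbook Cheeger inequality $\Phi \leq \sqrt{2\langle w, L w\rangle / \|w\|^2}$ relies on the crude estimate $\sum_{\{i,j\} \in E}(w_i + w_j)^2 \leq 2d\|w\|^2$, which only gives $\Phi \leq 1$ when $\langle w, L w\rangle = 1/2$---insufficient for the claim. I would instead use the exact identity $\sum_{\{i,j\} \in E}(w_i + w_j)^2 = d(2 - \langle w, L w\rangle)$, so that $\langle w, L w\rangle \leq 1/2$ pulls the bound strictly below $1$, in fact below $\sqrt{3}/2$. The hypothesis $v \perp \mathbf{1}$ plays no essential role in the inequality above; it is stated to rule out the trivial case $v = c\,\mathbf{1}$, where $|v|$ has only the empty and full sets as level sets.
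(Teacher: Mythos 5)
Your sweep-cut argument is the right approach. The paper states this Fact without proof, and your argument essentially reconstructs the numerator of the local Cheeger inequality (Lemma~\ref{lem:local-cheeger}, which the paper cites from Steurer's thesis). Your crucial observation at the end---that one must use the exact identity $\sum_{\{i,j\}\in E}(w_i+w_j)^2 = d(2-\langle w,Lw\rangle)$ rather than the crude estimate $\leq 2d\|w\|^2$ used in the textbook Cheeger proof---is exactly what makes the constant land below $1$. However, your bookkeeping in the Cauchy--Schwarz step is wrong as stated: you claim the two factors are \emph{at most} $d/2$ and $3d/2$, but the hypothesis $\langle w,Lw\rangle \leq 1/2$ makes the second factor \emph{at least} $3d/2$ (it can be as large as $2d$), so the two claimed upper bounds do not both follow from the first paragraph. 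The correct observation is that the product $\langle w,Lw\rangle\cdot(2-\langle w,Lw\rangle)$ is increasing on $[0,1]$, so over $[0,1/2]$ it is maximized at the right endpoint, where the two factors are $d/2$ and $3d/2$ and the product is $3d^2/4$; this is what you intend, but the logic as written is a non sequitur.

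There is a second gap you should close. Your contradiction---if every level set has $\Phi_G(S_t) > 0.99$ then $\int_0^\infty |E(S_t,\bar S_t)|\,dt > 0.99d$---tacitly ignores trivial level sets. For $t \in (0,\min_i w_i^2]$ one has $S_t = [n]$, which contributes $n\min_i w_i^2$ to $\int|S_t|\,dt = 1$ while contributing nothing to the edge integral. So the assumption that all \emph{nontrivial} level sets have $\Phi_G > 0.99$ only yields $\int|E|\,dt > 0.99d\,(1 - n\min_i w_i^2)$, which is not a contradiction once $n\min_i w_i^2 > 0.125$. You wave this away by saying $v\perp\mathbf{1}$ rules out the trivial case, but that hypothesis rules out $v = c\mathbf{1}$, not $|v| = c\mathbf{1}$: for instance $v$ alternating $\pm n^{-1/2}$ is orthogonal to $\mathbf{1}$ yet has constant $|v|$ and can satisfy $\langle v,Gv\rangle \geq 1/2$ on a graph with a balanced sparse cut. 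In that degenerate case the Fact is still (vacuously) true because $[n]$ is itself a level set with $\Phi_G([n]) = 0$; you should state that explicitly rather than suggest the orthogonality prevents it.
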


In the context of small-set expansion, it is important to detect the existence \emph{small} sets of vertices -- say, $\delta n$ vertices for small constants $\delta$ -- with expansion bounded away from $1$.
A key question is: what \emph{analytical} properties of a vector $v$ with $\iprod{v, Gv} \geq 1/2$ give rise to a set of $\delta n$ vertices $S$ with expansion $\Phi_G(S) \leq 0.99$?

\cite{DBLP:conf/stoc/BarakBHKSZ12} showed that it is sufficient for $v$ to be \emph{analytically sparse}.
In particular, they showed that if $\|v\|_4^4 \geq 1/\delta$ -- that is, the $4$-norm of $v$ is similar to that of the (scaled) indicator vector of a set of size $\delta n$, then one may find a set of $\delta n$ vertices in $G$ with imperfect expansion.
(Recall that sparse vectors, which are qualitatively similar to indicator vectors, have larger $4$-norm than typical unit vectors.)
One catch is that $v$ must be completely contained in the span of eigenvectors of $G$ of magnitude at least $1/2$, which is a stronger requirement than $\iprod{v, Gv} \geq 1/2$.

\begin{theorem}[Consequence of Theorem 2.4 in \cite{DBLP:conf/stoc/BarakBHKSZ12}]
If there is $v$ in the span of eigenvectors of $G$ with eigenvalue at least $1/2$ such that $\|v\|_4^4 \geq 1/\delta n$, then $G$ contains a set $S$ of $\delta n$ vertices having expansion $\Phi_G(S) \leq 1-c$ for a universal $c > 0$.
Furthermore, $S$ may be found in polynomial time from $G$ and $v$.
\end{theorem}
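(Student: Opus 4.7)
The plan is a Cheeger-style rounding of $v$ to a small non-expanding set, where the analytic sparsity of $v$ plays the role that the indicator-structure plays in the classical Cheeger argument. Normalize $\|v\|_2=1$ and assume $v\ge 0$ entrywise; the latter is WLOG since replacing $v$ by $|v|$ preserves $\|v\|_4$ and only increases $\iprod{v,Gv}$ (as $G$ is entrywise nonnegative).

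Define $S_t=\{i:v_i^2\ge t\}$, so $|S_t|\le 1/t$ by Markov. The standard Cheeger integral identity
\[
\int_0^\infty \iprod{\mathbf{1}_{S_t},\,G\mathbf{1}_{S_t}}\,dt \;=\; \sum_{i,j}G_{ij}\min(v_i^2,v_j^2),
\]
combined with the eigenvalue condition $\iprod{v,Gv}\ge 1/2$, yields a lower bound $\sum_{i,j}G_{ij}\min(v_i^2,v_j^2)\ge 1-\sqrt{3}/2=:\kappa$. Indeed, $2\min(a,b)=a+b-|a-b|$ together with Cauchy--Schwarz gives
$\sum_{ij} G_{ij}|v_i^2-v_j^2|\le \sqrt{\sum G_{ij}(v_i-v_j)^2 \cdot \sum G_{ij}(v_i+v_j)^2}=2\sqrt{1-\iprod{v,Gv}^2}\le \sqrt 3$. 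Together with $\int |S_t|\,dt=\|v\|_2^2=1$, this already reproduces classical Cheeger: some level set $S_{t^*}$ has $\Phi_G(S_{t^*})\le 1-\kappa$. But a priori $|S_{t^*}|$ could be as large as $n$.

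To enforce $|S_{t^*}|=O(\delta n)$, use the sparsity. The pointwise identity $v_i^4=\int_0^\infty 2t\,\mathbf{1}_{S_t}(i)\,dt$ integrates to $\int_0^\infty 2t\,|S_t|\,dt=\|v\|_4^4\ge 1/(\delta n)$. Thus under the probability measure $\pi(t)\propto t\cdot |S_t|$, the threshold $t$ concentrates (by Markov on $\pi$) at values of order $1/(\delta n)$, where $|S_t|\le 1/t=O(\delta n)$. Combining this with the Cheeger mass bound by averaging $\Phi_G(S_t)$ against $\pi$, and then finding a good threshold in its support by discretization, would produce a $t^*=\Theta(1/(\delta n))$ with $|S_{t^*}|=O(\delta n)$ and $\Phi_G(S_{t^*})\le 1-c$ for universal $c>0$. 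After absorbing constants into $\delta$ (or trimming if necessary), we get exactly $\delta n$ vertices with expansion $\le 1-c$; everything is polynomial time as $S_{t^*}$ is a level set of the given vector found by discretizing the threshold.

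\textbf{The main obstacle} lies in the combining step: a naive bound on the contribution from small $t$ (where $|S_t|$ can be as large as $n$) to the \emph{unweighted} Cheeger integral gives $O(1/\delta)$, much larger than $\kappa$, so one cannot simply restrict the integral to the useful range of $t$. A more careful argument is required, for instance a direct lower bound of the form $\sum_{ij}G_{ij}\min(v_i^2,v_j^2)^2\ge c\,\|v\|_4^4$ (equivalent to $\E_\pi[1-\Phi_G(S_t)]\ge c$), which in turn requires controlling the ``squared Dirichlet form'' $\sum G_{ij}(v_i^2-v_j^2)^2$ of the pointwise-squared vector in terms of $\|v\|_\infty$ and the spectral hypothesis. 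This is the technical content of Theorem~2.4 of \cite{DBLP:conf/stoc/BarakBHKSZ12}, and it is here that the paper's new spectral-graph simplification would come in.
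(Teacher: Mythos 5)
You have correctly set up the Cheeger integral framework: the reduction to $v\geq 0$, the co-area identity $\int_0^\infty \iprod{\1_{S_t}, G \1_{S_t}}\,dt = \sum_{ij}G_{ij}\min(v_i^2,v_j^2)$, the lower bound $\geq 1-\sqrt{3}/2$ from $\iprod{v,Gv}\geq 1/2$ via $2\min(a,b)=a+b-|a-b|$ and Cauchy--Schwarz, and the sparsity constraint $\int 2t|S_t|\,dt = \|v\|_4^4\geq 1/(\delta n)$. You also correctly flag that naive averaging fails because the thresholds carrying the non-expanding mass may have $|S_t|\gg\delta n$. But you then explicitly leave the crucial step open --- a weighted lower bound of the type $\sum_{ij}G_{ij}\min(v_i^2,v_j^2)^2\geq c\,\|v\|_4^4$ (or any other way to force the good threshold into the useful range) --- deferring it to Theorem~2.4 of BBHKSZ12 without supplying it. That inequality \emph{is} the theorem; without it your argument proves nothing beyond the classical Cheeger statement. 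To see that the gap is genuinely hard, note that the natural Cauchy--Schwarz attack gives only
\[
\sum_{ij}G_{ij}\min(v_i^2,v_j^2)^2 \;\geq\; \tfrac12\|v\|_4^4 \;-\; 2\,\|v\|_\infty\|v\|_4^2
\]
(using $\min(a,b)^2\geq\tfrac{(a+b)^2}{4}-\tfrac{(a+b)|a-b|}{2}$, then $\sum_{ij}G_{ij}|v_i^4-v_j^4|\leq 4\|v\|_6^3\leq 4\|v\|_\infty\|v\|_4^2$ via the Dirichlet-form bound), which closes the loop only when $\|v\|_\infty$ is small compared to $\|v\|_4^2$, an assumption you do not have. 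Removing this side condition is exactly the nontrivial content of BBHKSZ12.

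It is also worth pointing out that the paper itself does not prove this statement; it cites BBHKSZ12 and instead establishes the weaker Theorem~\ref{thm:2-to-4-rounding} (with an extra $\log(1/\delta)$ factor in $|S|$) by a different route, sketched in Appendix~\ref{sec:sparse-round}. There, the hypothesis $\|v\|_4^4\geq 1/(\delta n)$ is first converted via Fact~\ref{fact:means-to-moments} into the existence of a single level set $S$ of $|v|$ of size $O(\delta n)$ whose conditional mean in the isotropic spectral embedding is large, and then the random-walk argument of Lemma~\ref{lem:spectral-ub} --- iterating $G^t\1_S$ and invoking the local Cheeger inequality (Lemma~\ref{lem:local-cheeger-2}) until a small cut appears --- finishes. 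That trade-off, a $\log(1/\delta)$ blow-up in $|S|$ in exchange for a short spectral/random-walk proof, is precisely the paper's contribution here; your Cheeger-integral attempt, if it could be completed, would recover the sharper original bound, but by a route closer to the original BBHKSZ12 argument than to the paper's simplification.
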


While the vertex set $S$ from this result can be found in polynomial time, Barak et al.'s procedure to find $S$ from $v$ is complex.
In particular, it departs from the elegance of Cheeger's inequality that $S$ can be taken to be a level set of $v$.
Our tools give a simple proof of the following theorem, which we believe is novel -- it directly characterizes the small set which can be recovered from $v$ with large $4$-norm in terms of level sets of $v$ and the random walk on $G$.
\begin{theorem}\label{thm:2-to-4-rounding}
If there is $v$ in the span of eigenvectors of $G$ with eigenvalue at least $1/2$ such that $\|v\|_4^4 \geq 1/\delta n$, then there is a level set $S$ of the vector $w$ defined by $w_i = |v_i|$ which has the following property.
For some $t \leq O(\log n)$ there is level set of $G^t 1_S + G^{t+1} 1_S$ of size at most $O(\delta \log(1/\delta) \cdot n)$ having expansion $\Phi_G(S) \leq 1-c$ for a universal $c > 0$.
Here, $1_S$ is the $0/1$ indicator vector for the set $S$.
\end{theorem}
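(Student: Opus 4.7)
Write $w_i := |v_i|$, so $\|w\|_2 = 1$ and $\|w\|_4^4 \geq 1/(\delta n)$. My plan has four steps: (i) dyadically decompose the coordinates of $w$ to extract a super-level set $S$ of $w$ with controlled size and good correlation with $v$; (ii) observe via Cauchy-Schwarz that the signed indicator $u := \mathrm{sign}(v) \odot 1_S$ has a non-negligible fraction of its $\ell_2^2$-mass inside the top eigenspace $V_{\geq 1/2}$; (iii) bound the Rayleigh quotient of the short random walk $f := G^t(I+G)u$ against the normalized Laplacian $I-G$; (iv) apply Cheeger-style rounding to $|f|$, then transfer the conclusion to $G^t 1_S + G^{t+1} 1_S$ by pointwise domination.

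For step (i), partition $w$'s coordinates into buckets $A_j = \{i : w_i \in (2^{-j-1}, 2^{-j}]\}$ for $j = 0, 1, \ldots, O(\log n)$. The identities $\|w\|_2^2 = \Theta(\sum_j 4^{-j}|A_j|)$ and $\|w\|_4^4 = \Theta(\sum_j 16^{-j}|A_j|)$ combined with $\|w\|_4^4 \geq 1/(\delta n)$ yield by pigeonhole a dyadic threshold $\tau$ whose super-level set $S = \{i : w_i \geq \tau\}$ satisfies $|S| = O(\delta n \log(1/\delta))$ and $\iprod{1_S, w} \geq \tau|S|$. For step (ii), since $v = Pv$ (where $P$ denotes the orthogonal projection onto $V_{\geq 1/2}$) and $\iprod{u,v} = \iprod{1_S, w}$, Cauchy-Schwarz gives $\iprod{u,v} \leq \|Pu\|$, hence $\|Pu\|^2 \geq \iprod{u,v}^2 = \Omega(\|u\|^2 / \log(1/\delta))$.

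For step (iii), expanding $u = \sum_j c_j \phi_j$ in the eigenbasis of $G$ with eigenvalues $\lambda_j$ gives
\[
R \;:=\; \frac{\iprod{f,(I-G)f}}{\|f\|_2^2} \;=\; \frac{\sum_j c_j^2 \lambda_j^{2t}(1+\lambda_j)^2(1-\lambda_j)}{\sum_j c_j^2 \lambda_j^{2t}(1+\lambda_j)^2},
\]
a convex combination of $\{1 - \lambda_j\}$. The factor $(1+\lambda_j)^2$ kills any $\lambda_j = -1$ contribution (the standard bipartite obstruction), while $\lambda_j^{2t}$ attenuates mass on eigenvalues with $|\lambda_j|$ bounded away from the spectral maximum. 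Balancing this $4^{-t}$-type attenuation against the polylogarithmic mass deficit between $\|Pu\|^2$ and $\|u\|^2$ from step (ii) with a suitable choice $t \leq O(\log n)$ should deliver $R \leq \tfrac12 - c_0$ for an absolute $c_0 > 0$. For step (iv), Cheeger applied to $g := |f|$ produces a threshold $\lambda$ with $T := \{i : g_i \geq \lambda\}$ having $\Phi_G(T) \leq \sqrt{2R} < 1 - c$ for absolute $c > 0$; since $|u_i| \leq (1_S)_i$ and $G^t(I+G)$ has non-negative entries, $g \leq G^t 1_S + G^{t+1} 1_S$ pointwise, so $T$ is contained in the level set $T' := \{i : (G^t 1_S + G^{t+1} 1_S)_i \geq \lambda\}$. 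Double stochasticity of $G$ gives $\|f\|_1 \leq \|G^t(I+G)1_S\|_1 = 2|S|$ and hence $|T'| \leq 2|S|/\lambda = O(\delta n \log(1/\delta))$ for suitable $\lambda$.

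The main technical obstacle is step (iii): getting $R$ strictly below $1/2$ (so that $\sqrt{2R} < 1$) forces a careful quantitative trade-off between the geometric $4^{-t}$ suppression of non-top-eigenvalue contributions and the logarithmic deficit $\|Pu\|^2/\|u\|^2 = \Theta(1/\log(1/\delta))$; the $(1+\lambda_j)^2$ factor is what makes it possible to handle eigenvalues near $-1$ (essentially converting the argument into one for the lazy walk $(I+G)/2$), and this is precisely why the theorem features $G^t 1_S + G^{t+1} 1_S$ rather than $G^t 1_S$ alone. A secondary subtlety is transferring the expansion bound from $T$ to the containing $T'$: since $T \subseteq T'$ and $G$ has non-negative entries, $\iprod{1_{T'}, G 1_{T'}} \geq \iprod{1_T, G 1_T} \geq c|T|$, so $\Phi_G(T') \leq 1 - c|T|/|T'|$, and one must verify $|T'|/|T| = O(1)$, which reduces to checking that the $\ell_2^2$-mass of $G^t 1_S + G^{t+1} 1_S$ beyond threshold $\lambda$ is concentrated where $|f|$ is also large—a consequence of the same spectral calculation that drives step (iii).
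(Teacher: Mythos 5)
Your step (i) is in the right spirit (the paper extracts a level set via the moment-to-mean-shift duality of Fact~\ref{fact:means-to-moments} rather than a dyadic decomposition, and gets the tighter $|S| \le O(\delta n)$, but the idea is the same). After that, however, your argument diverges from the paper's and runs into two fatal obstacles.

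First, step (iii) cannot succeed as stated: you need $R \le \tfrac12 - c_0$ so that $\sqrt{2R} \le 1-c$, but no choice of $t$ makes the factor $\lambda_j^{2t}(1+\lambda_j)^2$ separate eigenvalues just above $1/2$ from those just below. If $Pu$ is supported on eigenvalues $\lambda_j$ very close to $1/2$ (which the hypothesis $\|Pu\|^2 = \Omega(\|u\|^2/\log(1/\delta))$ does not exclude), then the ``good'' contribution to the denominator has $1-\lambda_j \approx 1/2$ and the ``bad'' mass on $\lambda_j$ just below $1/2$ decays at essentially the same rate $4^{-t}$. In the extreme case where all of $u$ lives at $\lambda_j = 1/2$ exactly, $R = 1/2$ identically in $t$, and $\sqrt{2R} = 1$ gives nothing. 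So the route through the ordinary Cheeger bound $\Phi_G(T) \le \sqrt{2R}$ is a dead end: that bound only gives expansion bounded away from $1$ when $R < 1/2$, and the information you have ($v$ lies in eigenvalues $\ge 1/2$) is compatible with $R = 1/2$. The tool you actually need is the \emph{local} Cheeger inequality (Lemma~\ref{lem:local-cheeger}), whose conclusion $\Phi_G(S) \le \sqrt{1 - \iprod{f,Gf}^2/\|f\|^4}\,/\,(1 - \|f\|_1^2/\delta n\|f\|^2)$ gives a universal gap from $1$ as soon as $\iprod{f,Gf}/\|f\|^2 \ge \e$ for \emph{any} fixed $\e > 0$, provided the $\ell_1$-to-$\ell_2$ ratio of $f$ is that of a $\delta n$-sparse vector. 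The walk's only job in the paper is to control exactly that ratio; it is not being used to drive a Rayleigh quotient below $1/2$.

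Second, even setting the Rayleigh-quotient issue aside, step (iv) does not control the size of the output set. The threshold $\lambda$ in standard Cheeger rounding is produced by the rounding procedure, not chosen by you, and could be arbitrarily small, in which case your Markov bound $|T'| \le 2|S|/\lambda$ is vacuous. The local Cheeger inequality is precisely what repairs this: the condition $\|f\|_1^2 \le \delta n \|f\|^2$ forces the level set it produces to have size at most $\delta n$. Relatedly, your $u = \mathrm{sign}(v)\odot 1_S$ is signed, which breaks the nonnegativity on which both the local Cheeger inequality and the walk's $\ell_1$-preservation ($\|Gf\|_1 = \|f\|_1$ for $f \ge 0$) depend. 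The paper's proof runs the walk from the \emph{unsigned} indicator $1_S$ and never uses a signed vector; the correlation with $v$ is instead carried by the fact that $\|\Pi_{1/2} 1_S\|$ is large (equivalently the conditional mean $\|\tfrac1{|S|}\sum_{i\in S} b_i\|$ is large), which is exactly what the appendix lemma establishes. The remainder of the argument is then the contrapositive of the proof of Lemma~\ref{lem:spectral-ub}: if at every step $t \le O(\log n)$ of the walk the vector $G^t 1_S$ failed to produce a small nonexpanding level set of $G^t 1_S + G^{t+1}1_S$, then $\|G^t 1_S\|^2$ would shrink geometrically while $\|A^\top G^t 1_S\|^2 \ge 2^{-t}\|\Pi_{1/2}1_S\|^2$ stays large, forcing a contradiction at some bounded $t$. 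I'd suggest reworking steps (ii)--(iv) around the local Cheeger inequality applied to $G^t 1_S$ rather than around the normalized-Laplacian Rayleigh quotient of $G^t(I+G)u$.
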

Qualitatively, our theorem says that an analytically-sparse $v$ in the high eigenspaces of $G$ has a level set $S$ such that if the random walk on $G$ is initialized to the uniform distribution on $S$, eventually the random walk ``discovers'' a small cut of imperfect expansion.
Thus, at the cost of a factor $\log(1/\delta)$ in the size of $S$ as compared to the result of Barak et al., we recover some of the elegance of Cheeger's rounding procedure for turning $v$ into a cut.
We describe the proof of Theorem~\ref{thm:2-to-4-rounding} in Appendix~\ref{sec:sparse-round}.

\section{Preliminaries}
\label{sec:prelims}

\subsection{Spectral graph theory}

Let $G = (V,E)$ be an $n$-node graph.
We also denote by $G$ the stochastic $n \times n$ random walk matrix associated to the graph $G$.

\begin{definition}[Isotropic spectral embedding]
  Let $\Pi_{1/2} \in\R^{n \times n}$ be the projector to the span of eigenvectors of $G$ with eigenvalues at least $1/2$.
  Let $A$ be a matrix such that $AA^\top = \Pi_{1/2}$.
  Without loss of generality, take the first column of $A$ to be $\mathbf{1}/\sqrt n$, the (scaled) all-$1$s vector.

  Let $a_1,\ldots,a_n$ be the rows of $A$.
  We say that $(a_1,\ldots,a_n)$ is the spectral embedding of $G$, and if $b_i = \sqrt{n} a_i$ we say that $(b_1,\ldots,b_n)$ is the isotropic spectral embedding of $G$.

\end{definition}
\noindent
We will need the following basic facts; the proofs are elementary and omitted.

\begin{fact}[Mean of a spectral embedding]
  \label{fact:spectral-embed-mean}
  Let $G$ be a graph and let $\Pi_{1/2}$ be the projector to the span of eigenvectors of $G$ of eigenvalue at least $1/2$.
  Let $a_1,\ldots,a_n$ be the rows of the matrix $A$ where $AA^\top = \Pi_{1/2}$; without loss of generality assume the first column of $A$ is the vector $\frac 1 {\sqrt n} \cdot \mathbf{1}$.
  Then $\frac 1 n \sum a_i = (1/\sqrt{n},0,0,\ldots,0)$.
\end{fact}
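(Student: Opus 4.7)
The plan is to reduce the claim to three elementary linear-algebra observations: that $\mathbf{1}/\sqrt{n}$ really does lie in the image of $\Pi_{1/2}$, that the columns of $A$ form an orthonormal family, and that summing the rows of $A$ is dual to testing inner products with the columns of $A$.

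First I would observe that since $G$ is a stochastic random walk matrix, the all-ones vector $\mathbf{1}$ is a right eigenvector with eigenvalue $1 \geq 1/2$ (assuming $G$ is the symmetric random walk, which is implicit in the spectral decomposition being used). Hence $\mathbf{1}/\sqrt{n}$ lies in the image of $\Pi_{1/2}$, which justifies the normalization that takes the first column of $A$ to be this unit vector.

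Next I would argue that the columns of $A$ are orthonormal. Writing $k$ for the rank of $\Pi_{1/2}$ and viewing $A$ as an $n \times k$ matrix, the identity $A A^\top = \Pi_{1/2}$ together with $\Pi_{1/2}^2 = \Pi_{1/2}$ gives $A(A^\top A)A^\top = AA^\top$. Since $A$ has full column rank (its image contains the first column, and more generally the columns span the image of a rank-$k$ projector), one concludes $A^\top A = I_k$, i.e.\ the columns $c_1,\dots,c_k$ of $A$ are orthonormal, with $c_1 = \mathbf{1}/\sqrt{n}$.

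Finally I would compute the target quantity by duality. Treating each $a_i$ as a row vector, $\sum_{i=1}^n a_i = \mathbf{1}^\top A$, whose $j$-th coordinate equals $\mathbf{1}^\top c_j$. For $j=1$ this is $\mathbf{1}^\top (\mathbf{1}/\sqrt n) = \sqrt{n}$; for $j \geq 2$, orthonormality of the columns forces $c_j \perp c_1 = \mathbf{1}/\sqrt n$, so $\mathbf{1}^\top c_j = 0$. Dividing by $n$ yields $\frac{1}{n}\sum_i a_i = (1/\sqrt{n}, 0, \ldots, 0)$, as claimed. There is no real obstacle here; the only point that requires a sentence of care is the derivation $A^\top A = I_k$ from $AA^\top = \Pi_{1/2}$, which is why the author labels the lemma elementary and omits the proof.
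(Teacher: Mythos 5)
Your proof is correct, and since the paper explicitly omits a proof (calling the fact elementary) there is nothing in the source to compare against; the chain ``$\mathbf 1/\sqrt n$ is in the image of $\Pi_{1/2}$, the columns of $A$ are orthonormal, hence $\mathbf 1^\top A = (\sqrt n, 0, \dots, 0)$'' is exactly the natural argument.

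One small point of care: the step deriving $A^\top A = I_k$ from $AA^\top = \Pi_{1/2}$ silently assumes $A$ is $n \times k$ (so that ``rank $k$'' automatically means ``full column rank''); this is consistent with how the paper uses $A$ elsewhere (e.g.\ in the proof of Lemma~\ref{lem:spectral-ub}, where the columns of $A$ are taken to be orthonormal eigenvectors), but the parenthetical justification you give --- that the columns span a rank-$k$ image --- does not by itself rule out redundant columns. If you want an argument that is agnostic to the shape of $A$, write $A = [\,\mathbf 1/\sqrt n \mid B\,]$ and note that $\frac 1n \mathbf 1\mathbf 1^\top + BB^\top = \Pi_{1/2}$ together with $\Pi_{1/2}\mathbf 1 = \mathbf 1$ gives $\|B^\top \mathbf 1\|^2 = \mathbf 1^\top \Pi_{1/2}\mathbf 1 - \frac 1n (\mathbf 1^\top\mathbf 1)^2 = n - n = 0$, so $\mathbf 1^\top B = 0$ directly, without invoking $A^\top A = I$.
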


\begin{fact}
\label{fact:spectral-embed-iso}
$\E_{i \sim [n]} b_i b_i^\top = I$.
\end{fact}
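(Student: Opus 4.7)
The plan is essentially to unfold the definition and use the fact that $\Pi_{1/2}$ is a projector. First, by definition, $\E_{i \sim [n]} b_i b_i^\top = \tfrac{1}{n}\sum_{i=1}^n b_i b_i^\top = \tfrac{1}{n}\sum_{i=1}^n n \cdot a_i a_i^\top = \sum_{i=1}^n a_i a_i^\top$, and this last sum is precisely the Gram matrix $A^\top A$ of the columns of $A$. So the fact reduces to showing that $A^\top A$ equals the identity on the space in which the $b_i$ live.

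Next I would identify the natural dimension of the embedding: if $r = \mathrm{rank}(\Pi_{1/2})$ is the dimension of the span of eigenvectors of eigenvalue at least $1/2$, then $A$ is an $n \times r$ matrix. One always has the freedom, in the factorization $AA^\top = \Pi_{1/2}$, to choose $A$'s columns to be an orthonormal basis for $\mathrm{Im}(\Pi_{1/2})$; such a choice is consistent with the paper's normalization that the first column of $A$ equals $\mathbf{1}/\sqrt n$, since $\mathbf{1}/\sqrt n$ is a unit eigenvector of $G$ with eigenvalue $1$ and hence lies in $\mathrm{Im}(\Pi_{1/2})$. With this choice, $A^\top A = I_r$, which is exactly the $I$ appearing in the statement of the fact.

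There is no real obstacle here; the only thing to keep straight is the distinction between $AA^\top = \Pi_{1/2}$ (a rank-$r$ projector on $\R^n$) and $A^\top A = I_r$ (the identity on the embedding space $\R^r$). Once one takes $A$'s columns to be orthonormal, both identities are immediate from the SVD of $A$ (its singular values are all $1$). This is all that is needed, so the proof is a two-line unfolding of the definitions plus the orthonormal-basis choice.
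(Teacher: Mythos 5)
Your argument is correct, and since the paper explicitly omits the proof of this fact as ``elementary,'' there is no in-text proof to compare against — but this is exactly what the authors must have had in mind. The one small point worth sharpening is that you do not actually need to \emph{choose} the columns of $A$ to be orthonormal: once one fixes $A$ to be $n\times r$ with $r=\mathrm{rank}(\Pi_{1/2})$ (so that $A$ has full column rank), the equation $AA^\top=\Pi_{1/2}$ already forces every singular value of $A$ to be $1$, and hence $A^\top A = I_r$ automatically. Your closing SVD remark essentially says this, so the argument stands; just note that the orthonormality is a consequence, not an additional normalization you impose.
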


For any $S \subseteq V$, we denote by $\1_S \in \{0,1\}^n$ the $0/1$ indicator vector of $S$.
For $v,w \in \R^n$ we often employ the usual Euclidean inner product $\iprod{v,w} = \sum_{i \leq n} v_i w_i$.

If $S \subseteq V$ is a subset of vertices in $G$, its expansion is the probability that a random walk initialized inside $S$ leaves $S$ in one step: $\Phi_G(S) = 1 - \frac 1 {|S|} \cdot \iprod{\1_S, G \1_S}$.
We define the \emph{expansion profile} of a graph $S$: for every $\delta > 0$, let $\Phi_G(\delta) = \inf_{|S| = \delta n} \Phi_G(S)$.
We also let $\Phi_G^\leq (\delta) = \inf_{|S| \leq \delta n} \Phi_G (S)$ be a slightly modified version of expansion profile which takes into account all sets of size at most $\delta n$, rather than exactly $\delta n$.

A consequence of Lemma~\ref{lem:local-cheeger} is a local Cheeger inequality concerning the quadratic form $\iprod{f,G^2 f} = \|Gf\|^2$ rather than $\iprod{f, Gf}$.
The proof is standard -- see the appendix.\footnote{Theorem 2.1 in \cite{d-to-1-steurer} is identical to Lemma~\ref{lem:local-cheeger-2} but is stated with the conclusion $\Phi_G(S) \leq 1 - \Omega(\e^2)$ rather than $\Phi_G(S) \leq 1- \Omega(\e^4)$; however the only proof we are aware of appears to require the extra factor of $1/\e^2$. Generally $\e$ is taken to be a tiny constant, so the difference is just one of constant factors.}

\begin{lemma} \label{lem:local-cheeger-2}
  Let $G$ be an $n$-node regular graph, and let $\eps, \delta, \gamma$ be so that $0 < \delta \leq \gamma$ and $\eps > 0$.
  Let $f \in \R^n$ have nonnegative coordinates, and suppose that $\|G f\|^2 \geq \e \|f\|^2$ and $\|f\|^2 \geq \frac{\gamma \|f\|_1^2}{\delta n}$.
  There is a level set $S$ of the function $g = f + Gf$ with size at most $\delta n$ and expansion $\Phi_G(S) \leq 1- \Omega(\gamma \e^4)$.
\end{lemma}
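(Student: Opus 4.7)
The plan is to derive Lemma~\ref{lem:local-cheeger-2} from the primary local Cheeger inequality (Lemma~\ref{lem:local-cheeger}) by applying it to the auxiliary function $g := f + Gf$. This particular $g$ is the natural candidate because its Rayleigh quotient with $G$ automatically exposes the quantity $\|Gf\|^2$ that the hypothesis lower-bounds, while preserving enough nonnegativity and $\ell_1$-vs-$\ell_2$ structure to invoke the primary lemma.

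First, I would verify the three hypotheses of Lemma~\ref{lem:local-cheeger} on $g$. \emph{(i) Nonnegativity:} since $f \geq 0$ coordinate-wise and $G$ has nonnegative entries, $g \geq 0$. \emph{(ii) Density:} because $G$ is the random-walk matrix of a regular (hence doubly stochastic) graph, $\|Gf\|_1 = \|f\|_1$ whenever $f \geq 0$, so $\|g\|_1 = 2\|f\|_1$; combined with the trivial $\|g\|^2 \geq \|f\|^2$, the hypothesis on $f$ transfers to $\|g\|^2 \geq (\gamma/4)\|g\|_1^2/(\delta n)$. \emph{(iii) Rayleigh quotient:} expanding $g = (I+G)f$ gives
\[
  \iprod{g, Gg} \;=\; \iprod{f, Gf} + 2\|Gf\|^2 + \iprod{f, G^3 f},
\]
and all three summands are nonnegative (because $f, Gf \geq 0$ and $G$ has nonnegative entries), so $\iprod{g, Gg} \geq 2\|Gf\|^2 \geq 2\e\|f\|^2$. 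Similarly $\|g\|^2 = \|f\|^2 + 2\iprod{f, Gf} + \|Gf\|^2 \leq 4\|f\|^2$, using $\|G\|_{\mathrm{op}} \leq 1$. Combining these gives $\iprod{g, Gg}/\|g\|^2 \geq \e/2$.

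Applying Lemma~\ref{lem:local-cheeger} to $g$ with density parameter $\gamma/4$ and Rayleigh quotient lower bound $\e/2$ then yields a level set $S$ of $g$ with $|S| \leq \delta n$ and $\Phi_G(S) \leq 1 - \Omega(\gamma \e^c)$, where $c$ is the $\e$-exponent supplied by the primary lemma. The claimed $\Omega(\gamma \e^4)$ matches the best available proof of Lemma~\ref{lem:local-cheeger}; as the footnote indicates, any future improvement of the primary lemma to $c = 2$ (the Steurer figure) would automatically sharpen the conclusion here.

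The only real design decision, and the one place where something could plausibly go wrong, is the choice of $g$ itself. The candidate must simultaneously (a) be nonnegative, (b) have an $\ell_1$-vs-$\ell_2$ ratio comparable to that of $f$, and (c) have Rayleigh quotient tied to $\|Gf\|^2$ rather than $\iprod{f, Gf}$. The symmetric combination $g = f + Gf$ satisfies all three because $G$ is doubly stochastic (giving $\|Gf\|_1 = \|f\|_1$, for (b)) and entry-wise nonnegative (for (a), and for keeping the cross-term $\iprod{f, G^3 f}$ nonnegative in (c)); simpler candidates such as $g = Gf$ or $g = G^2 f$ either control the Rayleigh quotient in a lossier way or sacrifice the density condition. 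Once $g$ is fixed, the rest of the argument is routine arithmetic with $\|G\|_{\mathrm{op}} \leq 1$ and the definition of the random-walk matrix.
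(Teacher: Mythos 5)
Your set-up matches the paper's: both you and the paper normalize (or keep track of norms of) $g = f + Gf$, check $g\geq 0$, show $\|g\|_1 = 2\|f\|_1$, bound $\|f\|^2 \leq \|g\|^2 \leq 4\|f\|^2$, and expand $\iprod{g,Gg} = \iprod{f,Gf} + 2\|Gf\|^2 + \iprod{f,G^3 f} \geq 2\e\|f\|^2$. All of that is correct. The problem is the final step, where you apply Lemma~\ref{lem:local-cheeger} to $g$ ``with density parameter $\gamma/4$'' and claim this yields a level set $S$ of size at most $\delta n$ with $\Phi_G(S) \leq 1 - \Omega(\gamma \e^c)$. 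That application does not go through as written. Lemma~\ref{lem:local-cheeger}, used with size parameter $\delta$, produces a denominator $1 - \|g\|_1^2/(\delta n\|g\|^2)$. Your estimates give only $\|g\|_1^2/(\delta n\|g\|^2) \leq 4/\gamma$, so this denominator is bounded below by $1 - 4/\gamma$, which is \emph{negative} whenever $\gamma < 4$ --- which is the regime of interest here (the paper later invokes this lemma with, for instance, $\gamma = \e^{0.1}$). So Lemma~\ref{lem:local-cheeger} gives no information at the scale $\delta n$. Moreover, even setting that aside, Lemma~\ref{lem:local-cheeger} has no mechanism for producing a $\gamma$ factor multiplying $\e^c$ in the expansion bound: $\gamma$ only enters through the denominator term, never multiplicatively with the Rayleigh quotient.

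The missing idea is the two-scale argument: apply Lemma~\ref{lem:local-cheeger} with a much larger size parameter $\delta' = \Theta(\delta/(\gamma\e^2))$, so that $\|g\|_1^2/(\delta' n\|g\|^2) = O(\e^2)$ and the denominator is $1 - O(\e^2) \geq 1/2$. With the numerator $\sqrt{1 - \iprod{g,Gg}^2/\|g\|^4} \leq 1 - \Omega(\e^2)$, this yields a level set $S$ of size $\leq \delta' n$ (not $\delta n$!) with $\Phi_G(S) \leq 1 - \Omega(\e^2)$. The factor $\gamma\e^2$ must then be paid by shrinking $S$: sample a uniformly random $T \subseteq S$ of size exactly $\delta n$, and observe that $\E_T[1-\Phi_G(T)] \geq (\delta n/|S|)(1-\Phi_G(S)) \geq \Omega(\gamma\e^2)(1-\Phi_G(S)) = \Omega(\gamma\e^4)$. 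This subsampling step is exactly what turns $1-\Omega(\e^2)$ at scale $\delta'$ into $1-\Omega(\gamma\e^4)$ at scale $\delta$, and it is why the claimed bound has $\e^4$ rather than $\e^2$; without it the proof does not close.
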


We will also require the following slight modification to Lemma~\ref{lem:local-cheeger-2}, which states that in the special case of $f$ being an indicator function for a subset, then we may additionally assume that the level set with poor expansion is additionally not too small.
We are not aware of a black-box proof of Lemma~\ref{lem:local-cheeger-2.5} from Lemma~\ref{lem:local-cheeger}, but our proof is a modification of the proof of Lemma~\ref{lem:local-cheeger} found in \cite{steurer-thesis}.
For completeness we prove this lemma in the appendix.
\begin{lemma} \label{lem:local-cheeger-2.5}
  There exist universal constants $0 < c < C$ such that the following holds.
  For every $G$ an $n$-node regular graph and every small enough $\eps, \delta, \eta$,
  if $S \subset [n]$ has $|S| = \delta n$ and $f = \1_S$ has $\|G f\|^2 \geq \e \|f\|^2$,
  then there is a level set $T$ of the function $g = (1 - \eta) f + \eta Gf$ with size $|T| \in \Brac{c \eta \eps^2 \delta n, C \frac{\delta n}{\eta^2 \eps^2}}$
  and expansion $\Phi_G(T) \leq 1- \Omega(\eta^2 \e^2)$.
  Moreover, if there is $R \subseteq [n]$ with $\Phi_G(R) \leq \e/100$ and $|R| = \delta n$, and if $S \cap R = \emptyset$, then also $T$ exists satisfying the previous properties and having $T \cap R = \emptyset$.
\end{lemma}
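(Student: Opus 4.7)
The plan is to adapt Steurer's proof of Lemma~\ref{lem:local-cheeger-2} (see \cite{steurer-thesis}) to the weighted function $g = (1-\eta)f + \eta Gf$, tracking the additional size bounds and the disjointness from $R$. First I would record basic properties of $g$: it is nonnegative with $\|g\|_\infty \leq 1$ (taking values in $[1-\eta, 1]$ on $S$ and $[0, \eta]$ off $S$), $\|g\|_1 = \delta n$ by double stochasticity of $G$, and
\[
  \iprod{g, Gg} \;\geq\; 2\eta(1-\eta)\|Gf\|^2 \;\geq\; \eta\epsilon\delta n
\]
after expanding and dropping the nonnegative $(1-\eta)^2 \iprod{f, Gf}$ and $\eta^2 \iprod{f, G^3 f}$ terms.

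Next, I would apply Cheeger-style threshold rounding. Let $T_\tau = \{i : g_i > \tau\}$. The layer-cake identity $\int_0^\infty \iprod{\mathbf{1}_{T_\tau}, G\mathbf{1}_{T_\tau}}\,d\tau = \sum_{ij}G_{ij}\min(g_i, g_j)$, combined with the pointwise inequality $ab \leq \min(a,b)$ on $[0,1]$, yields $\int_0^\infty \iprod{\mathbf{1}_{T_\tau}, G\mathbf{1}_{T_\tau}}\,d\tau \geq \iprod{g, Gg} \geq \eta\epsilon\delta n$, together with $\int_0^\infty |T_\tau|\,d\tau = \|g\|_1 = \delta n$. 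Restricting to $\tau \in [\eta^2\epsilon^2/C, \eta]$ enforces the size bounds: Markov gives $|T_\tau| \leq \|g\|_1/\tau \leq C\delta n/(\eta^2\epsilon^2)$, while $T_\tau \supseteq S$ for $\tau < \eta$ ensures $|T_\tau| \geq \delta n \geq c\eta\epsilon^2\delta n$. A careful accounting of the ``lost mass'' at small thresholds (bounded via $\iprod{\mathbf{1}_{T_\tau}, G\mathbf{1}_{T_\tau}} \leq |T_\tau|$ together with the total $\int_0^\infty |T_\tau|\,d\tau = \delta n$) then produces a $\tau^*$ in the restricted range with $\Phi_G(T_{\tau^*}) \leq 1 - \Omega(\eta^2\epsilon^2)$; the weaker expansion rate (versus the naive $\eta\epsilon$ obtained without any size restriction) is the price paid for the threshold restriction.

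For the moreover statement, the hypothesis $\Phi_G(R) \leq \epsilon/100$ together with $R \cap S = \emptyset$ implies $\sum_{i \in R}(Gf)_i = \sum_{i \in R, j \in S}G_{ij} \leq (\epsilon/100)\delta n$, since edges from $R$ into $S \subseteq \bar R$ are bounded by the total mass leaving $R$. Hence $|T_\tau \cap R| \leq (\epsilon\delta n/100)(\eta/\tau)$, at most a small constant fraction of $|T_\tau| \geq \delta n$ whenever $\tau \gtrsim \eta\epsilon$. I would then pick $\tau^*$ in the intersection of the size-controlled range and the range $\tau \geq \eta\epsilon$, and set $T = T_{\tau^*}\setminus R$; removing $O(|T_{\tau^*}\cap R|)$ vertices changes $|T|$ and the internal edge weight by only lower-order terms, so all of the stated size and expansion bounds are preserved, with $T \cap R = \emptyset$ by construction.

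The main technical obstacle is the lost-mass accounting in the second paragraph, which requires carefully balancing the $L^1$-Markov bound on $|T_\tau|$ against the total $\iprod{g, Gg}$. This tension is what determines the precise polynomial exponent in $\eta\epsilon$ in the final bounds, and is the step most sensitive to the particular weighting in $g = (1-\eta)f + \eta Gf$; the rest of the argument is essentially a rote modification of the Cheeger-style integration in Steurer's proof of Lemma~\ref{lem:local-cheeger-2}.
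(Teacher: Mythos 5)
Your plan departs from the paper's proof in two places, and in both the departure opens a genuine gap.

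\paragraph{The lost-mass accounting does not close.}
You threshold $g$ linearly (uniform $\tau$, sets $T_\tau = \{i : g_i > \tau\}$), and use the pointwise bound $\min(g_i, g_j) \geq g_i g_j$ to get $\int_0^\infty \iprod{\mathbf 1_{T_\tau}, G\mathbf 1_{T_\tau}}\,d\tau \geq \iprod{g,Gg} \gtrsim \eta\eps\delta n$. That part is fine. The trouble is at small $\tau$. You propose bounding $\int_0^{a}\iprod{\mathbf 1_{T_\tau}, G\mathbf 1_{T_\tau}}\,d\tau \leq \int_0^a |T_\tau|\,d\tau$ with $a = \eta^2\eps^2/C$, but $\int_0^a |T_\tau|\,d\tau = \sum_i \min(g_i, a) \geq \sum_{i\notin S}\min(\eta (Gf)_i, a)$, and when the exit mass $Gf$ is spread thinly over $S^c$ this is of order $\eta\delta n$ --- larger than the entire budget $\eta\eps\delta n$ by a factor $1/\eps$. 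Using $|T_\tau|\leq n$ instead gives $\int_0^a |T_\tau|\,d\tau \leq a n = \eta^2\eps^2 n/C$, which beats $\eta\eps\delta n$ only when $\eta\eps \lesssim C\delta$; but in the application (to \prettyref{thm:resilience-hardness}) one takes $\eta = \eps^{\Theta(1)}$ and $\delta$ far smaller than any power of $\eps$, so this ordering fails. The paper avoids this by thresholding $g^2$ with uniform $t$: this downweights small thresholds, and the crucial extra ingredient (inherited from Steurer) is the second-moment bound $\E_t[|T_t|^2] = \sum_{ij}\min(g_i^2,g_j^2) \leq \sum_{ij} g_i g_j = \|g\|_1^2 = (\delta n)^2$, which gives $\E_t[|T_t|\mathbf 1_{|T_t|>M}] \leq (\delta n)^2/M$. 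No analogous second-moment bound is available for your linear thresholding: $\E_\tau[|T_\tau|^2] = \sum_{ij}\min(g_i,g_j)$ can be as large as $n\|g\|_1 = \delta n^2$, which is useless. This is not a bookkeeping issue you can fix by rebalancing constants --- the quadratic thresholding is what makes the size truncation work.

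\paragraph{Post-hoc removal of $R$ breaks the expansion.}
For the ``moreover'' part you set $T = T_{\tau^*}\setminus R$ and argue that removing $O(|T_{\tau^*}\cap R|)$ vertices changes things only by lower-order terms. But the expansion guarantee is only $\Phi_G(T) \leq 1 - \Omega(\eta^2\eps^2)$, i.e. the internal edge weight is $\Omega(\eta^2\eps^2)|T|$, which is a tiny fraction of $|T|$. Your bound gives $|T_{\tau^*}\cap R|/|T_{\tau^*}| \lesssim \eta\eps/\tau^*$, which for $\tau^*$ near the top of your range is a constant like $1/100$ --- and removing $\Theta(1/100)$ of the vertices can remove $\Theta(|T|/100)$ edges, swamping the $\Omega(\eta^2\eps^2)|T|$ internal edges when $\eta\eps$ is small. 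The paper instead replaces $g$ with $g' = \Pi_{\bar R} g$ \emph{before} thresholding, so that $T \cap R = \emptyset$ automatically, and then re-verifies each of the moment and quadratic-form estimates for $g'$ using $\|\Pi_R g\|_1 \leq \eta\eps\delta n/100$ from the non-expansion of $R$. That change has to happen at the level of the test vector, not at the level of the output set.
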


\subsection{Small-Set Expansion Hypotheses}

Our reductions in this paper are from small-set expansion problems, which are conjectured to be computationally difficult to solve.
At a high level, these assumptions say that it is hard to verify whether or not there exists a small set in a graph which does not expand well into the rest of the graph.
There are two canonical versions of this Small-Set Expansion Hypothesis (SSEH) which the literature appears to consider interchangeable.
However, for us it will be important to distinguish between the two.
The first (and original) version of SSEH concerns $\Phi_G(\delta)$:
\begin{hypothesis}[$=$-Small-Set Expansion Hypothesis ($\SSEHeq$) \cite{DBLP:conf/stoc/RaghavendraS10}]
For every constant $\epsilon > 0$ there is a small-enough $\delta > 0$ such that the following problem is $\NP$-hard.
Given a graph $G$, distinguish between $\Phi_G(\delta) \geq 1-\epsilon$ and $\Phi_G(\delta) \leq \epsilon$.
\end{hypothesis}
\noindent
In particular, this statement is only about sets of size exactly $\delta n$.
The second version of SSEH is essentially identical, except using $\Phi_G^\leq(\delta)$ instead of $\Phi_G(\delta)$.
\begin{hypothesis}[$\leq$-Small-Set Expansion Hypothesis ($\SSEHleq$)]
For every constant $\epsilon > 0$ there is a small-enough $\delta > 0$ such that the following problem is $\NP$-hard.
Given a graph $G$, distinguish between $\Phi_G^\leq(\delta) \geq 1-\epsilon$ and $\Phi_G^\leq(\delta) \leq \epsilon$.
\end{hypothesis}

We are not aware of any equivalences or implications between these two (apparently very similar) problems. 
However, both versions of the problem have been widely used and called the ``Small-Set Expansion Hypothesis'' in the literature, see e.g.~\cite{DBLP:conf/stoc/BarakBHKSZ12}.

We remark that while these two problems are very similar, there do appear to be some subtle qualitative differences between them.
In particular, in the context of this paper, $\SSEHeq$ (and variants thereof) implies hardness for problems related to resilience, whereas $\SSEHleq$ implies hardness for problems related to bounded moments.
At a high level, this is because bounded moments is equivalent to resilience at every scale (see Corollary~\ref{cor:resilience-to-moments}), and thus to control moments, we need to know what occurs at all sets of size at most $\delta$, not just in a neighborhood around $\delta$.


\section*{Acknowledgements}
We thank Prasad Raghavendra for numerous helpful discussion regarding this paper, and in particular for the suggestion to use random walks to prove the upper bound in Section~\ref{sec:spectral-bounds}.
We also thank David Steurer for answering emergency questions about small-set expansion.
We thank anonymous reviewers for extensive feedback on an earlier version of this manuscript, and Ilias Diakonikolas for numerous suggestions which substantially improved the manuscript.

\bibliographystyle{alpha}

\bibliography{bib/mathreview,bib/dblp,bib/custom,bib/scholar}

\newpage


\section{Conditional Means of Small Sets in the Spectral Embedding}
\label{sec:spectral-bounds}

In this section we prove the following two key lemmas, which characterize the spectral embeddings of small sets of vertices in small-set expanders.
They suggest the following perspective on embeddings of small-set expanders, which is at the heart of all our arguments: if $G$ is a $(\delta,\epsilon)$-small-set expander, small sets of vectors in its spectral embedding cannot have average too far from the origin, while a small nonexpanding set in $G$ embeds to a set of vectors whose average is far from the origin.

Slightly more formally, for every $S \subseteq [n]$ with $|S| \leq \delta n$, if $\Phi_G(S) \geq 1 - \e$, then 
\[
\Norm{\tfrac 1 {|S|} \sum_{i \in S} b_i } \approx \e^{\Omega(1)} / \sqrt{\delta}\mper
\]
(At least, if $S$ is not too small.)
On the other hand, if $\Phi_G(S) \leq \e$, then
\[
\Norm{\tfrac 1 {|S|} \sum_{i \in S} b_i } \approx 1/\sqrt{\delta} \gg \e^{\Omega(1)}/\sqrt{\delta}\mper
\]

Now we make this formal.
The first lemma shows that a small non-expanding set in a graph $G$ has a spectral embedding far from the origin.
It has been observed several times before (see e.g. \cite{DBLP:conf/stoc/BarakBHKSZ12}).
We include the proof in the Appendix for completeness.

\begin{lemma}
\label{lem:spectral-lb}
  Suppose $G$ is an $n$-node graph.
  Let $b_1,\ldots,b_n$ be the isotropic spectral embedding of $G$.
    Then, every $T \subseteq [n]$ satisfies
    \[
    \Norm{ \frac 1 {|T|} \sum_{i \in T } b_i }^2 \geq \frac n {|T|} \cdot \Paren{\frac 12 - \Phi_G(T)}\mper
    \]
\end{lemma}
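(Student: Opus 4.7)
The plan is to rewrite the left-hand side purely in terms of the projector $\Pi_{1/2}$ onto the span of eigenvectors of $G$ with eigenvalue at least $1/2$, and then to deduce the desired lower bound by a direct spectral comparison against the quadratic form $\iprod{\mathbf{1}_T, G \mathbf{1}_T}$, which is exactly $|T|(1 - \Phi_G(T))$.

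First, I would unpack the definition of the isotropic spectral embedding. Writing $b_i = \sqrt n \, a_i$ where $a_i$ is the $i$-th row of $A$ and $A A^\top = \Pi_{1/2}$, one has $\sum_{i \in T} b_i = \sqrt n \, A^\top \mathbf{1}_T$, and therefore
\[
\Norm{\tfrac 1 {|T|}\sum_{i \in T} b_i}^2 \;=\; \frac{n}{|T|^2} \, \mathbf{1}_T^\top A A^\top \mathbf{1}_T \;=\; \frac{n}{|T|^2} \, \bignorm{\Pi_{1/2} \mathbf{1}_T}^2\mper
\]
So it suffices to prove the scalar inequality $\norm{\Pi_{1/2} \mathbf{1}_T}^2 \geq |T|(\tfrac 12 - \Phi_G(T))$.

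For the spectral step, I would diagonalize $G = \sum_\lambda \lambda \, v_\lambda v_\lambda^\top$ in an orthonormal eigenbasis and split the sum according to whether $\lambda \geq 1/2$ or $\lambda < 1/2$. On the high part I use $\lambda \leq 1$ and on the low part I use $\lambda < 1/2$ (both $\lambda$ and the squared coefficients $|\iprod{v_\lambda, \mathbf{1}_T}|^2$ are bounded appropriately). This yields
\[
|T|\bigl(1 - \Phi_G(T)\bigr) \;=\; \iprod{\mathbf{1}_T, G \mathbf{1}_T} \;\leq\; \bignorm{\Pi_{1/2}\mathbf{1}_T}^2 + \tfrac 12 \bigl(|T| - \bignorm{\Pi_{1/2}\mathbf{1}_T}^2\bigr)\mper
\]
Rearranging gives $\norm{\Pi_{1/2}\mathbf{1}_T}^2 \geq |T|(1 - 2\Phi_G(T))$, which is at least $|T|(\tfrac 12 - \Phi_G(T))$ (in fact twice that when $\Phi_G(T) \leq 1/2$; the inequality is trivial otherwise). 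Combining with the first display gives the lemma.

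There is no real obstacle here; the only place one has to be slightly careful is in handling possibly-negative eigenvalues of $G$ on the low side of the split, but since every coefficient $|\iprod{v_\lambda,\mathbf{1}_T}|^2$ is nonnegative and every $\lambda < 1/2$ (even negative ones) satisfies $\lambda < 1/2$, the one-sided bound $\lambda |\iprod{v_\lambda, \mathbf{1}_T}|^2 \leq \tfrac 12 |\iprod{v_\lambda,\mathbf{1}_T}|^2$ is valid term-by-term. This completes the proof.
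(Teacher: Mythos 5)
Your proof is correct and follows essentially the same route as the paper: both reduce the left-hand side to $\frac{n}{|T|^2}\norm{\Pi_{1/2}\mathbf{1}_T}^2$ and then lower bound $\norm{\Pi_{1/2}\mathbf{1}_T}^2$ against $\iprod{\mathbf{1}_T, G\mathbf{1}_T}$ by diagonalizing $G$ and splitting at eigenvalue $1/2$. Your particular spectral bookkeeping yields the slightly stronger intermediate bound $\norm{\Pi_{1/2}\mathbf{1}_T}^2 \geq |T|\bigl(1 - 2\Phi_G(T)\bigr)$ (the paper's one-shot inequality $v^\top \Pi_{1/2}v \geq v^\top G v - \tfrac12\norm{v}^2$ gives $|T|(\tfrac12 - \Phi_G(T))$ directly), so you needed the extra observation at the end to recover the stated form — but the argument is sound.
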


The second lemma shows that if $G$ is a small-set expander then every small set of vectors in its spectral embedding has mean near the origin.
By correctly setting parameters, something qualitatively similar would follow as a corollary of Theorem 2.4 in \cite{DBLP:conf/stoc/BarakBHKSZ12}, but our proof is much simpler than that route.
We show that for such a set $T$, if $\Norm{\frac 1 {|T|} \sum_{i \in T} b_i}$ were too large, then eventually the random walk on $G$, initialized to the uniform distribution on $T$, would find a small set with small expansion.

\begin{lemma}
\label{lem:spectral-ub}
  Let $G$ be an $n$-node graph.
  Suppose $\e,\delta$ are such that $\Phi_G^{\leq}(\delta) \geq 1-\e$, and $\e < \e_0$ for some universal constant $\e_0 > 0$.
  Let $b_1,\ldots,b_n$ be the isotropic spectral embedding of $G$.
  For every $T \subseteq [n]$ with $|T| \leq \delta n$,
  \[
  \Norm{\frac 1 {|T|} \sum_{i \in T} b_i } \leq C' \exp \Paren{ C \cdot \frac{\log(\delta n/|T|)}{\log(1/\e)}} \cdot \frac {\e^{1/10}} {\sqrt{\delta}}\mcom
  \]
  where $C', C > 0$ are universal constants.
\end{lemma}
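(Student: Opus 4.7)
The plan is to argue by contrapositive using a random-walk argument: assuming $\Norm{\frac{1}{|T|}\sum_{i \in T} b_i}$ exceeds the claimed bound, I would show that iterating $G$ on $\mathbf{1}_T$ must expose a set of size at most $\delta n$ with expansion strictly below $1-\e$, contradicting $\Phi_G^{\leq}(\delta) \geq 1-\e$. First, unpacking the definition of the isotropic embedding (using $AA^\top = \Pi_{1/2}$ and $b_i = \sqrt{n}\, a_i$),
\[
\Norm{\tfrac{1}{|T|}\sum_{i \in T} b_i}^2 \;=\; \frac{n}{|T|^2}\,\Snorm{\Pi_{1/2} \mathbf{1}_T}\mper
\]
Setting $\gamma := \Snorm{\Pi_{1/2}\mathbf{1}_T}/|T|$ and $\delta_T := |T|/n$, the target becomes an upper bound on $\gamma/\delta_T$.

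Next I would track the walk iterates $f_t := G^t \mathbf{1}_T$. Because $G$ is doubly stochastic (regular graph) and $\mathbf{1}_T \geq 0$, each $f_t$ is nonnegative with $\Normo{f_t} = |T|$, and $\Snorm{f_t}$ is nonincreasing in $t$. Since $G$ acts by a factor of at least $1/2$ on the range of $\Pi_{1/2}$, spectral decomposition gives
\[
\Snorm{f_t} \;\geq\; (1/4)^{t} \cdot \Snorm{\Pi_{1/2}\mathbf{1}_T} \;=\; (1/4)^{t} \cdot \gamma \cdot |T|\mper
\]
Fix a threshold $\e^{\ast} := \e^{1/5} < 1/4$. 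If $\Snorm{Gf_s} < \e^{\ast}\Snorm{f_s}$ held for every $s < T^{\ast}$, telescoping would give $\Snorm{f_{T^{\ast}}} < (\e^{\ast})^{T^{\ast}}|T|$, contradicting the projection lower bound once $T^{\ast} > \log(1/\gamma)/\log(1/(4\e^{\ast}))$. Hence some $t \leq T^{\ast} := \lceil \log(1/\gamma)/\log(1/(4\e^{\ast}))\rceil$ satisfies $\Snorm{Gf_t} \geq \e^{\ast} \Snorm{f_t}$.

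At such a $t$, I would invoke \prettyref{lem:local-cheeger-2} on $f_t$ with parameters $\delta_{\mathrm{loc}} := \delta$, $\gamma_{\mathrm{loc}} := \gamma \cdot 4^{-t} \cdot \delta/\delta_T$, and $\e^{\ast}$ (the first two chosen so that the hypothesis $\Snorm{f_t} \geq \gamma_{\mathrm{loc}}\Normo{f_t}^2/(\delta_{\mathrm{loc}} n)$ is exactly tight using the projection lower bound). This produces a level set $S$ of $f_t + Gf_t$ with $|S| \leq \delta n$ and $\Phi_G(S) \leq 1 - c\, \gamma_{\mathrm{loc}} (\e^{\ast})^4$. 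Since $\Phi_G^{\leq}(\delta) \geq 1-\e$, we must have $\gamma_{\mathrm{loc}}(\e^{\ast})^4 \leq \e/c$. Using $4^{T^{\ast}} = (1/\gamma)^{\log 4/\log(1/(4\e^{\ast}))}$ and rearranging yields $\gamma^{1+\kappa} \leq O(\e^{1/5}\,\delta_T/\delta)$ with $\kappa = O(1/\log(1/\e))$; solving for $\gamma$, dividing by $\delta_T$, and noting that $\e^{-\kappa/5}$ and similar factors collapse to universal constants gives
\[
\gamma/\delta_T \;\leq\; O(1)\cdot \frac{\e^{1/5}}{\delta}\cdot \Paren{\delta/\delta_T}^{O(\kappa)}\mcom
\]
which after taking square roots matches the stated bound once $(\delta/\delta_T)^{O(\kappa)}$ is rewritten as $\exp(O(\log(\delta n/|T|)/\log(1/\e)))$.

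The main obstacle is the three-way parameter balance. We need $\e^{\ast}$ strictly below $1/4$ so the pigeonhole step terminates after finitely many walk steps with a usable bound $T^{\ast}$; we need $(\e^{\ast})^4$ to comfortably exceed $\e$ so that \prettyref{lem:local-cheeger-2}'s expansion deficit $c\,\gamma_{\mathrm{loc}}(\e^{\ast})^4$ actually beats $\e$; and we need \prettyref{lem:local-cheeger-2}'s constraint $\delta_{\mathrm{loc}} \leq \gamma_{\mathrm{loc}}$, which translates to $\gamma \geq 4^t\delta_T$ throughout $t \leq T^{\ast}$. The choice $\e^{\ast} = \e^{1/5}$ simultaneously makes the first two hold with constant slack and produces the exponent $1/10$ in the final bound; the third is automatic in the regime where the target inequality is nontrivial (i.e., when the contrapositive $\gamma \gg \e^{1/5}\delta_T/\delta$ holds). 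Verifying that the sub-polynomial blowup $(1/\gamma)^{O(1/\log(1/\e))}$ is exactly absorbed into the stated $\exp(C\log(\delta n/|T|)/\log(1/\e))$ factor, and that the universal $\e_0$ in the hypothesis suffices to keep all these small-$\e$ approximations valid, is the delicate (but ultimately arithmetic) part of the argument.
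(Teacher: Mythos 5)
Your proposal is correct and follows essentially the same strategy as the paper's proof: iterate the random walk $f_t = G^t\mathbf 1_T$, use that the projection onto the high eigenspace decays only geometrically to force some $t = O(\log(\delta n/|T|)/\log(1/\e))$ at which the local Cheeger inequality (Lemma~\ref{lem:local-cheeger-2}) becomes applicable, and then read off the conclusion. The only organizational difference is that the paper fixes the Cheeger parameter $\gamma$ once ($\gamma = \e^{0.1}$) and phrases the walk as a dichotomy, whereas you tune $\gamma_{\mathrm{loc}}$ to the lower bound on $\|f_t\|^2$ at the chosen step; the arithmetic lands on the same $\e^{1/10}$ and the same $\exp(O(\log(\delta n/|T|)/\log(1/\e)))$ correction factor.
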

\begin{proof}
  Up to scaling, $\Norm{\frac 1 {|T|} \sum_{i \in T} b_i }$ is the magnitude of the projection of the uniform probability distribution $\1_T/ |T|$ on $T$ into the span of eigenvectors of $G$ with eigenvalue at least $1/2$.
  We first argue that this magnitude is not affected by too much if we replace $\1_T / |T|$ with $G^t \cdot \1_T / |T|$, which is the probability distribution which results from running the random walk in $G$ for $t$ steps.

  To see this, first express $\frac 1 {|T|} \sum_{i \in T} b_i$ in terms of the indicator vector $\1_T \in \{0,1\}^n$:
  \[
  \Norm{\frac 1 {|T|} \sum_{i \in T} b_i}^2 = \frac n {|T|^2} \cdot \Norm{A^\top \1_T}^2 = \frac n {|T|^2} \sum_{i \, : \, \lambda_i \geq 1/2} \iprod{v_i, \1_T}^2
  \]
  where the columns of $A$ are the eigenvectors $v_i$ of $G$ with eigenvalue $\lambda_i$ at least $1/2$.
  For any $t \in \N$, note that
  \[
  \Norm{ A^\top G^t \1_T}^2 = \sum_{i \, : \, \lambda_i \geq 1/2} \iprod{G^t v_i, \1_T}^2 = \sum_{i \, : \, \lambda_i \geq 1/2} \lambda_i^t \iprod{v_i,\1_T}^2 \geq 2^{-t} \Norm{A^\top \1_T}^2\mper
  \]

  Our aim is to use the local Cheeger inequality to control $\Norm{A^\top G^t \1_T}^2$, which will be possible so long as the collision probability of $G^t \1_T$ is like that of the uniform distribution on a set of size at most $\delta n$.
  First, since $\Pi_{1/2} \preceq I$, we have $\Norm{A^\top G^t \1_T}^2 \leq \Norm{G^{t} \1_T}^2$.

  By the local Cheeger inequality (Lemma~\ref{lem:local-cheeger-2}) with $\gamma = \e^{0.1}$, there is a constant $C$ such that for every $t$, either $\|G (G^{t} \1_T)\|^2 < C \e^{0.1} \|G^{t} \1_T\|^2$ or $\|G^{t} \1_T\|^2 < \e^{0.2} \|G^{t} \1_T\|_1^2  / \delta n = \eps^{0.1} |T|^2  / \delta n$.
  (Otherwise the assumption $\Phi_G(\delta) \geq 1-\epsilon$ is violated.)
  The last equality follows because $\|\1_T\|_1 = |T|$ and $G$ preserves $1$-norms and nonnegativity of nonegative vectors.

  Pick $t$ to be the smallest integer such that the second alternative holds; i.e. $\|G^{t} \1_T\|^2 < \eps^{0.2} |T|^2 / \delta n$.
  (Such $t$ must exist because for smaller $t$ and small enough $\e$ the norm $\|G^{t} \1_T\|^2$ strictly decreases in each step of the random walk.)
  Then putting together our previous bounds,
  \[
  \Norm{\frac 1 {|T|} \sum_{i \in T } b_i}^2 \leq \frac {\eps^{0.2} n}{|T|^2} \cdot 2^t \cdot \|G^{t} \1_T\|^2 \leq 2^t \cdot \frac {\eps^{0.2} }{\delta}\mper
  \]

  We just need to bound $t$, the smallest integer such that $\|G^{t} \1_T\|^2 < \eps^{0.2} |T|^2 / \delta n$.
  If $\e$ is small enough, for every $t' < t$ we know that $\|G^{t'} \1_T\|$ is decreasing; in particular $\|G^{t'+1} \1_T\| < C {\e}^{0.1} \|G^{t'} \1_T\|$.
  Since $\|1_T\|^2 = |T|$, the number $t$ just has to be large enough that $\eps |T| / \delta n \geq (C {\e}^{0.1})^{t}$, which rearranges to $t \geq \frac{\log (\delta n / |T|)}{\log (C / \eps^{0.1})}$.
  Putting it together, we find
  \[
  \Norm{ \frac 1 {|T|} \sum_{i \in T} b_i }^2 \leq \exp \Paren{ C_1 \cdot \frac {\log (\delta n / |T|)}{\log1/\e}} \cdot \frac \e {2 \delta} \leq C_2 \exp \Paren{ C_1 \cdot \frac {\log (\delta n / |T|)}{\log1/\e}} \frac {\e^{0.2}} \delta
  \]
  for some universal $C_1, C_2 \geq 0$.
\end{proof}

\paragraph{Proving our main theorems from Lemma~\ref{lem:spectral-lb} and Lemma~\ref{lem:spectral-ub}}
We briefly describe how all our main results can be obtained using the preceding two lemmas and related ideas.
To prove Theorem~\ref{thm:resilience-cert-intro} on hardness of checking resilience of a set of points in $\R^d$, we take the set of points to be the spectral embedding of a graph $G$.
Then if $G$ is a small-set expander, one may see that no tail event in the uniform distribution over the embedding -- that is, no small set of vectors -- can deviate far from the origin, by Lemma~\ref{lem:spectral-ub}.\footnote{In reality, we must use a version of Lemma~\ref{lem:spectral-ub} which applies to $\Phi_G$ rather than $\Phi_G^\leq$ and takes only one step of the walk -- this lemma is really just the local Cheeger inequality. See Section~\ref{sec:hard-cert}.}
On the other hand, if $G$ has a small non-expanding set, resilience is immediately violated by applying Lemma~\ref{lem:spectral-lb}.\footnote{Actually, this is true only if the set has size $\Omega(\delta n)$, rather than perhaps having size, say, $\sqrt n$. This is why to prove hardness of resilience we need to start $\SSEH_=$.}

To prove Theorem~\ref{thm:moments-intro}, we again take the vectors $a_1,\ldots,a_n$ in the theorem statement to be the embedding of a graph $G$.
Lemma~\ref{lem:spectral-ub} leads to tail bounds for the uniform distribution over these vectors, which can then be translated into upper bounds on the moments of the distribution by Fact~\ref{fact:means-to-moments} in the case that $G$ is a small-set expander.
On the other hand, if $G$ has a small non-expanding set then Lemma~\ref{lem:spectral-lb} can be leveraged to prove lower bounds on the $p$-th moments of the uniform distribution on $a_1,\ldots,a_n$ for $p > 2$.
We then combine this with an averaging argument to gain better control over even more moments of the distribution when the graph is a small-set expander, while arguing that this averaging does not decrease the $p$-th moment in the presence of a small non-expanding set.

The proof of Theorem~\ref{thm:unique-intro} is similar, with one key difficulty.
To arrive at the end of the reduction in the setting of robust mean estimation under resilience, there must be a set of adversarially corrupted points, but the remaining points must be resilient.
This is where we critically leverage our strengthened small-set expansion hypothesis.
We strengthen the hypothesis in the following way: we suppose that small-set expansion remains hard if in one case we are promised that $G$ contains one small set $S$ with $\Phi_G(S) \leq \e$ but for all other $T$ with $|T| = \delta n$ and $T \cap S = \emptyset$ it holds that $\Phi_G(T) \geq 1-\e$.
The resulting control over deviations of small sets \emph{in the embedding of $[n] \setminus S$}, via local Cheeger inequalities adapted to account for the presence of the set $S$, allows us to show that the embedding of $[n] \setminus S$ is resilient.


\section{Hardness of Certifying Conditions for Robust Mean Estimation}
\label{sec:hard-cert}

In this section we show that it is SSE-hard to decide whether a set of points satisfy resilience or bounded moments beyond the $\sqrt{\delta}$ barrier.
In particular, in this regime improved certification algorithms would likely lead to improved polynomial-time error rates for robust mean estimation under bounded moment or resilience assumptions.

Throughout this section, given an instance $G$ of SSE, as in Section~\ref{sec:spectral-bounds}, we will let $\Pi_{1/2}$ be the projector to the span of eigenvectors of $G$ having eigenvalue at least $1/2$, and we let $b_1,\ldots,b_n$ be the isotropic spectral embedding of $G$.

\subsection{Consequences of SSE}
To prove our hardness from SSE, we will actually reduce from the following more quantitative problems, which are known to be polynomial-time equivalent to SSE.

\subsubsection{Gap SSE}
The first allows us to go from $\SSEHeq$ to assuming control over all sets of size in some constant size window around $\delta n$.
In particular, consider the following variant of SSEH:
\begin{hypothesis}[Gap $=$-Small-Set Expansion Hypothesis ($\GapSSEHeq$) \cite{DBLP:conf/coco/RaghavendraST12}]
For all small-enough $\eps > 0$ and $M \geq 1$, there exists a small-enough $\delta \leq 1 / M$ so that the following problem is $\NP$ hard.
Given a graph $G$ on $n$ vertices, distinguish between:
\begin{quote}
    \textbf{yes:} There exists a non-expanding set $S \subseteq [n]$ with $|S| = \delta n$ an $\Phi_G (S) \leq \eps$.\\
    \textbf{no:} All sets $S \subseteq V$ with $|S| \in \Brac{\frac{\delta n}{M}, M \delta n}$ have $\Phi_G (S) \geq 1 - \eps$.
    \end{quote}
\end{hypothesis}
\noindent
Then it is known that this problem is equivalent to $\SSEHeq$:
\begin{proposition}[\cite{DBLP:conf/coco/RaghavendraST12}]
$\SSEHeq$ holds if and only if $\GapSSEHeq$ holds.
\end{proposition}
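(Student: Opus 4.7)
The $(\Leftarrow)$ direction is essentially definitional: every $\GapSSEHeq$ instance is also an $\SSEHeq$ instance with the same yes/no label, because the ``yes'' cases coincide and the Gap ``no'' case (expansion at least $1-\epsilon$ for every set of size in $[\delta n/M, M\delta n]$) strictly strengthens the SSEH ``no'' case (expansion at least $1-\epsilon$ only at size $\delta n$). Thus $\NP$-hardness of $\GapSSEHeq$ immediately yields $\NP$-hardness of $\SSEHeq$.

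For the nontrivial direction $(\Rightarrow)$, my plan is to amplify an $\SSEHeq$-hard instance into a $\GapSSEHeq$-hard one by a graph-theoretic transformation. A natural first attempt is the $t$-fold tensor product: given $G$ on $n$ vertices that is $\SSEHeq$-hard at parameters $(\epsilon, \delta)$, form $G' = G^{\otimes t}$ on $n^t$ vertices with target Gap parameters $\delta' = \delta^t$ and $\epsilon' = O(t\epsilon)$. Yes-case preservation is straightforward: the product witness $S^t$ has size $\delta^t n^t = \delta' n'$ and expansion at most $1-(1-\epsilon)^t \leq t\epsilon$ when $S$ is a non-expanding witness in $G$. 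The main work is the no case -- showing that if every $\delta n$-sized set in $G$ has expansion at least $1-\epsilon$, then every set of size in $[\delta' n'/M, M \delta' n']$ in $G^{\otimes t}$ has expansion at least $1-\epsilon'$. I plan a slice/decomposition argument: any non-expanding $T \subseteq V(G^{\otimes t})$ yields, by averaging over the $t$ tensor coordinates, a slice in one coordinate that is a non-expanding subset of $V(G)$ of size close to $\delta n$.

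The main obstacle is matching this slice size to \emph{exactly} $\delta n$, since the SSEH promise only controls sets of that precise size. I would compose the decomposition with a subset-averaging step: if a set of size $s \in [\delta n/M, M \delta n]$ in $G$ has small expansion, then a random subset of size exactly $\delta n$ (when $s \geq \delta n$) or random superset (when $s \leq \delta n$) has expansion at most a factor depending on $M$ worse, contradicting the SSEH no case after parameter balancing. If the straightforward tensor-plus-averaging analysis loses too much, my fallback is to replace $G^{\otimes t}$ with a padded or weighted variant whose non-expanding sets have sizes more tightly concentrated near $\delta' n'$ -- for instance, by composing the tensor product with an auxiliary regular expander whose role is to control slice sizes. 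Balancing $t$, $M$, $\delta$, and the choice of auxiliary structure so that the Gap window in $G'$ corresponds cleanly to the single scale $\delta n$ in $G$ is where I expect the bulk of the technical work.
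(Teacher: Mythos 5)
The paper does not supply a proof of this proposition; it simply cites \cite{DBLP:conf/coco/RaghavendraST12}, so there is no in-paper argument to compare against. Your easy direction ($\GapSSEHeq \Rightarrow \SSEHeq$) is correct for exactly the reason you give: taking $M = 1$ (or noting that the Gap promise is componentwise contained in the $\SSEHeq$ promise) makes the implication immediate.

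For the hard direction, however, you have buried the one idea that actually does the work inside a much more elaborate and, I think, unnecessary framework. The subsampling/padding step you mention as an afterthought is the whole proof: given $\eps$ and $M$, invoke $\SSEHeq$ with $\eps' = \eps/(2M)$ to get $\delta$, and output the \emph{same} graph $G$. The \textbf{yes} case is unchanged. In the \textbf{no} case, suppose some $S$ with $|S| = s \in [\delta n/M, M\delta n]$ had $\Phi_G(S) < 1-\eps$. If $s \geq \delta n$, a uniformly random $T \subseteq S$ with $|T| = \delta n$ satisfies $\E[1-\Phi_G(T)] \geq \frac{\delta n -1}{s-1}(1-\Phi_G(S)) > \eps/(2M) = \eps'$, so some $T$ of size $\delta n$ has $\Phi_G(T) < 1-\eps'$; if $s < \delta n$, pad $S$ to a set $T$ of size $\delta n$, and nonnegativity of $G$ gives $1-\Phi_G(T) \geq \frac{s}{\delta n}(1-\Phi_G(S)) > \eps/M > \eps'$. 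Either way this contradicts the $\SSEHeq$ no-promise. No tensor product, no slicing, no auxiliary expander is required. By contrast, your tensor-power route has a genuine gap in the no case: a non-expanding $T \subseteq V(G)^t$ of size near $\delta^t n^t$ need not have any slice of size near $\delta n$ (e.g.\ $T = S_1 \times \cdots \times S_t$ with the $|S_i|$ wildly unbalanced but multiplying to $\delta^t n^t$), so the ``averaging over coordinates'' step does not produce a set you can control with the $\SSEHeq$ promise, and your subsampling patch only covers the window $[\delta n/M, M\delta n]$, not arbitrary slice sizes. You should discard the tensor construction and present the subsampling/padding argument directly.
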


\subsubsection{Quantitative SSE}
It has been shown that in SSE, quantitative relationships between the parameters $\eps, \delta$ may be taken.
Specifically,~\cite{DBLP:conf/coco/RaghavendraST12} shows:
\begin{proposition}[\cite{DBLP:conf/coco/RaghavendraST12}, Theorem 3.5]
\label{prop:quant-SSE}
For every sufficiently small $\delta,\e,\gamma > 0$ the following problem is $\NP$-hard assuming $\SSEHleq$:
Given a graph $G$, distinguish between:
\begin{quote}
\textbf{yes:} $\Phi_G^\leq (\delta) \leq \e$.\\
\textbf{no:} For all $\delta' \in [0,1]$ it holds that $\Phi_G^\leq (\delta') \geq 1 - (\delta')^{\Omega(\e)} - \gamma/\delta'$.
\end{quote}
\end{proposition}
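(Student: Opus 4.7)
The plan is to reduce from the basic $\SSEHleq$ dichotomy via a product-graph amplification whose expansion profile takes on the desired quantitative shape. Starting from a graph $G$ promised to satisfy $\SSEHleq$ at suitably chosen base parameters $(\delta_0, \e_0)$ (depending on the target $\delta, \e, \gamma$), the goal is to output a new graph $G'$ such that in the completeness case $G'$ still contains a non-expanding set of density roughly $\delta$, while in the soundness case $G'$ simultaneously satisfies $\Phi_{G'}^\leq(\delta') \geq 1 - (\delta')^{\Omega(\e)} - \gamma/\delta'$ at every scale $\delta'$.

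The natural amplifier is a tensor / noise-operator product $G' = G \otimes H$, where $H$ is a graph whose own small-set expansion profile already has the target polynomial shape. The canonical choice is a noisy Boolean cube at noise rate controlled by $\e$: by the Bonami--Beckner / hypercontractivity inequality, such $H$ satisfies $\Phi_H^\leq(\delta') \geq 1 - (\delta')^{\Omega(\e)}$ at all scales $\delta'$. Completeness is immediate: if $S \subseteq V(G)$ has $|S| = \delta_0 n$ and $\Phi_G(S) \leq \e_0$, then the product set $S \times V(H)$ (or its noise-operator analogue) is a non-expanding subset of $G'$ of the same density with essentially the same expansion.

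Soundness is the technical heart. Given an arbitrary $T \subseteq V(G')$ of density $\delta'$, decompose $T$ into its fibers $T_u = \{v : (v,u) \in T\}$ as $u$ ranges over $V(H)$, and split the normalized quadratic form $\iprod{\1_T, G' \1_T}/|T|$ into a $G$-contribution and an $H$-contribution. For fibers of density at least $\delta_0$, the $\SSEHleq$ hypothesis on $G$ kills the $G$-contribution up to $\e_0$; for fibers whose $G$-density is small, the induced marginal on $H$ is supported on a small set, and hypercontractivity of $H$ gives the $(\delta')^{\Omega(\e)}$ shape. The additive slack $\gamma/\delta'$ absorbs boundary terms from the interface between the two fiber regimes and from very small $\delta'$ where neither side directly controls the profile.

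The main obstacle is obtaining the profile uniformly across all scales $\delta' \in [0,1]$ from a hypothesis that only asserts non-expansion at the single base scale $\delta_0$: this is exactly what forces the product amplification and the invocation of hypercontractivity, and it is where the exponent $\Omega(\e)$ must be pinned down carefully. Tuning the dimension and noise rate of $H$ against the granted $(\delta_0, \e_0)$ to simultaneously match the target $(\delta, \e, \gamma)$ — in particular making sure the completeness scale survives amplification while the soundness profile holds at arbitrarily small $\delta'$ — is the delicate parameter-balancing step on which the reduction rests.
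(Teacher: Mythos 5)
This proposition is not proved in the paper --- it is imported verbatim from Raghavendra, Steurer, and Tulsiani (\cite{DBLP:conf/coco/RaghavendraST12}, Theorem~3.5) and used as a black box, so there is no in-paper argument for you to compare against. That said, your sketch does capture the mechanism of the argument in the cited reference: the single-scale hardness furnished by $\SSEHleq$ is amplified to a multi-scale expansion-profile guarantee by composing with a noise graph built from a Bonami--Beckner operator, whose own profile already has the $(\delta')^{\Omega(\e)}$ shape by hypercontractivity; completeness passes through the planted set crossed with the full fiber; soundness decomposes a test set into fibers over the noise coordinate and applies hypercontractivity to the small fibers; and the $\gamma/\delta'$ slack absorbs the residual scales that neither side of the decomposition directly controls. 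The parameter bookkeeping you flag as delicate --- making the completeness scale $\delta$ cohere with the base scale granted by $\SSEHleq$ while keeping the base soundness parameter small enough to survive the amplification --- is indeed the crux, and it is precisely the work done in \cite{DBLP:conf/coco/RaghavendraST12} that the present paper deliberately does not re-derive.
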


\subsection{Hardness of certifying resilience}

In this section, we prove the following theorem.

\begin{theorem}\label{thm:resilience-hardness}
  Under $\SSEHeq$, for all sufficiently small constants $s > 0$ there exists $\delta(s) > 0$ such that given $S \subseteq \R^d$ it is $\NP$-hard to distinguish between:
  \begin{quote}
    \textbf{yes:} the uniform distribution $X$ on $S$ is $(s\sqrt{\delta},\delta)$-resilient. \\
    \textbf{no:} there is an event $A$ in the uniform distribution on $S$ such that $\Pr(A) = \delta$ and $\Norm{ \E X \, | \, A - \E X } > 0.4 \cdot \sqrt{\delta} \cdot \tfrac{1 -\Pr A}{\Pr A}$.
    \end{quote}
\end{theorem}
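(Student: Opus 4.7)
The plan is to reduce from $\GapSSEHeq$, which is polynomial-time equivalent to $\SSEHeq$, by mapping a graph $G$ on $n$ vertices to the set $S = \{b_1,\dots,b_n\}$ given by its isotropic spectral embedding. Two normalizations will be exploited throughout. By Fact~\ref{fact:spectral-embed-mean} the mean of the uniform distribution on $S$ is $e_1$, and every $b_i$ has first coordinate exactly $1$ (since the first column of $A$ is $\mathbf{1}/\sqrt n$). Consequently every conditional mean $\bar b_T := \tfrac{1}{|T|}\sum_{i\in T} b_i$ also has first coordinate $1$, yielding the clean identity $\|\bar b_T - e_1\|^2 = \|\bar b_T\|^2 - 1$. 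Under this reparametrization, $(s\sqrt\delta,\delta)$-resilience of the uniform distribution on $S$ is equivalent to the assertion that $\|\bar b_T\|^2 \leq 1 + s^2\delta(n-|T|)^2/|T|^2$ for every $T \subseteq [n]$ with $|T| \leq \delta n$.

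For the completeness direction (YES of $\GapSSEHeq$ produces NO of resilience), I would apply Lemma~\ref{lem:spectral-lb} directly to a non-expanding set $R$ with $|R|=\delta n$ and $\Phi_G(R)\leq \eps$, obtaining $\|\bar b_R - e_1\|^2 \geq (1/2-\eps)/\delta - 1$. For $\eps$ a small absolute constant and $\delta$ small, this exceeds $0.16(1-\delta)^2/\delta$, so the event $A = \{b_i : i \in R\}$ of probability exactly $\delta$ witnesses the NO case of resilience with deviation strictly greater than $0.4\sqrt\delta (1-\delta)/\delta$.

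Soundness (NO of $\GapSSEHeq$ gives YES of resilience) is the substantive direction, and I would split on $k = |T|$. In the small regime $k \leq s^2 \delta n/4$ no expansion hypothesis is needed: the trivial spectral bound $\|\bar b_T\|^2 = (n/k^2)\iprod{\1_T,\Pi_{1/2}\1_T}\leq n/k$ already dominates the required bound $s^2\delta(n-k)^2/k^2 \asymp s^2\delta n^2/k^2$. In the medium regime $k \in [s^2\delta n/4,\delta n]$ I would argue by contradiction: supposing $\|\bar b_T\|^2 > 1 + s^2\delta(n-k)^2/k^2$, the identity $\iprod{\1_T,\Pi_{1/2}\1_T} = k^2\|\bar b_T\|^2/n$ combined with $G^2 \succeq \tfrac14 \Pi_{1/2}$ yields $\|G\1_T\|^2/\|\1_T\|^2 \geq k\|\bar b_T\|^2/(4n) \geq s^2/16$. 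Feeding $f = \1_T$ into Lemma~\ref{lem:local-cheeger-2.5} with $\eta$ a small absolute constant then produces a level set $T'$ of $g = (1-\eta)f + \eta G f$ with $\Phi_G(T') \leq 1 - \Omega(\eta^2 s^4)$ and with $|T'|$ in an explicit window whose location depends only on $k$, $\eta$, and $s$.

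The step I expect to be the main obstacle is the bookkeeping required to ensure that this Cheeger size window is contained in the $\GapSSEHeq$ forbidden window $[\delta n/M, M\delta n]$ uniformly across all $k$ in the medium regime. A short calculation shows the Cheeger window always lies inside some interval $[c_1(\eta,s)\,\delta n,\, c_2(\eta,s)\,\delta n]$ with $c_1, c_2$ polynomial in $s^{\pm 1}$ and $\eta^{\pm 1}$, so choosing $M = \Theta(\max(1/c_1, c_2))$ achieves the containment. Finally, picking $\eps$ smaller than the implicit constant in $1-\Omega(\eta^2 s^4)$ makes $\Phi_G(T') < 1 - \eps$, contradicting the $\GapSSEHeq$ NO promise. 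Invoking $\GapSSEHeq$ with these $(\eps,M)$ then yields the desired $\delta = \delta(s) > 0$; the only subtlety is checking that the cascade $\eta \to \eps \to M \to \delta$ closes consistently for every sufficiently small $s$, which follows from $\GapSSEHeq$'s guarantee of existence of a suitable $\delta$ for any fixed constants $(\eps,M)$.
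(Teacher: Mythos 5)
Your proposal is correct and follows essentially the same route as the paper: reduce from $\GapSSEHeq$ to the isotropic spectral embedding, use Lemma~\ref{lem:spectral-lb} in the completeness direction, and split the soundness analysis into a small-set regime handled by the trivial bound $\iprod{\1_T,\Pi_{1/2}\1_T}\le|T|$ and a medium-set regime handled by Lemma~\ref{lem:local-cheeger-2.5} via $\Pi_{1/2}\preceq 4GG^\top$. The cosmetic differences are: you exploit that every $b_i$ has first coordinate exactly $1$ to get the exact identity $\|\bar b_T-e_1\|^2=\|\bar b_T\|^2-1$, which is a cleaner bookkeeping device than the paper's triangle-inequality handling of the $\|\E X\|=1$ offset; and you fix $\eta$ to be an absolute constant and take the crossover at $k\approx s^2\delta n$, whereas the paper sets $\eta=r=\eps^{1/4}$ and crosses over at $k\approx\eps^{1/4}\delta n$ — your parametrization yields a somewhat better trade-off ($\eps\sim s^4$, $M\sim s^{-6}$ versus the paper's implicit $\eps\sim s^8$, $M\sim s^{-20}$), but neither is trying to be tight. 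The one step you flag as potentially delicate (containing the Cheeger window inside $[\delta n/M,M\delta n]$ uniformly over $k$ in the medium regime) is exactly the step the paper absorbs into its choice $M=O(1/(r^2\eps^2))$, and your calculation closes it correctly.
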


\begin{proof}[Proof of Theorem~\ref{thm:resilience-hardness}]
  Let $\eps > 0$ be sufficiently small.
  We start with an instance $G$ of $\GapSSEHeq$ with parameter $\eps$, $M \geq 1$ to be set later, and corresponding $\delta = \delta(\e,M)$.
  Our reduction is simple: we let the set $S$ be $S = \{b_i\}_{i = 1}^n$.
  Observe that an event $E$ in the uniform distribution supported $S$ directly corresponds to a subset $T \subseteq S$, and moreover $\Pr E = |T| / n$.
  We verify that an efficient algorithm certifying $(s\sqrt{\delta},\delta)$-resilience of $S$, for some $s = s(\e)$, would solve SSE.
  (We will show $s = \Theta(\e^{1/8})$ suffices.)

  There are two cases to check.
  Suppose there exists a set $T \subset [n]$ with $|T| = \delta n$, so that $\Phi_G (T) \leq \eps$.
  Let $A$ be the event associated to that set.
  Then by Lemma~\ref{lem:spectral-lb}, we have
\begin{align*}
  \Norm{\E X | A}^2 &= \Norm{\frac{1}{T} \sum_{i \in T} b_i}^2 \\
  &\geq \frac{1/2 - \eps}{\Pr A} \; ,
\end{align*}
and so in particular since $\Pr A = \delta$, we have $\Norm{\E X | A} \geq \sqrt{1/2 - \e} \cdot \sqrt{\delta} \cdot \frac{1}{\Pr A} \geq \sqrt{\delta} \cdot \frac{1}{2\Pr A}$, for $\e$ sufficiently small.
Since $\E X = e_1$ and therefore $\Norm{\E X} = 1 \ll \Norm{\E X| A}$ for $\delta$ small, in this case the resulting set $S$ is in the {\bf no} case for resilience.

  We now check the other case.
  Suppose $\Phi_G (S) \geq 1 - \eps$ for all $S$ with $|S| \in [\delta n / M, M \delta n]$.
  We wish to verify that in this case the resulting distribution is in the {\bf yes} case for resilience.
  First, observe that for any set $T \subseteq [n]$ with associated event $E$, we have the bound
  \begin{align*}
  \Norm{\frac{1}{|T|} \sum_{i \in T} b_i}^2 &= n \cdot \Norm{A^\top \frac{\1_T}{|T|}}^2 \stackrel{(a)}{\leq} \frac{n}{|T|} \leq \frac{1}{\Pr E} \; , 
  \end{align*}
  where (a) follows since $A$ has spectral norm at most $1$.
  (Here $A$ is the matrix such that $AA^\top = \Pi_{1/2}$.)
  Let $r$ be a constant to be optimized later.
 If $\Pr E < r \delta$, then immediately $\Norm{\E X | E} \leq \frac{\sqrt{r \delta}}{\Pr E}$.
 On the other hand, if $\Pr E \in [r \delta, \delta]$, then if $T$ is the associated set, we must have 
\begin{align*}
\Norm{ \frac 1 {|T|} \sum_{i \in T } b_i }^2 &= n \Norm{ A^\top \frac{\1_T}{|T|}}^2 \\
&\stackrel{(a)}{\leq} 4n \Norm{G \frac{\1_T}{|T|}}^2 \\
&\stackrel{(b)}{\leq} \frac{4n \sqrt{\e}}{r} \Norm{\frac{\1_T}{|T|}}^2 \\
&= \frac{4 \sqrt{\e}}{r \Pr E} \; ,
\end{align*}
where (a) follows since $A A^\top = \Pi_{1/2} \preceq 4GG^\top$, and (b) follows since if we let $M = O \Paren{\frac{1}{r^2 \eps^2}}$, then this follows from Lemma~\ref{lem:local-cheeger-2.5} with $\eta = r$ (as otherwise we would witness a set with size in $[\delta n / M, M \delta n]$ with $\Phi_G (S) < 1 - \eps$).
As a result, we have
\[
\Norm{\E X | E} \leq \frac{2 \eps^{1/4}}{r^{1/2} \sqrt{\Pr E}} \leq \frac{2 \eps^{1/4} \sqrt{\delta}}{r^{1/2} \Pr E} \; .
\]
Thus, if we let $r = \e^{1/4}$, then in all cases, we have
\[
\Norm{\E X | E} \leq \frac{2 \eps^{1/8} \sqrt{\delta}}{\Pr E} \; .
\]
Since again $\Norm{\E X} = 1$, this implies that for $\delta$ sufficiently small, $S$ is $(s \sqrt{\delta}, \delta)$-resilient, for $s = \Theta(\e^{1/8})$.
Thus we are in the {\bf yes} case for resilience.
Our choice of $s = \Theta(\e^{1/8})$ ensures that $\SSEHeq$ applies for all small-enough $\e$ and hence for all small-enough $s$.
This completes the proof of correctness of the reduction.
\end{proof}

\subsection{Hardness of certifying bounded moments}
This section is dedicated to the proof of the following theorem:
\begin{theorem}
\label{thm:moment-cert}
Under $\SSEHleq$, there exists a constant $s > 0$ such that for any $q >2$, $c > s$, $t \in (2, q]$, given $S \subseteq \R^d$ it is $\NP$-hard to distinguish the cases:
  \begin{quote}
    \textbf{yes:} the uniform distribution $X$ on $S$ satisfies
    \[
      \sup_{\Norm{v} = 1} \Abs{\iprod{v, X - \E X}}^r \leq (s r)^{r/2}\; ,
    \]
    for all $2 < r \leq q$.\\
    \textbf{no:} there exists a unit vector $v \in \R^d$ so that
   \[
      \sup_{\Norm{v} = 1} \Abs{\iprod{v, X - \E X}}^t > (c t)^{t / 2} \; .
    \]

    \end{quote}
\end{theorem}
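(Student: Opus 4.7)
The plan is to reduce from the quantitative small-set expansion problem provided by Proposition~\ref{prop:quant-SSE}. Given a graph $G$ on $n$ vertices, I form its isotropic spectral embedding $b_1,\ldots,b_n$, and then feed into the certification problem a \emph{smoothed} set $S'\subseteq\R^d$ obtained by taking all normalized $k$-tuple sums $\frac{1}{\sqrt{k}}\sum_{j=1}^{k}(b_{i_j}-e_1)$ indexed by $(i_1,\ldots,i_k)\in[n]^k$, where $k=k(q)$ is a parameter chosen below. The uniform distribution $X'$ on $S'$ is thus, up to mean-centering, a rescaled $k$-fold i.i.d.\ sum of the raw embedding distribution $X$ on $\{b_i\}$.

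For the \textbf{no} SSE case, when $G$ contains $T\subseteq[n]$ of size $\delta n$ with $\Phi_G(T)\le\e$, Lemma~\ref{lem:spectral-lb} gives a unit $v$ with $\iprod{v,\frac{1}{|T|}\sum_{i\in T}b_i}^2\geq (1/2-\e)/\delta$, so the raw distribution satisfies $\E_i|\iprod{v,b_i-e_1}|^t=\Omega(\delta^{1-t/2})$ for every $t>2$. Conditioning the $k$-fold sum on the event that a single index $i_j$ lies in $T$, which has probability $\Theta(k\delta)$, and controlling the other summands by independence and their bounded second moments, yields $\E|\iprod{v,X'}|^t\geq c_t (k\delta)^{1-t/2}$; choosing $\delta$ sufficiently small relative to $k,t,c$ makes this exceed $(ct)^{t/2}$.

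For the \textbf{yes} SSE case, where $\Phi_G^\leq(\delta')\geq 1-(\delta')^{\Omega(\e)}-\gamma/\delta'$ holds at every scale $\delta'$, I iterate Lemma~\ref{lem:spectral-ub} at each scale to convert small-set expansion into a polynomially decaying tail bound $\Pr_i(|\iprod{v,b_i-e_1}|>M)\leq P_q(M)$, where the decay of $P_q$ can be made as fast as desired by shrinking $\e,\gamma$. Integrating gives finite moments $\E|\iprod{v,b_i-e_1}|^r\leq K_r$ for all $r\leq q$. Rosenthal's inequality in its sharp form then gives, for the i.i.d.\ sum,
\[
\Norm{\iprod{v,X'}}_{L^r} \leq C\sqrt{r}\cdot\Norm{\iprod{v,b-e_1}}_{L^2} + Cr\cdot k^{1/r-1/2}\Norm{\iprod{v,b-e_1}}_{L^r}.
\]
Since $\Norm{\iprod{v,b-e_1}}_{L^2}\leq 1$ (isotropy) and $\Norm{\iprod{v,b-e_1}}_{L^r}\leq K_r^{1/r}$, taking $k$ at least polynomially large in $q$ forces the second term to $o(1)$ uniformly in $r\in(2,q]$, yielding $\E|\iprod{v,X'}|^r\leq (sr)^{r/2}$ for some universal constant $s$.

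The main obstacle will be reconciling the two analyses: the no case forces $\delta\ll 1/k$, while the yes case requires $k$ at least polynomial in $q$, so $\delta$ must be taken extremely small once $q$ is fixed. More delicately, the sub-Gaussian-type moment bound in the yes case needs the sharp Rosenthal constant ($O(\sqrt r)$ on the variance term rather than the naive $O(r/\log r)$) and relies on the polynomial tail from Lemma~\ref{lem:spectral-ub} being fast enough to tame the heavy-tail term in Rosenthal uniformly in $r\leq q$. Nesting the parameter choices in the correct order---first $s,q$, then $k$, then $\e,\gamma$ (via Proposition~\ref{prop:quant-SSE}), then $\delta$---is what makes both halves of the reduction line up, and the bulk of the technical work will be verifying this nesting and quantifying the tail contribution.
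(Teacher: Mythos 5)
Your plan follows essentially the same route as the paper's proof: take the isotropic spectral embedding of an SSE instance (really, of the quantitative instance from Proposition~\ref{prop:quant-SSE}), pass to scaled $k$-fold i.i.d.\ sums, use Lemma~\ref{lem:spectral-lb} to witness the heavy tail when a small non-expanding set exists, and use multi-scale tail bounds plus Rosenthal's inequality in the expander case. (The paper first packages the spectral-embedding step as a separate intermediate reduction, Lemma~\ref{lem:moment-cert1}, and only then applies the Rosenthal smoothing; that is a presentational, not substantive, difference.)

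The substantive gap is in your parameter nesting. In the expander case the raw moments are \emph{not} bounded uniformly in $\delta$: following the paper, $\E|\iprod{v,X}|^r \lesssim \delta^{cr}/\delta^{r/2-1} = \delta^{r(c-1/2)+1}$, where $c$ is a small universal constant tied to the $\Omega(\e)$ exponent in Proposition~\ref{prop:quant-SSE}. Since $c<1/2$, this blows up as $\delta\to 0$ for every $r>2/(1-2c)$, which covers essentially all of $(2,q]$. Hence the $K_r$ you feed into Rosenthal grow without bound as $\delta\to 0$, and a \emph{fixed} factor $k^{1/r-1/2}$ with $k=k(q)$ cannot absorb that growth — so your claim that the heavy-tail Rosenthal term is $o(1)$ ``uniformly in $r\le q$'' does not survive the subsequent choice of $\delta$. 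But your soundness direction forces exactly that subsequent choice: you need $\delta$ taken small after $k$ so that $(k\delta)^{1-t/2}$ beats $(ct)^{t/2}$. The two requirements conflict under your nesting ``first $k$, then $\delta$''. The paper resolves this by coupling $k$ and $\delta$: it takes the number of summands to be $\alpha/\delta$ for a constant $\alpha=\alpha(q,c)\le 1$, so $k\delta=\alpha$ is held fixed. Then the problematic Rosenthal term becomes $(\delta/\alpha)^{r/2-1}\cdot\delta^{cr}/\delta^{r/2-1}=\delta^{cr}/\alpha^{r/2-1}\to 0$ as $\delta\to 0$, while the soundness contribution stays $\Omega(\alpha^{1-t/2})$ and exceeds $(ct)^{t/2}$ once $\alpha$ is chosen small enough. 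As written, your yes-case analysis does not close; replacing ``$k=k(q)$ fixed, then $\delta\to 0$'' by ``$k=\alpha(q,c)/\delta$'' is the needed repair, and is what the paper does.
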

\noindent
We first show that the following intermediate problem is $\NP$-hard under $\SSEHleq$:
\begin{lemma}
\label{lem:moment-cert1}
There exists a universal constant $c \in [0,1]$ such that for all $q > 2$ and all small-enough $\delta$ the following problem is $\NP$-hard assuming $\SSEHleq$.
Given a set $S$ of $n$ points in $\R^d$ so that the uniform distribution $X$ over $S$ is isotropic, distinguish between:
\begin{quote}
{\bf yes:} There exists an event $E$ with probability $\Pr E \leq \delta$ so that $\Norm{\E X | E - \E X} \geq \frac{0.4}{\sqrt{\Pr E}}$.
In particular, by Fact~\ref{fact:means-to-moments}, this implies $\E \Abs{\iprod{v, X}}^r > \frac{0.4^r}{\delta^{r/2 - 1}}$ for some unit vector $v$, and any $r > 2$.\\
{\bf no:}  $\E |\iprod{v, X}|^r \leq \frac{\delta^{cr}}{\delta^{r/2-1}}$ for all unit vectors $v$ and all $r \in (2, q]$.
\end{quote}
\end{lemma}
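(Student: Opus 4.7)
The plan is to reduce from Proposition~\ref{prop:quant-SSE}: fix small parameters $\e, \gamma, \delta > 0$ with $\gamma$ much smaller than $\delta$, and map an input graph $G$ on $n$ vertices to the set $S = \{b_1, \ldots, b_n\}$ consisting of the isotropic spectral embedding of $G$. The uniform distribution $X$ over $S$ is isotropic by Fact~\ref{fact:spectral-embed-iso}, and every event $E$ corresponds to some $T \subseteq [n]$ with $\Pr E = |T|/n$ and $\E[X \mid E] = \tfrac{1}{|T|}\sum_{i \in T} b_i$.

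In the YES case of Proposition~\ref{prop:quant-SSE} there exists $T \subseteq [n]$ with $|T| \leq \delta n$ and $\Phi_G(T) \leq \e$. Applying Lemma~\ref{lem:spectral-lb} to this $T$ gives
\[
\Norm{\E[X \mid E_T]}^2 \geq \frac{1/2 - \e}{\Pr E_T} \mcom
\]
which for $\e$ small enough exceeds $(0.4)^2 / \Pr E_T$. Since $\E X = 0$ by isotropy, this establishes the YES conclusion, and the moment consequence follows directly from Fact~\ref{fact:means-to-moments}.

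For the NO case, Proposition~\ref{prop:quant-SSE} gives $\Phi_G^\leq(\delta') \geq 1 - (\delta')^{\Omega(\e)} - \gamma/\delta'$ for every $\delta'$. With $\gamma \ll \delta$, this simplifies in the relevant range to $\Phi_G^\leq(\delta') \geq 1 - 2(\delta')^{\Omega(\e)}$. Setting $\e_{\mathrm{eff}} \leq 2\delta^{\Omega(\e)}$ and applying Lemma~\ref{lem:spectral-ub} to $T \subseteq [n]$ with $|T| \leq \delta n$ yields
\[
\Bignorm{\tfrac{1}{|T|} \sum_{i \in T} b_i} \leq C' \Paren{\tfrac{\delta n}{|T|}}^{C/\log(1/\e_{\mathrm{eff}})} \cdot \frac{\e_{\mathrm{eff}}^{1/10}}{\sqrt{\delta}} \mper
\]
Fix a unit $v$ and sort $|\iprod{v, b_i}|$ in decreasing order as $y_1 \geq \cdots \geq y_n$. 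Splitting the top-$k$ indices by the sign of $\iprod{v, b_i}$, the dominant-signed class has at least $k/2$ elements, and Cauchy--Schwarz together with the above display bounds $y_k \leq O(1) \cdot (\delta n/k)^{C/\log(1/\e_{\mathrm{eff}})} \cdot \delta^{\Omega(\e)/10 - 1/2}$ for $k \leq \delta n$; the analogous argument applied at scale $\delta_0 = k/n$ handles $k > \delta n$ and gives $y_k \leq O(1) \cdot (k/n)^{\Omega(\e)/10 - 1/2}$.

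To conclude, evaluate $\E|\iprod{v, X}|^r = \tfrac{1}{n}\sum_k y_k^r$ by splitting at $k = \delta n$. Fix the SSE parameter $\e$ as a universal constant (so that $c := \Omega(\e)/10$ is universal), then take $\delta$ small enough as a function of $q$ that the exponent $\beta := C/\log(1/\e_{\mathrm{eff}}) = C/(\Omega(\e)\log(1/\delta))$ satisfies $r\beta < 1$ for every $r \in (2, q]$. The head sum telescopes dimension-freely---the factors $(\delta n)^{r\beta}$ and $(\delta n)^{1-r\beta}/n$ combine to $\delta$---to a term of order $O(1)^r \delta^{cr - r/2 + 1}/(1 - r\beta)$, while the tail sum, using the refined bound, yields a comparable $O(1)^r \delta^{cr - r/2 + 1}/|r(c - 1/2) + 1|$ in the regime $r(c - 1/2) < -1$. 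Shrinking $\delta$ still further (as a function of $q$) absorbs the $O(1)^r$ prefactor into $\delta^{\e' r}$ for a tiny $\e'$, producing the desired bound $\E |\iprod{v, X}|^r \leq \delta^{cr}/\delta^{r/2 - 1}$. The central difficulty is this simultaneous three-parameter calibration: $\e$ small enough that Proposition~\ref{prop:quant-SSE} applies yet large enough that $c$ remains a fixed positive universal constant, $\gamma$ much smaller than $\delta$ so the $\gamma/\delta'$ term is negligible at every scale we use, and $\delta$ small enough (as a function of $q$) to force $r\beta < 1$ uniformly across $r \in (2, q]$.
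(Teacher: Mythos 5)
Your reduction set-up (isotropic spectral embedding from \prettyref{prop:quant-SSE}, with $\gamma$ polynomially smaller than $\delta$) and your YES case (a direct invocation of \prettyref{lem:spectral-lb}) coincide with the paper's proof. In the NO case you take a genuinely different route. The paper invokes \prettyref{fact:means-to-moments}, which converts a uniform mean-shift bound into a moment bound via an integration argument with an auxiliary exponent $s$ (chosen so $s/(s-r) = \log(1/\delta)$), and then splits on $\Pr E \lessgtr \delta/2$, applying \prettyref{lem:spectral-ub} in each regime. You instead prove the moment bound by hand: sort the projections $|\iprod{v,b_i}|$ in decreasing order as $y_1\geq\cdots\geq y_n$, bound $y_k$ via Cauchy--Schwarz against the mean-shift bound from \prettyref{lem:spectral-ub} applied to the dominant-sign half of the top-$k$ indices, and then compute $\tfrac1n\sum_k y_k^r$ directly, splitting the sum at $k=\delta n$. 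Both routes rest on the same spectral estimates and require the same delicate calibration of $\e$ (universal), $\gamma$ (polynomially smaller than $\delta$), and $\delta$ (small depending on $q$, so that $r\beta<1$ uniformly); your approach buys a more elementary and self-contained argument that avoids the internal $s$-optimization of \prettyref{fact:means-to-moments}, at the cost of slightly more explicit bookkeeping in the two sums.

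One detail you should patch for completeness: when $k$ is a constant fraction of $n$, the scale $\delta_0 = k/n$ is not small, so the effective non-expansion parameter $\e_{\mathrm{eff}}\approx(\delta_0)^{\Omega(\e)}$ can exceed the universal threshold $\e_0$ in \prettyref{lem:spectral-ub}, and the lemma no longer applies; the trivial bound $\bigl\|\tfrac1{|T|}\sum_{i\in T}b_i\bigr\|\leq\sqrt{n/|T|}$ (from $\|A\|\leq1$) covers that regime and gives $y_k=O(1)$, which is all you need there. Also be aware that the claimed NO-case exponent $cr-r/2+1$ is positive for $r$ just above $2$, so $\delta^{cr-r/2+1}<1$, which conflicts with Jensen's inequality ($\E|\iprod{v,X}|^r\geq(\E\iprod{v,X}^2)^{r/2}=1$) on an isotropic distribution; this is an artifact of the lemma as stated (the paper's own proof has the same feature), not a defect peculiar to your argument, and it is harmless in the downstream application where $r$ is bounded away from $2$, but you would need to record a max with $O(1)$ (or restrict $r$) to make the NO-case bound literally true near $r=2$.
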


We remark that this lemma (with different terminology) is very similar to the reduction presented in~\cite{DBLP:journals/corr/abs-1205-4484}, in their proof that SSE implies hardness for certifying $2\to4$ norms of tensors.
We give a proof here which simplifies and generalizes several key steps in their argument, and which gives us stronger guarantees which will be useful later.

The proof requires some bookkeeping, but the approach is simple.
We will take $S$ to be the isotropic spectral embedding of a graph $G$.
The {\bf yes} case is easy to establish
For the {\bf no} case, we first observe (Fact~\ref{fact:means-to-moments}) that moment bounds of the type in Lemma~\ref{lem:moment-cert1} are essentially equivalent to large-deviation tail bounds -- i.e. inequalities of the form $\Pr(\iprod{X,v} > t) \leq p(t)$ for unit vectors $v$ and various deviation magnitudes $t$.
We obtain such deviation inequalities from Lemma~\ref{lem:spectral-ub}, which shows that no small set of vectors in the spectral embedding of a small-set expander can deviate far from the origin.

\begin{proof}[Proof of Lemma~\ref{lem:moment-cert1}]
Fix $q > 2$.
Let $G$ be an instance of the problem given in Proposition~\ref{prop:quant-SSE} on $n$ vertices with $\delta < 0.05$ and $\epsilon < 0.05$ sufficiently small that Proposition~\ref{prop:quant-SSE} applies, and $\gamma = \delta^{1+\e}$.
Let $b_i$ be the isotropic spectral embedding of $G$.
Let $S = \{b_i\}_{i = 1}^n$.
We now verify that this set achieves the desired properties.

First, suppose that there exists a set $T \subseteq [n]$ with $|T| \leq \delta$ so that $\Phi_G (T) \leq \e$.
Then, by Lemma~\ref{lem:spectral-lb}, we have
\[
\Norm{\E X | E}^2 \geq \frac{0.5 - \e}{\Pr E} \; ,
\]
and so since $\Norm{\E X}^2 = 1 \ll \frac{0.45}{\Pr E}$, we have $\Norm{\E X | E - \E X} \geq \frac{0.4}{\sqrt{\Pr E}}$, for $\delta < 0.05$.
Thus, in this case the set $S$ belongs to the {\bf yes} case of Proposition~\ref{prop:quant-SSE}.

On the other hand, suppose that $\Phi_G^{\leq} (\delta') \geq 1 - (\delta')^{\Omega(\e)} - \gamma/\delta'$ for all $\delta' \in [0,1]$.
Fix any $r \in (2, q]$.
Our goal will be to use Fact~\ref{fact:means-to-moments}, which for any $s > r$ supplies the following bound on $\E | \iprod{v,X} - \E \iprod{v,X}|^r$ for any unit $v$ (by elementary integration):
\[
\E | \iprod{v,X} - \E \iprod{v,X}|^r \leq \sup_{E} \, (2 \Pr E)^{r/s} \cdot |\E \iprod{X,v} | E - \E \iprod{X,v} |^r \cdot \frac s {s-r}\mcom
\]
where the supremum is over all events $E$.

By Cauchy-Schwarz, for any unit $v$ and event $E$,
\begin{align} \label{eq:moment-1}
|\E \iprod{v,X} | E - \E \iprod{v,X}| \leq \| \E X |E - \E X\|\mper
\end{align}
So,
\begin{align}\label{eq:moments-2}
\E | \iprod{v,X} - \E \iprod{v,X}|^r \leq \sup_{E} \, (2 \Pr E)^{r/s} \cdot \| \E X | E - \E X \|^r \cdot \frac s {s-r}\mper
\end{align}
We choose $s = r \cdot \tfrac{\log(1/\delta)}{\log(1/\delta) -1}$, so that $s/(s-r) = \log(1/\delta)$.

We will bound the supremum in \eqref{eq:moments-2} by separately considering two cases: $\Pr E \leq \delta/2$ and $\Pr E > \delta/2$.
First, let $E$ have $\Pr E \leq \delta/2$.
By our choice of $\gamma$, we know that
\[
\Phi_G(\delta) \geq 1 - \delta^{\Omega(\e)}\mper
\]
Using this in conjunction with Lemma~\ref{lem:spectral-ub}, we know that there exist universal constants $C,C' > 0$ so that
\[
\|\E X | E \| \leq C' \Paren{\frac {\delta}{\Pr(E)}}^{C/\e \log(1/\delta)} \cdot \frac{\delta^{\Omega(\e)}}{\sqrt{\delta}}
\]
and hence by triangle inequality
\[
\| \E X | E - \E X \| \leq C' \Paren{\frac {\delta}{\Pr(E)}}^{C/\e \log(1/\delta)} \cdot \frac{\delta^{\Omega(\e)}}{\sqrt{\delta}} + 1
\]
because $\|\E X \| = 1$.
For small-enough $\delta$, we have $C' \Paren{\frac {\delta}{\Pr(E)}}^{C/\e \log(1/\delta)} \cdot \frac{\delta^{\Omega(\e)}}{\sqrt{\delta}} \geq 1$ for all $E$, and hence
\[
\| \E X | E - \E X \| \leq 2 C' \Paren{\frac {\delta}{\Pr(E)}}^{C/\e \log(1/\delta)} \cdot \frac{\delta^{\Omega(\e)}}{\sqrt{\delta}}\mper
\]

Returning to the expression from \eqref{eq:moments-2},
\[
(2 \Pr E)^{r/s} \cdot \| \E X | E - \E X \|^r \cdot \frac s {s-r} \leq (2 \Pr E)^{r/s} \cdot \Paren{\frac{2C' \delta^{\Omega(\e)}}{\sqrt{\delta}}}^{r} \cdot \Paren{\frac {\delta}{\Pr(E)}}^{Cr/\e \log(1/\delta)} \cdot \log \frac 1 \delta \mper
\]
By elementary algebra, using our choice of $s$ and the bound $\Pr E \leq \delta /2$, so long as $\log(1/\delta) > Cr/\e + 1$, we have
\[
(2 \Pr E)^{r/s} \Paren{\frac {\delta}{\Pr(E)}}^{Cr/\e \log(1/\delta)} \leq \delta^{1-1/\log(1/\delta)} \leq O(\delta)\mper
\]
So all together we got
\[
(2 \Pr E)^{r/s} \cdot \| \E X | E - \E X \|^r \cdot \frac s {s-r} \leq C \cdot \frac{(2C' \delta)^{\Omega(\e)\cdot r}}{{\delta}^{r/2-1}} \cdot \log \frac 1 \delta \mper
\]
for some (different) universal constant $C$.
For some universal $c_1,c_2$ if we choose $\eta = c_1 \delta^{c_2 \e}$, then for every small-enough $\delta$,
\[
(2 \Pr E)^{r/s} \cdot \| \E X | E - \E X \|^r \cdot \frac s {s-r} \leq \delta \cdot \Paren{\frac \eta \delta}^{r/2}\mper
\]

We turn to the case of $\Pr E > \delta /2$.
By hypothesis, $\Phi_G^{\leq}(\Pr E) \geq 1 - (\Pr E)^{\Omega(\e)}$.
So by Lemma~\ref{lem:spectral-ub} applied with $\delta' = \Pr(E)$, we obtain that for some universal $C$,
\[
\|\E X | E - \E X\| \leq C \cdot \Paren{\frac{(\delta')^{\Omega(\e)}}{\delta'}}^{1/2} \; ,
\]
(where we used $\|\E X\| =1$ again).

Using this to bound \eqref{eq:moments-2} for events with $\Pr E > \delta/2$, and recalling our choice of $s$ above, we have
\begin{align*}
(\Pr E)^{r/s} \cdot \| \E X | E - \E X\|^r \cdot \frac s {s-r} & \leq \Pr(E)^{1-1/\log(1/\delta)} \cdot C^r \cdot \Paren{\frac{(\delta')^{\Omega(\e)}}{\delta'}}^{r/2} \cdot \log(1/\delta) \\
& \leq \frac{O(\delta)^{\Omega(\e r)}}{\delta^{r/2-1}} \\
& \leq \delta \cdot \Paren{\frac {\eta}{\delta}}^{r/2}
\end{align*}
for small-enough $\delta$ and the choice of $\eta$ above; the second simplification just uses $\Pr(E) \geq \delta /2$.

We conclude by \eqref{eq:moments-2} that for all $\delta < \delta_0(\e,r)$ it holds that
\[
\E |\iprod{X - \E X,v}|^r \leq \frac{\eta^{r/2}}{\delta^{r/2 - 1}}\mper
\]
Thus picking $\delta < \min_{r \leq q} {\delta_0(\e,r)}$, we conclude that the set of vectors $S$ is in the {\bf no} case.

Finally, the distribution over $S$ is as stated not isotropic, because the first coordinate of every vector is $1$.
Indeed, it is a standard fact that the distribution which is simply the uniform distribution over the vectors in $S$ with the first coordinated removed is mean zero and isotropic.
However, it is easily to check that the proof above goes through for $S$ projected off of the first coordinate.
Then the resulting distribution is indeed isotropic, and satisfies all the desired guarantees as in the Lemma.
This completes the proof.
\end{proof}
\noindent
The second lemma we need to prove Theorem~\ref{thm:moment-cert} is the following inequality for $p$-th moments of sums of independent random variables.
\begin{fact}[Rosenthal's Theorem, see e.g.~\cite{johnson1985best}]
\label{fact:rosenthal}
Let $p \geq 2$, and let $X_1, \ldots, X_n$ be independent with $\E X_i = 0$ and $\E |X_i|^p < \infty$ for all $i = 1, \ldots, n$.
Then
\[
\E \Abs{\sum X_i}^p \leq (C_1 p)^p \cdot \Paren{\sum \E \Brac{|X_i|^p} } + (C_2 p)^{p /2} \cdot \Paren{\sum_{i = 1}^n \E \Brac{X_i^2}}^{p / 2} \; ,
\]
for some universal constants $C_1, C_2$.
\end{fact}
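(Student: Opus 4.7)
My plan is to prove Rosenthal's inequality by a standard three-step approach: symmetrization, Khintchine's inequality, and an inductive argument on $p$ controlling the second moment of the squares.

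\textbf{Step 1 (symmetrization).} Let $X_1', \ldots, X_n'$ be independent copies of $X_1, \ldots, X_n$, independent of them, and let $Y_i = X_i - X_i'$. Since $\E \sum X_i' = 0$, by Jensen's inequality conditioning on the $X_i$,
\[
\E \Abs{\sum X_i}^p \leq \E \Abs{\sum Y_i}^p.
\]
Each $Y_i$ is symmetric, so $(Y_1,\ldots,Y_n)$ has the same joint distribution as $(\eps_1 Y_1, \ldots, \eps_n Y_n)$ where $\eps_i$ are independent Rademacher signs, independent of the $Y_i$. Moreover $\E|Y_i|^p \leq 2^p \E|X_i|^p$ and $\E Y_i^2 \leq 2\E X_i^2$, so it suffices to prove the inequality for the symmetric variables $Y_i$.

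\textbf{Step 2 (Khintchine).} Conditioning on the $Y_i$, the sum $\sum \eps_i Y_i$ is a Rademacher sum with coefficients $Y_i$, so Khintchine's inequality gives
\[
\E_{\eps} \Abs{\sum \eps_i Y_i}^p \leq (C_0 \sqrt{p})^p \Paren{\sum Y_i^2}^{p/2}
\]
for a universal constant $C_0$. Taking expectation over the $Y_i$ reduces the problem to bounding $\E \Paren{\sum Y_i^2}^{p/2}$.

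\textbf{Step 3 (splitting and induction on $p$).} Write $Z_i = Y_i^2 - \E Y_i^2$; these are independent, mean-zero, and satisfy $\E|Z_i|^{p/2} \leq 2^{p/2} \E|Y_i|^p$. By the triangle inequality in $L^{p/2}$,
\[
\Paren{\E\Paren{\sum Y_i^2}^{p/2}}^{2/p} \leq \sum \E Y_i^2 + \Paren{\E\Abs{\sum Z_i}^{p/2}}^{2/p}.
\]
The first term, raised to the $p/2$, produces precisely the $(C_2 p)^{p/2}(\sum \E X_i^2)^{p/2}$ piece we want. For the second term we apply Rosenthal inductively with exponent $p/2$. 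This sets up a recursion on the optimal constants $A(p), B(p)$ in the form
\[
\E\Abs{\sum X_i}^p \leq A(p) \sum \E|X_i|^p + B(p) \Paren{\sum \E X_i^2}^{p/2}.
\]

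\textbf{Main obstacle.} The delicate part is solving the recursion to obtain $A(p) = (C_1 p)^p$ and $B(p) = (C_2 p)^{p/2}$ rather than constants that blow up super-polynomially in $p$. The idea is to induct on $\lceil \log_2 p \rceil$: at each halving of the exponent we incur a Khintchine factor of $(C_0 \sqrt{p})^p$ and a factor $B(p/2)^{p/2} \cdot \sum \E Y_i^p$ contribution to the $A$ term. Careful bookkeeping (using that $\sum_{k \geq 0} 2^{-k} \log(p/2^k) = O(\log p)$ so the accumulated multiplicative losses are bounded by a geometric series) shows that $A(p)$ and $B(p)$ grow only as claimed. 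For non-integer $p$, one can either interpolate between consecutive integers using Lyapunov's inequality or verify that the recursion above makes sense for any real $p \geq 2$; the $p=2$ base case is trivial since $\E(\sum X_i)^2 = \sum \E X_i^2$ by independence and mean zero.
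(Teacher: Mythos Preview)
The paper does not prove this statement: Fact~\ref{fact:rosenthal} is quoted as a known result with a citation to the literature (Johnson--Schechtman--Zinn), and is used as a black box in the proof of Theorem~\ref{thm:moment-cert}. So there is no ``paper's own proof'' to compare against.

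As for your proposal on its own terms: symmetrization plus Khintchine plus a dyadic recursion in $p$ is a legitimate and well-known route to Rosenthal-type inequalities. However, Step~3 as written has a gap. When you apply the inductive hypothesis at exponent $p/2$ to the centered squares $Z_i = Y_i^2 - \E Y_i^2$, the $B(p/2)$ term produces $(\sum \E Z_i^2)^{p/4}$, and $\E Z_i^2 \leq \E Y_i^4$ gives a contribution of the form $(\sum \E Y_i^4)^{p/4}$. This is neither $\sum \E|Y_i|^p$ nor $(\sum \E Y_i^2)^{p/2}$, so it does not fit directly into your two-term recursion for $A(p),B(p)$; you need an interpolation step (e.g.\ $\E Y_i^4 \leq (\E Y_i^2)^{1-\theta}(\E|Y_i|^p)^{\theta}$ with the right $\theta$, followed by H\"older on the sum) to split it back into the two allowed pieces. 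That extra step is standard but not automatic, and it is where the constants are actually determined; your ``careful bookkeeping'' remark does not address it. Once this is handled the argument goes through and does yield constants of the stated order.
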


\begin{proof}[Proof of Theorem~\ref{thm:moment-cert}]
Let $S$ be an instance of the problem in Lemma~\ref{lem:moment-cert1} with parameters $q$ and $\delta$.
We show how to construct a set $S'$ over $n^{O(1/\delta)}$ points in time $n^{O(1/\delta)}$ so that a {\bf yes} instance of the problem in Lemma~\ref{lem:moment-cert1} is mapped to a {\bf yes} instance of the problem in Theorem~\ref{thm:moment-cert}, and similarly for {\bf no} instances.
Composing this reduction with Lemma~\ref{lem:moment-cert1} immediately yields Theorem~\ref{thm:moment-cert}.

To achieve this, we simply let $S'$ be the set
\[
  S' = \left\{ \sqrt{\frac \delta  \alpha} \sum_{i_1,\ldots,i_{\alpha/\delta}} X_i \, :  \, i \in [|S|] \right\} \; ,
\]
or equivalently, the uniform distribution $D'$ over $S'$ is the the sum of $\alpha / \delta$ i.i.d. samples from the uniform distribution $D$ over $S$, scaled by $\sqrt{\delta / \alpha}$.
Here $\alpha \leq 1$ is a parameter depending only on $q$ and $c$ to be tuned later.
Clearly $|S'| = n^{O(1 / \delta)}$ and can be constructed in time $n^{O(1 / \delta)}$ given a construction of $S$.
We now check soundness and completeness.

Suppose $S$ is an {\bf yes} instance from Lemma~\ref{lem:moment-cert1}.
Then there exists an event $E$ of $D$ with $\Pr E \leq \delta$ and a unit vector $v$ so that $\Abs{\E \iprod{v, X} | E - \E \iprod{v, X}} \geq \frac{0.4}{\sqrt{\Pr E}}$.
The event $E$ corresponds to some $T \subset S$ with $|S| \leq \delta |S|$.
Let $E'$ be the event in $D'$ that at least one $X_i$ in the sum belongs to $S$.
By standard estimates, $\Pr \Brac{X' \in S'} = 1 - (1 - \Pr E)^{\alpha / \delta} = \Omega(\alpha \Pr E / \delta)$ for $\Pr E \leq \delta$.
Moreover, since $\E_{X \sim D} X = 0$, we have that $\| \E_{X' \sim D'} X' | E' \| = \sqrt{\delta / \alpha} \cdot \| \E_{X \sim D} X | E \| \geq 0.4 \cdot \sqrt{\frac{\delta}{\alpha \Pr E}}$.
Hence, using the contribution of the event $E$ to the $t$-th moments to lower-bound them (Fact~\ref{fact:means-to-moments}), for $t \in (2, q]$, there exists some unit vector $v$ so that
\begin{align*}
\E \Abs{\iprod{v, X}}^{t} &\geq \Paren{\Pr E'} \cdot (0.4)^{t} \cdot \Paren{\frac{\delta}{\alpha \Pr E}}^{t / 2} \\
&\geq \Omega \Paren{\frac{\alpha \Pr E}{\delta}} (0.4)^{t} \cdot \Paren{\frac{\delta}{\alpha \Pr E}}^{t/2} \\
&\geq \Paren{\frac{1}{\alpha}}^{\Omega (t)} \Paren{\frac{\delta}{\Pr E}}^{t/2 - 1} \\
&\geq \Paren{\frac{1}{\alpha}}^{\Omega (t)} \geq (c t)^{t / 2} \; ,
\end{align*}
for $\alpha$ chosen such that $c = \frac{1}{t} (1 / \alpha)^{\Omega (1)}$.
Hence $S'$ is an instance of the {\bf yes} case.

Suppose on the other hand that $S$ is a {\bf no} instance.
Let $v$ be an arbitrary unit vector.
Let $X' \sim D'$, so that $X' = \sqrt{\delta / \alpha} \Paren{\sum_{i = 1}^{\alpha / \delta} X_i}$ where $X_i \sim D$ are independent.
Then, by Rosenthal's inequality (Fact~\ref{fact:rosenthal}) applied to the random variables $Z_i = \sqrt{\delta} \iprod{v, X_i}$, we see that there are universal constants $C_1,C_2$ such that for any $r \in (2, q]$,
\begin{align*}
\E \Abs{\iprod{v, X'}}^{r} &\leq (C_1 r)^r \cdot \Paren{(\delta / \alpha)^{r/2 - 1} \E_{X \sim E} |\iprod{v, X}|^r} + (C_2 r)^{r/2} \\
&\leq (C_1 r)^r \cdot \delta^{\Omega(r)} \cdot \Paren{1/\alpha}^{r/2 - 1} + (C_2 r)^{r/2} \; ,
\end{align*}
by Lemma~\ref{lem:moment-cert1}.
Using our previous choice for $\alpha$, we see that if $\delta$ is small enough as a funcion of $c,q$ then the second term dominates, and we get
\[
\sup_{\norm{v} = 1} \E \Abs{\iprod{v, X'}}^{r} \leq (s r)^{r/2} \; ,
\]
for some universal constant $s = O(1)$.
Thus in this case we are in the {\bf no} case.
This completes the proof.
\end{proof}


\section{Unique-SSE and Robust Estimation}
\label{sec:unique-sse}

In this section we prove Theorem~\ref{thm:usseh-robust-mean} on hardness of robust estimation under \USSEH.

\begin{definition}[Almost-SSE]
  Suppose $G$ is an $n$-node graph.
  We say that $G$ is an almost $(\e,\delta)$ small set expander if:
  \begin{itemize}
    \item there is $S \subseteq [n]$ with $|S| = \delta n$ and $\Phi_G(S) \leq \e$, and
    \item every $T \subseteq [n]$ with $|T| = \delta n$ and $T \cap S = \emptyset$ has $\Phi_G(T) \geq 1-\e$\mper
  \end{itemize}
\end{definition}

\begin{hypothesis}[Unique Small-Set Expansion Hypothesis $\USSEH$]
  For every $\e > 0$ there is a small-enough $\delta > 0$ such that the following problem is $\NP$-hard.
  Given an $n$-node graph $G$, distinguish between the cases: \textbf{yes:} $G$ is an almost $(\e,\delta)$ small set expander, and \textbf{no:} $\Phi_G(\delta) \geq 1-\epsilon$.
\end{hypothesis}

\begin{problem}[$\alpha,\beta$-approximate robust mean estimation under resilience]\label{prob:resil-estimate}
  \textbf{Input:} $b_1,\ldots,b_n \in \R^d$ and $\delta > 0$, such that there exists $S \subseteq [n]$ with $|S| = (1-\delta)n$ which is $(\alpha \sqrt{\delta},\delta)$-resilient.
  \textbf{Output:} A vector $\hat{\mu} \in \R^n$ such that $\|\hat{\mu} - \E_{i \sim S} b_i \| \leq \beta \sqrt{\delta}$.
\end{problem}

\begin{theorem}\label{thm:usseh-robust-mean}
  Suppose \USSEH.
  There is an absolute constant $\beta^* < 1$ such that if for any constant $\alpha < \beta^*$ Problem~\ref{prob:resil-estimate} has a polynomial-time algorithm then $\Pclass = \NP$.
\end{theorem}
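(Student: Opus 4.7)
The plan is a reduction from \USSEH to Problem~\ref{prob:resil-estimate}, mirroring and strengthening the construction in Theorem~\ref{thm:resilience-hardness}. Given a \USSEH instance $G$ on $n$ vertices with parameters $(\epsilon,\delta_0)$, we output the isotropic spectral embedding $b_1,\ldots,b_n$ of $G$ together with $\delta := \delta_0$. Run the hypothetical $(\alpha,\beta)$-robust estimation algorithm to produce $\hat{\mu}$ and declare ``yes'' iff $\|\hat{\mu} - e_1\| > \tau\sqrt{\delta_0}$ for a threshold $\tau$ between the two intervals identified below.

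\textbf{``No'' case.} Here $\Phi_G(\delta_0) \geq 1-\epsilon$, and an analysis parallel to the ``no''-case of Theorem~\ref{thm:resilience-hardness} (possibly after passing to a gap-strengthened variant of \USSEH to cover scales near $\delta_0$) shows the uniform distribution on $\{b_i\}$ is $(\gamma\sqrt{\delta_0},\delta_0)$-resilient around $e_1$ with $\gamma = O(\epsilon^{1/8})$. Any subset $S$ of size $(1-\delta_0)n$ inherits the input-promise resilience and has $\|\mu_S - e_1\| \leq \gamma\sqrt{\delta_0}$, so the algorithm's output satisfies $\|\hat{\mu}-e_1\| \leq (\beta+\gamma)\sqrt{\delta_0}$, below threshold.

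\textbf{``Yes'' case.} There is $S^*\subseteq[n]$ with $|S^*|=\delta_0 n$ and $\Phi_G(S^*) \leq \epsilon$, while every other $T$ of size $\delta_0 n$ disjoint from $S^*$ expands. Take $S := [n]\setminus S^*$. Lemma~\ref{lem:spectral-lb} gives $\|\mu_{S^*}\|^2 \geq (1/2-\epsilon)/\delta_0$, so $\mu_S = (e_1 - \delta_0\mu_{S^*})/(1-\delta_0)$ satisfies $\|\mu_S - e_1\| \geq (\sqrt{1/2}-o(1))\sqrt{\delta_0}$. Assuming $S$ is $(\alpha\sqrt{\delta_0},\delta_0)$-resilient (verified below), the algorithm returns $\hat{\mu}$ within $\beta\sqrt{\delta_0}$ of $\mu_{S'}$ for some resilient witness $S'$. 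A short argument shows any such $S'$ must omit almost all of $S^*$ — otherwise $S^*\cap S'$ would be a non-expanding, non-resilient event inside $S'$ — so $\mu_{S'}\approx\mu_S$ and $\|\hat{\mu}-e_1\| \geq (\sqrt{1/2}-\beta-o(1))\sqrt{\delta_0}$, above threshold. Setting $\beta^* := 1/(2\sqrt{2})$, any constants $\alpha,\beta<\beta^*$ with $\epsilon$ small enough make the two intervals separable.

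\textbf{Resilience of $S$ (the central step).} For $T\subseteq S$ corresponding to an event of probability $p = |T|/|S| \leq \delta_0$, split on the size of $|T|$. When $|T|\leq \alpha^2\delta_0 n$, the trivial bound $\|\mu_T\|\leq\sqrt{n/|T|}$ combined with $\|\mu_S\|=O(1)$ yields the resilience inequality $\|\mu_T-\mu_S\| \leq \alpha\sqrt{\delta_0}(1-p)/p$. When $|T|\in[\alpha^2\delta_0 n,\delta_0 n]$, suppose the resilience bound is violated. Then $\|A^\top\mathbf{1}_T\|^2 = |T|^2\|\mu_T\|^2/n = \Omega(|T|)$, and since $\Pi_{1/2}\preceq 4GG^\top$ we also have $\|G\,\mathbf{1}_T\|^2 = \Omega(\|\mathbf{1}_T\|^2)$. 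Applying Lemma~\ref{lem:local-cheeger-2.5} to $f=\mathbf{1}_T$ with parameters $\eta,\epsilon'$ chosen so that $\delta_0 n$ lies in the level-set window $[c\eta(\epsilon')^2|T|,\ C|T|/(\eta^2(\epsilon')^2)]$ produces a $T'$ with $|T'|=\delta_0 n$, $\Phi_G(T')\leq 1-\Omega(\eta^2(\epsilon')^2)$, and — by the ``moreover'' clause of Lemma~\ref{lem:local-cheeger-2.5} taken with $R=S^*$ — $T'\cap S^* = \emptyset$. Taking the \USSEH expansion parameter $\epsilon < \Omega(\eta^2(\epsilon')^2)$ contradicts the ``yes'' promise.

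\textbf{Main obstacle.} The delicate part is calibrating parameters inside the Lemma~\ref{lem:local-cheeger-2.5} application: $\eta$ and $\epsilon'$ must be constants independent of $\delta_0$ but chosen so that the level-set window $[c\eta(\epsilon')^2|T|,\ C|T|/(\eta^2(\epsilon')^2)]$ contains $\delta_0 n$ for every $|T|\in[\alpha^2\delta_0 n,\delta_0 n]$, which reduces to arranging $C/(\eta^3(\epsilon')^4) \geq 1/\alpha^2$ — achievable for any constant $\alpha$. The ``moreover'' disjointness clause is indispensable and is exactly what forces the use of \USSEH over \SSEHeq: without it, the extracted $T'$ could coincide with $S^*$ itself and the contradiction would collapse. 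A secondary subtlety is ruling out alternative resilient witnesses $S'$ in the ``yes'' case whose means drift toward $e_1$; this follows because any $S'$ that fails to remove most of $S^*$ contains a large sub-event with conditional mean near $\mu_{S^*}$, violating its own resilience.
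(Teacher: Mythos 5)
Your proposal follows the paper's proof almost exactly: it performs the same reduction (isotropic spectral embedding, thresholding on $\|\hat{\mu}-e_1\|$), uses the trivial $\|\mu_T\|\le\sqrt{n/|T|}$ bound for very small $T$ and the ``moreover'' clause of Lemma~\ref{lem:local-cheeger-2.5} with $R=S^*$ for moderate-sized $T$ — precisely the content of the paper's Lemma~\ref{lem:unique-mean-shift} — and then uses Lemma~\ref{lem:spectral-lb} to show $\|w-e_1\|=\Omega(\sqrt{\delta})$. Two small caveats: your $\beta^*=1/(2\sqrt{2})$ sits exactly at the separation threshold and should be taken strictly below it to leave room for the $o(1)$ slack; and the level set produced by Lemma~\ref{lem:local-cheeger-2.5} has size only within a constant-factor window of $\delta_0 n$, so one must subsample or add vertices (as the paper does in Lemma~\ref{lem:unique-mean-shift}) before invoking the \USSEH{} promise. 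Your closing observation about ruling out alternative resilient witnesses is a reasonable precaution, but with the standard reading of Problem~\ref{prob:resil-estimate} — the output must be $\beta\sqrt{\delta}$-close to the conditional mean of \emph{every} resilient $S$ of the promised size — it suffices to exhibit the one witness $[n]\setminus S^*$, which is what the paper's proof does.
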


Our main tool is the ``moreover'' clause in Lemma~\ref{lem:local-cheeger-2.5} which allows for $G$ to be an almost $(\e,\delta)$ small set expander rather than a small set expander.
This allows us to prove the following result characterizing the means of embeddings of small sets in $G$ which do not overlap with the small non-expanding set.

\begin{lemma}\label{lem:unique-mean-shift}
  Suppose that $G$ is an $n$-node almost $(\e,\delta)$ small set expander for $\e < \e_0$, where $\e_0 > 0$ is a universal constant.
  Let $T \subseteq [n]$ have $|T| \leq \delta n$ and no intersection with the small non-expanding set in $G$.
  Let $b_1,\ldots,b_n$ be the isotropic spectral embedding of $G$.
  Then
  \[
  \Norm{\frac 1 {|T|} \sum_{i \in T} b_i } \cdot \frac {|T|} n \leq 2 \e^{0.05} \sqrt \delta\mper
  \]
\end{lemma}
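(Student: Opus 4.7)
The plan is to imitate the proof of \prettyref{lem:spectral-ub}, with the crucial change that we invoke the ``moreover'' clause of \prettyref{lem:local-cheeger-2.5}, which guarantees a non-expanding level set that \emph{avoids} the fixed small non-expander $S$. This is exactly what the almost $(\e,\delta)$ small-set expansion property forbids, so we will derive the bound by contradiction.

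First, I would reduce the target inequality to a spectral statement about $G$. Writing $b_i = \sqrt{n}\,a_i$ with $AA^\top = \Pi_{1/2}$, one computes
\[
  \Bignorm{\tfrac{1}{|T|}\sum_{i \in T} b_i}^2 \cdot \frac{|T|^2}{n^2} \;=\; \frac{1}{n}\,\1_T^\top \Pi_{1/2}\1_T \mcom
\]
so the goal becomes $\1_T^\top \Pi_{1/2}\1_T \leq 4\e^{0.1}\delta n$. Since $\Pi_{1/2}$ projects onto eigenspaces of $G$ on which $\lambda^2 \geq 1/4$, we have $\Pi_{1/2} \preceq 4 G^2$, which reduces the task to proving $\|G\1_T\|^2 \leq \e^{0.1}\delta n$.

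The case $|T| \leq 4\e^{0.1}\delta n$ is trivial because $\|G\1_T\|^2 \leq \|\1_T\|^2 = |T|$. Otherwise, assume for contradiction that $\|G\1_T\|^2 > \e^{0.1}\delta n$. Provided $\delta \leq 1/2$, which is permitted under \USSEH, we may extend $T$ within $[n]\setminus(T \cup S)$ to a set $U$ of size exactly $\delta n$ that is still disjoint from $S$. Since $G$ has nonnegative entries, $\|G\1_U\|^2 \geq \|G\1_T\|^2 > \e^{0.1}\|\1_U\|^2$. I would then apply \prettyref{lem:local-cheeger-2.5} to $f = \1_U$ with Cheeger parameter $\e' = \e^{0.1}$ and averaging parameter $\eta = \e^{0.1}$, invoking the moreover clause with $R = S$. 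This is legitimate because $|S| = \delta n$, $U \cap S = \emptyset$, and $\Phi_G(S) \leq \e \leq \e'/100$ for all sufficiently small $\e$. The lemma yields a level set $T'$ of $g = (1-\eta)\1_U + \eta G\1_U$ with $T' \cap S = \emptyset$, $|T'| \in [\Omega(\e^{0.3})\delta n,\; O(\e^{-0.4})\delta n]$, and $\Phi_G(T') \leq 1 - \Omega(\e^{0.4})$.

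The final step is to convert $T'$ into a set $T''$ of size exactly $\delta n$ disjoint from $S$ whose expansion is still strictly below $1-\e$, contradicting the almost $(\e,\delta)$ small-set expansion of $G$. If $|T'| \leq \delta n$, extend $T'$ arbitrarily within $[n]\setminus S$ to size $\delta n$: internal edge mass is preserved, so $\Phi_G(T'') \leq 1 - \Omega(\e^{0.4}) \cdot |T'|/\delta n \leq 1 - \Omega(\e^{0.7})$. If $|T'| > \delta n$, a standard averaging argument over random size-$\delta n$ subsets of $T'$ produces some $T'' \subseteq T'$ with $\Phi_G(T'') \leq 1 - \Omega(\e^{0.4}) \cdot \delta n / |T'| \leq 1 - \Omega(\e^{0.8})$. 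Either way $\Phi_G(T'') < 1-\e$ for small enough $\e$, contradicting almost-SSE. The main obstacle is purely quantitative bookkeeping: after the two Cheeger-style losses---one from \prettyref{lem:local-cheeger-2.5} itself, one from resizing $T'$ to exactly $\delta n$---the resulting exponents of $\e$ must all remain strictly below $1$. Picking $\eta = \e^{0.1}$ achieves this comfortably, so the challenge is careful handling of constants rather than any fundamentally new idea.
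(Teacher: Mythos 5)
Your proposal is correct and takes essentially the same route as the paper: split into a trivial small-$|T|$ case bounded by the spectral norm of $\Pi_{1/2}$, and a Cheeger-contradiction case for larger $|T|$ using the ``moreover'' clause of Lemma~\ref{lem:local-cheeger-2.5} with $R$ taken to be the unique non-expanding set. (Incidentally, the paper's own proof remarks that it ``proceeds as in the proof of Theorem~\ref{thm:resilience-hardness}'' rather than Lemma~\ref{lem:spectral-ub}, which matches what you actually do — one step of the walk, not $O(\log n)$.)

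The one genuine difference is a nice touch: you extend $T$ to a set $U$ of size exactly $\delta n$ disjoint from $S$ (using nonnegativity of $G$ to preserve the lower bound on $\|G\1_U\|^2$) \emph{before} invoking Lemma~\ref{lem:local-cheeger-2.5}. The paper instead applies the local Cheeger lemma directly to $\1_T$ with $|T|/n$ playing the role of $\delta$, so its output set has a size lower bound carrying an extra factor of $r = |T|/\delta n$, and its resizing step then pays $\Omega(r\e^{0.8})$ rather than your cleaner $\Omega(\e^{0.8})$. Your variant also sidesteps the minor awkwardness that the ``moreover'' clause of Lemma~\ref{lem:local-cheeger-2.5} is stated for $|S| = |R| = \delta n$, which your $|U| = |S| = \delta n$ matches exactly. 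One small inconsistency in the write-up: you declare the goal to be $\|G\1_T\|^2 \leq \e^{0.1}\delta n$ and then dispose of the case $|T| \leq 4\e^{0.1}\delta n$ via $\|G\1_T\|^2 \leq |T|$, which only gives $4\e^{0.1}\delta n$. The clean fix is to handle the trivial case by bounding $\1_T^\top\Pi_{1/2}\1_T \leq \|\1_T\|^2 = |T| \leq 4\e^{0.1}\delta n$ directly (as the paper does), reserving the $\Pi_{1/2} \preceq 4G^2$ reduction for the non-trivial case only; with that restated, all the exponents work out.
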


\begin{proof}
  We proceed as in the proof of Theorem~\ref{thm:resilience-hardness}.
  By definition, $\frac 1 {|T|} \sum_{i \in T} b_i = A^\top \cdot \frac {\sqrt n} {|T|} 1_T$ where $A$ has columns which are the eigenvectors of $G$ with eigenvalue at least $1/2$.
  We will combine two bounds, one for $|T| \ll \delta n$ and one for $|T| \approx \delta n$.

  Firstly, because $\|A\| \leq 1$, we have
  \[
  \Norm{\frac 1 {|T|} \sum_{i \in T} b_i}^2 = n \cdot \Norm{A^\top \frac{1_T}{|T|}}^2 \leq n \cdot \Norm{1_T / |T|}^2 = \frac n {|T|}\mper
  \]
  Let $r = r(\e,\delta)$ be a constant to be chosen later.
  If $|T| \leq r \delta n$, then we find 
  \[
    \Norm{\frac 1 {|T|} \sum_{i \in T} b_i } \cdot \frac{|T|}{n} \leq \sqrt{\frac{|T|} n} \leq \sqrt{r \delta}\mper
  \]

  Now we address sets with sizes in the range $|T| \in [ r \delta n, \delta n]$.
  Here we will use a local Cheeger inequality -- Lemma~\ref{lem:local-cheeger-2.5}.
  We are interested in
  \[
  \frac n {|T|^2} \iprod{1_T, \Pi_{1/2} 1_T} \leq \frac{4 n}{|T|^2} \|G 1_T\|^2\mper
  \]
  Picking $\eta = \e^{0.1}$,
  if $\|G 1_T\|^2 \geq \e^{0.1} \|1_T\|^2$ then there is a set $R$ of size in the range $|R| \in [c \e^{0.3} r \delta n, C \delta n / \e^{0.4}]$ for some universal constants $c,C$, with expansion $\Phi_G(R) \leq 1 - \Omega(\e^{0.4})$.
  Furthermore, $R \cap S = \emptyset$.

  By subsampling at random or adding vertices as necessary, we find that there is a set $R'$ of size $\delta n$ which does not overlap $S$ and has expansion $\Phi_G(R') \leq 1 - \Omega(r \e^{0.8})$.
  Choosing $r = \e^{0.1}$ and $\e$ sufficiently small, this violates that $G$ is an almost $(\e,\delta)$ small set expander.
  So it must be that $\|G 1_T\|^2 \leq \e^{0.1}\|1_T\|^2 \leq \e^{0.1} \delta n$.
  We therefore find that for $|T| \in [r \delta n, \delta n]$,
  \[
    \Norm{\frac 1 {|T|} \sum_{i \in T} b_i} \cdot \frac {|T|} n \leq \frac{|T|}{n} \cdot \frac{2 \sqrt n}{|T|} \cdot \e^{0.05} \sqrt{\delta n} = 2 \sqrt{\e^{0.1} \delta}\mper
  \]
\end{proof}

\begin{proof}[Proof of Theorem~\ref{thm:usseh-robust-mean}]
  We will analyze the following reduction from small-set expansion to robust mean estimation under resilience.
  Let $\beta^*$ be a small-enough absolute constant.
  (We can choose it later).
  Let $\alpha < \beta^*$.

  Given an $n$-node graph $G$ and parameters $\e, \delta > 0$, let $b_1,\ldots,b_n$ be the isotropic spectral embedding of $G$.
  Let $\mu$ be the output of an oracle for Problem~\ref{prob:resil-estimate} with parameters $\alpha, \beta^*, \delta/2$ on input $b_1,\ldots,b_n$.
  Let $e_1 \in \R^d = (1,0,0,\ldots,0)$ be the first standard basis vector.
  If $\| \mu - e_1\| > 2 \beta^* \sqrt{\delta}$ then output \textbf{yes}.
  Otherwise output \textbf{no}.

  We need to show that there exists $\e > 0$ such that for all $\delta > 0$ the following two statements hold:

  \textbf{Soundness: } If $\Phi_G(\delta) > 1-\e$ then $\|\mu - e_1\| \leq 2 \beta^* \sqrt{\delta}$.

  \textbf{Completeness: } If $G$ is an almost $(\e,\delta)$ small set expander then $\|\mu - e_1\| > 2 \beta^* \sqrt{\delta}$.

  We address the statements in turn, beginning with soundness.
  By the proof of Theorem~\ref{thm:resilience-hardness}, if $\Phi_G(\delta) > 1-\e$ then the uniform distribution on $\{b_1,\ldots,b_n\}$ is $(2\e^{1/8} \sqrt{\delta}, \delta)$-resilient.

  Hence every subset of $S$ of size $(1-\delta/2)n$ is also $(4 \e^{1/8} \sqrt{\delta},\delta/2)$-resilient.
  Fix one such subset $S$.
  By Fact~\ref{fact:spectral-embed-mean}, we have $\E_{i \sim [n]} b_i = e_1$.
  Hence by resilience, $\| \E_{i \sim S} b_i - e_1 \| \leq 2 \e^{1/8} \sqrt{\delta}$.
  By the guarantee of our robust mean estimation oracle, so long as $2 \e^{1/8} \sqrt{\delta} \leq \alpha$ then $\|\mu - \E_{i \sim S} b_i\| \leq \beta^* \sqrt{\delta}$.
  By triangle inequality,
  \[
  \| \mu - e_1 \| \leq \|\mu - \E_{i \sim S} b_i \| + \|\E_{i\sim S} b_i - e_1 \| \leq (\beta^* + 2 \e^{1/8}) \sqrt{\delta} \leq 2 \beta^* \sqrt{\delta}
  \]
  for small-enough $\epsilon = \epsilon(\alpha,\beta^*)$.

  Now we move on to completeness.
  Let $S \subseteq [n]$ be the $\delta n$-size subset of vertices with $\Phi_G(S) \leq \e$.
  Let $v = \E_{i \sim S} b_i$ and let $w = \E_{i \notin S} b_i$.
  Since (by Fact~\ref{fact:spectral-embed-mean}) we have $\E_{i \sim [n]} b_i = e_1$, simple calculations show that
  \[
  w = \frac{e_1 - \delta v}{1-\delta}\mper
  \]
  This rearranges to
  \[
  e_1 - w = \frac {\delta v + \delta e_1 }{1-\delta}\mper
  \]

  We first establish that the set $\{b_i\}_{i \notin S}$ is $(4 \e^{0.05} \sqrt{\delta},\delta/4)$-resilient.
  Let $R \subseteq [n] \setminus S$ have size at most $|R| \leq \delta n /2$.
  Then by Lemma~\ref{lem:unique-mean-shift}, we have
  \[
  \Norm{\frac 1 {|R|} \sum_{i \in R} b_i} \leq 2 \e^{0.05} \sqrt{\delta} \cdot \frac n {|R|}\mper
  \]
  Hence by triangle inequality we have
  \[
  \Norm{\frac 1 {|R|} \sum_{i \in R} b_i - w} \leq 2 \e^{0.05} \sqrt{\delta} \cdot \frac {n}{|R|} + \|w\|
  \]
  and so finally
  \[
  \Norm{\frac 1 {|R|} \sum_{i \in R} b_i - w} \cdot \frac {|R|}{n} \leq 2 \e^{0.05} \sqrt{\delta} + \delta \cdot \|w\|\mper
  \]
  It follows that $\{b_i\}_{i \notin S}$ is $(2 \e^{0.05} + \delta \|w\|, \delta/4)$-resilient.
  By Lemma~\ref{lem:unique-mean-shift}, $\|w\| \leq 2 \e^{0.05} \sqrt{1/\delta}$.
  So ultimately, $\{b_i\}_{i \notin S}$ is $(4 \e^{0.05} \sqrt{\delta}, \delta/4)$-resilient.

  Therefore, we must have that $\|\mu - w\| \leq O(\e^{0.05} \sqrt{\delta})$.
  At the same time, by Lemma~\ref{lem:spectral-lb}, we have $\|v\|^2 \geq 1/2\delta$, so $\|w -e_1\| \geq \Omega(\sqrt{\delta})$.
  So,
  \[
  \| \mu - e_1\| = \| (\mu - w) + (w - e_1)\| \geq \|w - e_1\| - \|\mu - w\| \geq \Omega(\sqrt{\delta}) - O(\e^{0.05} \sqrt{\delta})\mper
  \]
  So, for sufficiently small $\beta^*$ and $\e$, we find that for all $\delta$, $\|\mu - e_1\| > 2 \beta^* \sqrt{\delta}$.
\end{proof}

\section{Conclusion and Open Problems}
\label{sec:open-probs}

In this paper we give evidence from worst case complexity assumptions that improving existing algorithms for robust mean estimation may be hard.
These results are far from complete, however, and there are a number of very interesting open questions in this area.

The most natural question is whether or not we can show that improving current algorithms for robust mean estimation assuming bounded moments or resilience is impossible under $\SSEH$.
There are a number of interesting sub-questions:
\begin{itemize}
\item
Can the uniqueness assumption be removed in the proof that $\USSEH$ implies improved robust mean estimation under resilience is $\NP$-hard? As far as we are aware it could even be that $\SSEH$ and $\USSEH$ are equivalent -- are they?.
\item
Does $\SSEH$ or a variant (such as $\USSEH$) imply that improved robust mean estimation is hard under bounded moment assumptions?
Our current techniques are unable to prove this for $\USSEH$: they would require an analogue of Lemma~\ref{lem:spectral-ub} in the setting that $G$ is a graph which contains a unique small non-expanding set.
That lemma requires running a random walk on the graph $G$ for about $\log n$ steps; we do not know how to ensure that such a random walk avoids entering the small non-expanding set (or, if it does, how to control its behavior across the nonexpanding cut).
\end{itemize}

Another interesting question is whether or not these techniques can be used to show hardness for other questions in robust estimation, such as list learning~\cite{DBLP:conf/stoc/CharikarSV17}, or robust sparse mean estimation~\cite{balakrishnan2017computationally}.
(Unlike for the main problems addressed in this paper, SQ lower bounds for these are already known \cite{diakonikolas2017statistical,diakonikolas2018list}.)
We conjecture that the current spectral-based algorithms for these problems are optimal, even with additional assumptions on resilience or moments.

It is also interesting to ask whether $\SSEH$-type assumptions can be avoided all together.
In addition to showing that approximating the $2 \rightarrow 4$-norm is $\SSEH$-hard, the authors of \cite{DBLP:conf/stoc/BarakBHKSZ12} also show it is $\NP$-hard assuming the Exponential Time Hypothesis.
That proof does not appear to easily adapt to our setting, however, because it is not clear the instance of the $2\rightarrow 4$-norm problem it produces can be transformed into a distribution with sub-Gaussian moments as we require, nor can we easily control the kind of tail events we require to prove hardness of resilience.
Nonetheless, it seems plausible that hardness for some robust estimation problem could be shown under assumptions weaker than $\SSEH$.

\clearpage

\appendix


\section{Omitted Proofs from Section~\ref{sec:prelims}}


\subsection{Proofs of Local Cheeger Inequalities}

\begin{lemma}[Local Cheeger Inequality \cite{steurer-thesis}]
\label{lem:local-cheeger}
  For every $v \in \R^n$ there is a level set $S \subseteq V$ of the vector $w_i = v_i^2$ with $|S| \leq \delta n$ and expansion
  \[
  \Phi_G(S) \leq \frac{\sqrt{1 - \iprod{v, Gv}^2/ \|v\|^4}}{1 - \|v\|_1^2 / \delta n \|v\|^2}\mper
  \]
\end{lemma}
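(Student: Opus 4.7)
The plan is to follow the classical Cheeger rounding proof but to restrict the averaging over level-set thresholds to a regime where the level sets are automatically of size at most $\delta n$. For $t \geq 0$, let $S_t = \{i : v_i^2 \geq t\}$ and $\text{cut}(S) = \sum_{i \in S, j \notin S} G_{ij}$, so that $\Phi_G(S) = \text{cut}(S)/|S|$ for the stochastic random walk matrix $G$. The usual ``coarea'' identities give
\[
\int_0^\infty |S_t|\, dt = \|v\|^2 \qquad \text{and} \qquad \int_0^\infty \text{cut}(S_t)\, dt = \tfrac 12 \sum_{i,j} G_{ij} \Abs{v_i^2 - v_j^2}\mper
\]
The cut integral is handled by the usual trick: write $|v_i^2 - v_j^2| = |v_i - v_j| \cdot |v_i + v_j|$ and apply Cauchy--Schwarz with the identities $\sum_{i,j} G_{ij}(v_i - v_j)^2 = 2(\|v\|^2 - \iprod{v, Gv})$ and $\sum_{i,j} G_{ij}(v_i + v_j)^2 = 2(\|v\|^2 + \iprod{v, Gv})$, yielding the clean bound $\int_0^\infty \text{cut}(S_t)\, dt \leq \sqrt{\|v\|^4 - \iprod{v, Gv}^2}$.

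Next I would pick the specific truncation level $\tau = \|v\|_1^2 / (\delta n)^2$ and argue that only thresholds $t > \tau$ need to be considered. By Markov's inequality applied to $|v|$, for any $t > \tau$ we have $|S_t| = |\{i : |v_i| \geq \sqrt t\}| \leq \|v\|_1/\sqrt t < \delta n$, so every level set arising in the restricted range already satisfies the size constraint. For the denominator, $\int_\tau^\infty |S_t|\, dt = \|v\|^2 - \sum_i \min(v_i^2, \tau)$, and the AM--GM inequality $\min(v_i^2, \tau) \leq \sqrt{v_i^2 \cdot \tau} = |v_i| \sqrt \tau$, summed over $i$, gives $\sum_i \min(v_i^2, \tau) \leq \sqrt \tau\, \|v\|_1 = \|v\|_1^2/(\delta n)$. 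Hence $\int_\tau^\infty |S_t|\, dt \geq \|v\|^2 \Paren{1 - \|v\|_1^2/(\delta n \|v\|^2)}$. A standard averaging argument then produces some $t > \tau$ with $\Phi_G(S_t) \leq \bigparen{\int_\tau^\infty \text{cut}(S_t)\,dt}\big/\bigparen{\int_\tau^\infty |S_t|\,dt}$, and dividing numerator and denominator by $\|v\|^2$ recovers the stated bound.

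The only real cleverness is the choice of $\tau$. A naive choice, e.g.\ taking $\tau$ to be the $(\delta n)$-th largest of the $v_i^2$, produces only the weaker denominator of the form $1 - \|v\|_1/(\|v\|\sqrt{\delta n})$. Pinning $\tau$ to the Markov-tight value $\|v\|_1^2/(\delta n)^2$ and using the coordinate-wise AM--GM bound $\min(v_i^2,\tau) \leq |v_i|\sqrt\tau$ is precisely what makes the quadratic correction $1 - \|v\|_1^2/(\delta n \|v\|^2)$ emerge in the denominator, matching the lemma.
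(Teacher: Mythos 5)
Your proof is correct, and it recovers exactly the stated bound.  Let me briefly note how it relates to the argument the paper leans on.  The paper only cites this lemma from \cite{steurer-thesis}, but it reproduces Steurer's technique in the proof of Lemma~\ref{lem:local-cheeger-2.5}: there one draws a \emph{random} threshold $t\sim\mathrm{Unif}[0,1]$, sets $T_t=\{i:g_i^2\ge t\}$, and controls the size constraint in expectation by noting $\E_t\bigl[|T_t|^2\bigr]=\sum_{i,j}\min(g_i^2,g_j^2)\le\|g\|_1^2$ and then using $|T_t|\1_{|T_t|>K}\le |T_t|^2/K$ to subtract the contribution of oversized level sets from $\E_t|T_t|$.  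Your argument is a deterministic variant: instead of averaging over all thresholds and discounting the bad ones, you cut off the threshold range at $\tau=\|v\|_1^2/(\delta n)^2$ so that every level set in the remaining range is automatically of size below $\delta n$ by a pointwise Markov bound on $|v|$, and then you bound the discarded $\ell^2$ mass $\sum_i\min(v_i^2,\tau)\le\sqrt\tau\,\|v\|_1$.  Both computations hinge on the same $\min(a,b)\le\sqrt{ab}$ inequality --- Steurer applies it to pairs of coordinates $(v_i,v_j)$, you apply it to a coordinate and a fixed threshold $(v_i,\sqrt\tau)$ --- and both land on the same denominator $\|v\|^2-\|v\|_1^2/(\delta n)$.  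The cut integral is handled identically: the coarea formula plus Cauchy--Schwarz on $|v_i^2-v_j^2|=|v_i-v_j|\,|v_i+v_j|$, giving $\sqrt{\|v\|^4-\iprod{v,Gv}^2}$.  So the route you take is genuinely a different (and, I think, slightly cleaner) decomposition of the same calculation: you discard thresholds up front rather than averaging and subtracting; what it buys is that the size constraint $|S_t|\le\delta n$ holds with probability one over the remaining range rather than only in expectation, so no conditioning or indicator-truncation bookkeeping is needed.
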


\begin{proof}[Proof of Lemma~\ref{lem:local-cheeger-2}]
    We follow the proof in \cite{d-to-1-steurer}, keeping track of a factor of $1/\e^2$ missing in that proof; at the end we apply a standard sub-sampling reduction used in e.g. \cite{DBLP:conf/coco/RaghavendraST12}.

    First, dividing by $\|f\|_1$, we may assume that $\|f\|_1 = 1$; i.e. that $f$ is a probability vector.
    We will apply Lemma~\ref{lem:local-cheeger} to the distribution $g = (f + Gf)/2$.
    Clearly $\|g\|_1 = 1$.

    Since $G$ is contractive in $2$-norm, we have $\|g\|_2 \leq \|f\|_2$.
    But since $f, Gf$ are nonnegative, also $\|g\|_2 \geq \|f\|_2/2$.

    Finally, consider
    \[
    \iprod{g,Gg} = \iprod{f, Gf} + 2 \iprod{f, G^2 f} + \iprod{f, G^3 f} \geq 2 \|Gf\|^2 \geq 2 \e \|f\|^2
    \]
    where we used that $\iprod{f, G^3 f}, \iprod{f, Gf} \geq 0$ by nonnegativity, and we used our hypothesis on $\|Gf\|^2$.
    Plugging these bounds into Lemma~\ref{lem:local-cheeger}, we find that there is a level set $S$ of $g$ having size at most $\delta n / (\gamma \e^2)$ such that
    \[
    \Phi_G(S) \leq \frac{\sqrt{ 1 - \iprod{g, Gg}^2 / \|g\|^4}}{1 - 20 \e^2 /(\delta n \|g\|^2)} \leq \frac{\sqrt{ 1- 4 \e^2}}{1- 20\e^2} \leq 1 - \Omega(\e^2)\mper
    \]

    Let $T$ be a random subset of $S$ of size $\delta n$.
    A simple computation shows that
    \[
      \E_{T} (1 - \Phi_G(T)) \geq \gamma \e^2 (1 - \Phi_G(S)) \geq \Omega(\gamma \e^4)\mper 
    \]
    as claimed.
\end{proof}

\begin{proof}[Proof of Lemma~\ref{lem:local-cheeger-2.5}]
We begin by proving the statement prior to the ``moreover,'' then we describe how the proof may be slightly altered in the case that $G$ contains a small non-expanding set.

Our proof proceeds very similarly to the proof in~\cite{steurer-thesis}.
Let $c, C$ be constants to be determined later.
Let $T_t$ be a random subset drawn from the following distribution: first, $t$ is drawn uniformly from $[0, 1]$, then $T_t = \{ i \in [n]: g_i^2 \geq t \}$.
We first establish a number of properties of this distribution.
Observe that since $|f_i| \leq 1$ for all $i$, then since $G$ is a random walk matrix, $|G f_i| \leq 1$ for all $i$ as well, and so $g_i^2 \leq 1$ for all $i$.
Therefore, by a simple calculation, we have that
\[
\E_t \Brac{\Abs{T_t}} = \sum_{i \in [n]} g_i^2 = \Norm{g}^2 \; .
\]
We also have that $\Norm{g} ^2 \geq (1 - \eta)^2 \Norm{ f}^2 = (1 - \eta)^2 \delta n$.
Moreover, if $t \leq (1 - \eta)^2$, $S \subseteq T_t$ and hence $|T_t| \geq \delta n$.
Therefore 
\begin{equation}
\label{eq:lc-prob-lb}
\Pr_t \Brac{|T_t| < \delta n} \leq 2 \eta \; .
\end{equation}
For any $U, V \subseteq [n]$, let $G(U, V) = \Pr_{(i, j) \sim G} [i \in U, j \in V]$ be the fraction of edges going from $U$ to $T$, so that $\Phi_G (U) = n G(U, [n] \setminus U) / |U|$. 

Then, by the same calculations as those done in~\cite{steurer-thesis}, we still have the following three inequalities:
\begin{align}
\E_t \Brac{|T_t|^2} &\leq \Norm{g}_1^2 \; , \label{eq:lc-l2-bound}\\
\E_t \Brac{|T_t| \1_{|T_t| > C \delta n / (\eta \eps)^2}} &\leq \frac{\eta^2 \eps^2}{C \delta n} \E_t \Brac{|T|^2} \; , \label{eq:lc-trunc-bound} \\
n \cdot \E_t G(T_t, [n] \setminus T_t) &\leq \Norm{g}^2 \sqrt{1 - \iprod{g, G g}^2 / \Norm{g}^4} \label{eq:lc-expansion-bound} \; .
\end{align}
We now specialize each of these three inequalities to our setting.
Observe that $f$ is nonnegative and satisfies $\Norm{f}_1 = |S| = \delta n$, and so because $G$ is a random walk matrix, we have $\Norm{g}_1 = |S|$, and so~\eqref{eq:lc-l2-bound} simply becomes   
\[
\E_t \Brac{|T_t|^2} \leq (\delta n)^2 \; .
\]
Plugging this bound into~\eqref{eq:lc-trunc-bound} yields that 
\begin{equation}
\label{eq:lc-trunc-bound2}
\E_t \Brac{|T_t| \1_{|T_t| > C \delta n / (\eta \eps)^2}} \leq \frac{\eps^2 \eta^2}{C} \delta n \; .
\end{equation}
Finally, observe that
\begin{align}
\label{eq:lc-quadform-lb}
\iprod{g, Gg} &= (1 - \eta)^2 \iprod{f, Gf} + 2 \eta (1 - \eta) \Norm{G f}^2 + \eta^2 \iprod{f, G^3 f} \nonumber \\
&\stackrel{(a)}{\geq} 2 \eta (1 - \eta) \Norm{G f}^2 \nonumber \\
&\stackrel{(b)}{\geq} 2 \eta (1 - \eta) \eps \Norm{f}^2 \; ,
\end{align}
where (a) follows from the nonnegativity of $f$, and (b) follows from assumption.
Moreover $\Norm{g}^2 \leq \Norm{f}^2 = \delta n$ since $G$ is contractive in $\ell_2$.
Thus~\eqref{eq:lc-expansion-bound} simplifies in our setting to give
\begin{equation}
\label{eq:lc-expansion-2}
n \cdot \E_t  G(T_t, [n] \setminus T_t) \leq \Norm{g}^2 \sqrt{1 - (2 \eta (1 - \eta) \eps)^2} = \Norm{g}^2 (1 - \Omega(\eta^2 \eps^2)) \; .
\end{equation}

Now, let $T^*$ be the level set of $g^2$ with size in the range $I = [c \eta \eps^2 \delta n, C \delta n / (\eta \eps)^2]$ with minimal $\Phi (T^*)$.
Since $\Phi (T) = n \cdot \frac{G(U, [n] \setminus U)}{|U|}$, we have
\begin{align*}
\Phi (T^*) &\leq n \frac{\E_t G(T_v, [n] \setminus T_t)}{\E_t |T_t| \1_{|T_t| \in I}} \\
&\leq n \frac{\E_t G(T_v, [n] \setminus T_t)}{\E_{t} |T_t|\1_{|T_t| \in I}} \\
&= n \frac{\E_t G(T_v, [n] \setminus T_t)}{\E_{t} |T_t| - \E_{t} |T_t| \1_{|T_t| < c \eta \eps^2 \delta n} - \E_t |T_t| \1_{|T_t| > C \delta n / (\eta \eps)^2}} \\
&\stackrel{(a)}{\leq} n\frac{\E_t G(T_v, [n] \setminus T_t)}{\E_{t} |T_t| - c' \eta^2 \eps^2 \delta n} \\
&\stackrel{(b)}{\leq} \frac{\Norm{g}^2 (1 - O(\eta^2 \e^2))}{\Norm{g}^2(1 - 2 \eps^2 \eta^2)} \\
&\leq 1 - O(\eta^2 \eps^2) \; ,
\end{align*}
for an appropriate choice of $c$ sufficiently small and $C$ sufficiently large.
Here (a) follows from~\eqref{eq:lc-trunc-bound2}, and since 
\begin{equation}
\label{eq:lc-not-too-small}
\E_{t} |T_t| \1_{|T_t| < c \eta \eps^2 \delta n} \leq c \eta \eps^2 \delta n \Pr \Brac{|T_t| < c \eta \eps^2 \delta n} \leq c \eta^2 \eps^2 \delta n
\end{equation}
by~\eqref{eq:lc-prob-lb}, and (b) follows since $\| g \|_2^2 \geq (1 - \eta)^2 \| f \|^2 = (1 - \eta)^2 \delta n$.
This completes the proof, except for the ``moreover'' statement, proved below.
\end{proof}

\begin{proof}[Proof of Lemma~\ref{lem:local-cheeger-2.5}, ``moreover'' part]
Suppose that $G$ contains a set $R$ as described in the lemma statement.
We describe how the preceeding proof may be altered to ensure that $T \cap R = \emptyset$.

The idea is to replace the function $g$ with the function $g' = \Pi_{\overline R}g$, the projection of $g$ to the coordinates outside $R$.
The random thresholding procedure is applied to the coordinates of $g'$ to produce the set $T$; because $g'$ is supported off of $R$ it holds that $T \cap R = \emptyset$ with probability $1$.

We now verify that properties of $g$ used above also apply to $g'$.
Since $f$ is supported off of $R$, \eqref{eq:lc-prob-lb} continues to hold.
Equations \eqref{eq:lc-l2-bound}, \eqref{eq:lc-trunc-bound}, \eqref{eq:lc-expansion-bound} hold for any choice of $g$ and hence in particular for $g'$.

Because $\|g'\|_1 \leq \|g\|_1$, we obtain \eqref{eq:lc-trunc-bound2} when $T$ is chosen according to the thresholding procedure on $g$.

We need to lower bound $\iprod{g', Gg'}$ to obtain an analogue of \eqref{eq:lc-quadform-lb}.
By expanding, we find
\[
\iprod{g', G g'} = \iprod{g, G g} + 2\iprod{g' - g, Gg'} + \iprod{ g'-g, G(g'-g)}\mper
\]

Because $\Phi_G(R) \leq \e/10$ and $|S| = |R| = \delta n$, we obtain that $\|g'-g\|_1 = \| \Pi_R g\|_1 \leq \eta \e \delta n/10$.
And because as noted before $|g_i| \leq 1$ for all $i$, we have $\| g' \|_\infty \leq 1$.
So $|\iprod{g'-g, Gg'}| \leq \eta \e \delta n$; the same argument applies to $|\iprod{g'-g, G (g'-g)}|$.
And we proved above that $\iprod{g,Gg} \geq 2 \eta(1-\eta) \e \|f\|^2$.
We may assume $\eta \leq 1/2$, so it follows that $\iprod{g', G g'} \geq \eta (1-\eta) \e \|f\|^2$.
Thus up to a factor of $2$, we obtain the analogue of \eqref{eq:lc-quadform-lb} for $g'$ in place of $g$.

Since $\|g'\|^2 \leq \|g\|^2$, we also obtain
\[
n \E_t G(T_t, [n] \setminus T_t) \leq \|g'\|^2 \sqrt{1- (\eta (1-\eta)\e^2)} = \|g\|^2 (1 - \Omega(\eta^2 \e^2))
\]
as in \eqref{eq:lc-expansion-2}.

Finally, since $\Pi_{\overline R} f = f$ (since $f$ is supported off of $R$), it still holds that $\Pr[|T_t| < c \eta \e^2 \delta n] \leq \eta$ as in \eqref{eq:lc-not-too-small}.
The rest of the proof goes through unchanged.
\end{proof}

\subsection{Equivalence of Moments and Mean Shifts}
\label{sec:means-to-moments}

We will repeatedly use the following elementary fact, which proves a near equivalence of moment bounds and mean shifts for $\R$-valued random variables.

\begin{fact}
\label{fact:means-to-moments}
Let $X$ be a $\R$-valued random variable with mean zero, and let $q \geq 1$.
Then:
\begin{itemize}
\item {\bf Moment bounds implies bounded deviation} Suppose $\E |X|^q$ is finite.
Then for any event $A$, we have
$\Abs{ \E X \, | A } \leq \Paren{\frac{\E |X|^q}{\Pr [A]}}^{1/q}$.
\item {\bf Bounded deviation implies moment bounds} For any $p$, let
$C_{p} = \sup_A \Pr [A] \cdot \Abs{ \E X \, | A }^{p}$.
For every $p > q$,
$\E |X|^q \leq (2 C_p)^{q/p} \cdot \frac p {p-q}$.
\end{itemize}
\end{fact}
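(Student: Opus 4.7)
The plan is to handle the two parts separately, using standard real-analysis manipulations. For the first part, the approach is to apply Hölder's inequality. Write $\Abs{\E X \mid A} = \Abs{\E[X \1_A]}/\Pr[A] \leq \E[|X| \1_A]/\Pr[A]$, and bound the numerator by Hölder with conjugate exponents $q$ and $q/(q-1)$ to get $\E[|X|\1_A] \leq (\E|X|^q)^{1/q} \cdot \Pr[A]^{1-1/q}$. Dividing by $\Pr[A]$ yields exactly $(\E|X|^q/\Pr[A])^{1/q}$.

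For the second part, the strategy is to first convert the bounded-deviation hypothesis into a tail bound and then integrate via the layer-cake formula. Apply the hypothesis to the events $A_t^+ = \{X > t\}$ and $A_t^- = \{X < -t\}$: since $\Abs{\E X \mid A_t^\pm} \geq t$, we obtain $t^p \Pr[A_t^\pm] \leq C_p$, and a union bound gives the tail estimate $\Pr[|X| > t] \leq 2 C_p / t^p$.

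Now integrate using $\E|X|^q = \int_0^\infty q t^{q-1} \Pr[|X| > t] \, dt$. Split the integral at a threshold $T$ to be chosen: on $[0,T]$ bound $\Pr[|X|>t] \leq 1$, contributing at most $T^q$; on $[T, \infty)$ use the tail bound, contributing
\[
\int_T^\infty q t^{q-1} \cdot \frac{2 C_p}{t^p} \, dt = \frac{2 q C_p}{p-q} T^{q-p}.
\]
Optimizing over $T$ (setting $T = (2 C_p)^{1/p}$ makes the two contributions comparable) yields $\E|X|^q \leq (2 C_p)^{q/p}(1 + q/(p-q)) = (2 C_p)^{q/p} \cdot p/(p-q)$, as claimed.

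Neither step presents a real obstacle; this is essentially a clean exercise in Hölder plus layer-cake, and the only bookkeeping to watch is the optimization of $T$ and the combinatorial identity $1 + q/(p-q) = p/(p-q)$ that produces the stated constant.
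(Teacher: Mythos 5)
Your proposal is correct and follows essentially the same route as the paper: H\"older's inequality for the first part, and for the second part the tail bound $\Pr[|X|>t]\leq 2C_p/t^p$ (obtained from the events $\{X>t\}$ and $\{X<-t\}$) followed by the layer-cake integral split at $T=(2C_p)^{1/p}$. The paper's version of the integral is $\int_0^\infty \Pr[|X|^q\geq s]\,ds$ rather than your $\int_0^\infty q t^{q-1}\Pr[|X|>t]\,dt$, but these are the same after the substitution $s=t^q$, and the resulting constant $(2C_p)^{q/p}\cdot p/(p-q)$ comes out identically.
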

\noindent

\begin{proof}[Proof of Fact~\ref{fact:means-to-moments}]
We first prove the first implication.
By Holder's inequality, we have
\[
\Abs{ \E X 1_A } &\leq \Paren{\E \Abs{X}^q}^{1/q} \Pr [A]^{1 - 1/q} \; ,
\]
and so
\[
\Abs{\E X \, | A} = \frac{1}{\Pr [A]} \left| \E X 1_A \right| \leq \Paren{\frac{\E \Abs{X}^q}{\Pr [A]}}^{1/q} \; ,
\]
as claimed.

We now turn to the second implication.
For any $t \geq 0$,
\[
\Pr[|X| \geq t] = \Pr[X \geq t] + \Pr[X \leq -t] \leq \frac{C_{p}}{\Abs{\E X \, | \, X \geq t}^{p}} + \frac{C_{p}}{\Abs{ \E X \, | \, X \leq -t}^{p}} \leq \frac{2 C_{p}}{t^{p}}\mper
\]

Recall that $\E |X|^q = \int_0^\infty \Pr[ |X|^q \geq s ] \, ds$.
We will split this integral into two parts, because we know two different bounds on $\Pr [ |X|^q \geq s]$.
First of all, for any $s$ we have $\Pr[|X|^q \geq s] \leq 1$
Second of all, when $s > (2C_p)^{q/p}$ a better bound is given by $\Pr[|X|^q \geq s] \leq 2 C_p / s^{p/q} < 1$.
So,
\[
\E |X|^q = \int_0^\infty \Pr[ |X|^q \geq s] \, ds \leq \int_0^{(2C_p)^{q/p}} 1 \, ds + \int_{(2C_p)^{q/p}}^\infty \frac{2 C_p}{s^{p/q}} \, ds\mper
\]
The first integral is just $(2C_p')^{q/p}$.
The second is
\[
  \int_{(2C_p')^{q/p}}^\infty \frac{2 C_p'}{s^{p/q}} \, ds = \frac 1 {\tfrac pq - 1} \cdot [(2C_p)^{q/p}]^{-p/q + 1}
\]
so long as $p > q$.
(Otherwise the integral does not exist.)

Putting these together,
\[
\E |X|^q \leq (2C_p)^{q/p} + \frac 1 {\tfrac pq - 1} \cdot 2 C_p \cdot [(2C_p)^{q/p}]^{-p/q + 1} = (2C_p)^{q/p} \cdot \Paren{1 + \frac 1 {(\tfrac pq - 1)}}\mper
\]
Finally, note that $1 + 1/(\tfrac pq-1) = \tfrac{p-q}{p-q} + \tfrac{q}{p-q} = \tfrac p {p-q}$, which finishes the proof.
\end{proof}

\noindent
As a simple corollary of this, we observe that moment bounds are equivalent to resilience ``at every scale''.
For simplicity of exposition, we will state and prove the claim for $\ell_2$ norm, however, the claim holds much more generally as well.
This gives a novel characterization of resilience which may be of independent interest.
\begin{corollary}
\label{cor:resilience-to-moments}
Let $X$ be an $\R^d$-valued random variable with mean $\E X = \mu$, and let $q \geq 1$.
Then:
\begin{itemize}
\item {\bf Moment bounds imply multi-scale resilience} Suppose there exists a constant $C > 0$ so that $\E \iprod{v, X}^q \leq C$ for all unit vectors $v$.
Then, $X$ is  $(\frac{2 C^{1/q}}{\delta^{1/q - 1}},\delta)$-resilient for all $\delta \leq 1/2$.
\item {\bf Multi-scale resilience implies moment bounds} Let $p > q$, and let $C_p$ be so that $X$ is  
$(\frac{C_p^{1/p}}{\delta^{1/p - 1}},\delta)$-resilient for all $\delta \leq 1/2$.
Then, 
\[
\E \Abs{\iprod{v, X - \mu}}^q \leq (2 C_p)^{q / p} \frac{p}{p - q}
\]
for all unit vectors $v \in \R^d$.
\end{itemize}
\end{corollary}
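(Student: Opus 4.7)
The proposal is to reduce both implications to the one-dimensional Fact~\ref{fact:means-to-moments} by slicing along an arbitrary unit vector $v$. Fix such $v$ and set $Y_v = \iprod{v, X - \mu}$, which is a real-valued mean-zero random variable. Observe that for any event $A$, Cauchy--Schwarz (or Jensen) yields
\[
 \Abs{\E Y_v \,|\, A} \leq \Norm{\E X \,|\, A - \mu}\mcom
\]
with equality achieved by the direction $v = (\E X|A - \mu)/\|\E X|A - \mu\|$. This lets us move freely between the vector-valued statement about $\|\E X|A - \mu\|$ and the scalar-valued statement about $|\E Y_v|A|$ after taking a supremum over unit $v$.

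For the first implication, assume $\E |Y_v|^q = \E |\iprod{v, X - \mu}|^q \leq C$ for every unit $v$. The first part of Fact~\ref{fact:means-to-moments} then gives $|\E Y_v | A| \leq (C/\Pr A)^{1/q}$ for every event $A$ and every $v$; taking the supremum over $v$ shows $\|\E X | A - \mu\| \leq C^{1/q}/(\Pr A)^{1/q}$. To match the form required by resilience, I would rewrite this as $\|\E X | A - \mu\| \leq \sigma(\delta)\cdot (1-\Pr A)/\Pr A$ with $\sigma(\delta) = 2C^{1/q}/\delta^{1/q-1}$, which reduces to checking the elementary inequality $(\Pr A/\delta)^{1-1/q} \leq 2(1 - \Pr A)$ whenever $\Pr A \leq \delta \leq 1/2$. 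This holds because the left side is at most $1$ and the right side is at least $1$.

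For the second implication, I would argue contrapositively through the constant $C^* := \sup_A \Pr A \cdot |\E Y_v|A|^p$ appearing in the second part of Fact~\ref{fact:means-to-moments}. Given multi-scale resilience at parameter $C_p^{1/p}/\delta^{1/p-1}$, for any event $A$ with $\Pr A \leq 1/2$ I apply resilience with $\delta = \Pr A$ and the inequality $|\E Y_v|A| \leq \|\E X|A - \mu\|$ to get
\[
 \Pr A \cdot \Abs{\E Y_v \,|\, A}^p \leq \Pr A \cdot \frac{C_p (1-\Pr A)^p}{(\Pr A)^p \cdot \Pr A^{1 - p}} = C_p (1 - \Pr A)^p \leq C_p\mper
\]
Events with $\Pr A > 1/2$ are handled by a complement argument: since $\E Y_v = 0$, $\E Y_v | A = -(\Pr A^c/\Pr A) \E Y_v | A^c$, so $\Pr A \cdot |\E Y_v | A|^p = (\Pr A^c)^p/(\Pr A)^{p-1} \cdot |\E Y_v | A^c|^p \leq \Pr A^c \cdot |\E Y_v | A^c|^p$, reducing to the previous case. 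Thus $C^* \leq C_p$, and the second part of Fact~\ref{fact:means-to-moments} yields $\E|Y_v|^q \leq (2C_p)^{q/p} \cdot p/(p-q)$, uniformly in $v$.

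I expect essentially no conceptual obstacles here since everything is packaged inside Fact~\ref{fact:means-to-moments}; the only care needed is the bookkeeping to convert between the ``$(1-\Pr A)/\Pr A$'' form of resilience and the cleaner tail form $\|\E X|A - \mu\| \leq (C/\Pr A)^{1/q}$, plus the auxiliary step of reducing large-probability events to their complements in part two. Both are short algebraic checks rather than genuine difficulties.
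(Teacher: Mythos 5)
Your proof is correct and follows essentially the same route as the paper: both directions reduce to the one-dimensional Fact~\ref{fact:means-to-moments} by slicing along a unit vector, with the same Cauchy--Schwarz identification between the vector mean-shift $\|\E X\,|\,A - \mu\|$ and $\sup_v |\E\iprod{v,X-\mu}\,|\,A|$, the same complement trick for events of probability exceeding $1/2$, and the same elementary algebra (in the first implication the paper just chains the inequalities directly rather than isolating the check $(\Pr A/\delta)^{1-1/q}\le 2(1-\Pr A)$, but the content is identical). No gaps.
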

\begin{proof}
We first prove the first implication.
Let $v$ be an arbitrary unit vector, let $\delta \in (0, 1/2)$ and let $A$ be an event with $\Pr [A] \leq \delta$.
Then, by our assumption and Fact~\ref{fact:means-to-moments}, we know that
\[
\Abs{\E \iprod{v, X} | A  - \iprod{v, \mu}  } \leq \Paren{\frac{C}{\Pr A}}^{1/q} = \frac{C^{1/q}}{\Pr[A]^{1/q - 1}} \cdot \frac{1}{\Pr A} \leq \frac{C^{1/q}}{\delta^{1/q - 1}} \cdot \frac{1}{\Pr A} \leq \frac{2 C^{1/q}}{\delta^{1/q - 1}} \frac{1 - \Pr A}{\Pr A} \; .
\]
Taking a supremum of this inequality over all unit vectors $v$ immediately yields the desired bound.

We now prove the other direction.
For any unit vector $v \in \R^d$, and any event $A$ with $\Pr A \leq 1/2$, and by our assumption of resilience (taking $\delta = \Pr A$), we have
\begin{equation}
\label{eq:resilience-to-moments}
\Pr[A] \cdot \Abs{\E \iprod{v, X} | A - \iprod{v, \mu}}^p \leq \Pr[A] \cdot \Norm{\E X | A - \mu}^p \leq C_p \; .
\end{equation}
Moreover, for any event $A$ with $\Pr A > 1/2$, we also have
\[
\Pr[A] \cdot \Abs{\E \iprod{v, X} | A - \iprod{v, \mu}}^p = \Pr[A] \cdot \Paren{\frac{\Pr A^c}{\Pr A}}^{p - 1} \Pr[A^c] \cdot \Abs{\E \iprod{v, X} | A^c - \iprod{v, \mu}}^p \leq C_p \; ,
\]
where the last inequality follows from~\eqref{eq:resilience-to-moments}.
Thus, by Fact~\ref{fact:means-to-moments}, we have
\[
\E \Abs{\iprod{v, X - \mu}}^q \leq (2 C_p)^{q / p} \frac{p}{p - q} \; ,
\]
as claimed.
\end{proof}
\noindent
We briefly remark that to generalize this statement to more general norms $\| \cdot \|$, it suffices to have the moment bound be taken over all unit vectors over the dual norm $\| \cdot \|^*$.
The proof is a fairly standard generalization of this argument and we omit the proof for simplicity of exposition.

\section{Omitted Proofs from Section~\ref{sec:spectral-bounds}}
\begin{proof}[Proof of Lemma~\ref{lem:spectral-lb}]
  Let $B = \sqrt{n} A$.
  We start by expanding:
  \[
  \Norm{ \frac 1 {|T|} \sum_{i \in T} b_i}^2 = \frac 1 {|T|^2} \cdot \1_T^\top BB^\top \1_T^\top = \frac n {|T|^2} \1_T^\top \Pi_{1/2} \1_T\mper
  \]
  Let $v_1,\ldots,v_n$ be the eigenvectors of $G$, with associated eigenvalues $\lambda_1,\ldots,\lambda_n$.
  Since $G$ is stochastic, $|\lambda_i| \leq 1$.
  So for any vector $v$ we have
  \[
    v^\top \Pi_{1/2} v = \sum_{i \, : \, \lambda_i \geq 1/2} \iprod{v,v_i}^2 \geq \sum_{i=1}^n \lambda_i \iprod{v,v_i}^2 - \frac 12 \cdot \|v\|^2 = v^\top G v - \frac 12 \cdot \|v\|^2 \mper
  \]
  Putting it together,
  \[
  \Norm{ \frac 1 {|T|} \sum_{i \in T} b_i}^2 \geq \frac n {|T|^2} \cdot \Paren{\1_T^\top G \1_T - \frac 12 \cdot |T|} = \frac n {|T|^2} \cdot |T| \cdot \Paren{\frac 12 - \Phi_G(T)} \mper
  \]
\end{proof}

\section{Sketch of random-walk rounding for analytically sparse vectors}
\label{sec:sparse-round}

In this section we describe the proof of Theorem~\ref{thm:2-to-4-rounding}.
The key is the following lemma, which says that if $w$ is a vector in the high eigenspaces of $G$ with $\|w\|_4^4 \geq 1/\delta n$, then there is a level set $S$ of $|w|$ (applying the absolute value function coordinate-wise) containing $O(\delta n)$ coordinates such that $\|\frac 1 {|S|} \sum_{i \in S} b_i\|$ is large.
The rest of the proof follows the same argument in Lemma~\ref{lem:spectral-ub}, showing that as the random walk is run from initial distribution over $S$, it must eventually encounter a small cut with imperfect expansion or else it would violate the local Cheeger inequality.

\begin{lemma}
Let $G$ a graph with isotropic spectral embedding $b_i, \ldots, b_n$ and corresponding uniform distribution $D$.
Suppose there exists a unit vector $v$ so that $\E\iprod{v, X}^4 \geq \frac{1}{\delta}$.
Then, there exists $t > 0$ so that if we let $S = \{i : \Abs{\iprod{v, X}} > t \}$, then $|S| \leq O(\delta n)$ and $\Abs{\E \iprod{v, X} | E} \geq \Omega \left( \frac{1}{\delta^{1/2} \log^{1/4} 1 / \delta} \right)$.
\end{lemma}
\begin{proof}
By Fact~\ref{fact:means-to-moments}, we know that for all $p > 4$ there exists some event $A$ so that 
\begin{equation}
\label{eq:tail-level-set}
\Pr[A]^{4/p} \Abs{\E \iprod{v, X} | A}^4 \cdot \frac{p}{p - 4} \geq \frac{1}{\delta} \; .
\end{equation}
Let $A_p$ be the set which achieves the largest value for the LHS in~\eqref{eq:tail-level-set}. 
Without loss of generality, we may take $A_p$ to be of the form $A_p = \{i : \Abs{\iprod{v, X}} > t_p \}$ for some $t_p > 0$, since such sets maximize the mean shift in the direction $v$.


Because $\E \iprod{v, X}^2 = 1$, by Fact~\ref{fact:means-to-moments}, we must have $\Abs{\E \iprod{v, X} | A_p} \leq \frac{1}{\sqrt{\Pr [A_p]}}$.
Thus, $\Pr [A_p] \leq \delta^{p / (2p - 4)}$, as otherwise we would have
\[
\Pr[A_p]^{4/p} \cdot \Abs{\E \iprod{v, X} | A_p}^4 &\leq \frac{1}{\Pr [A_p]^{2 - 4/p}} < \frac{1}{\delta} \; ,
\]
which contradicts our choice of $A_p$.
This implies that for all $p > 4$, we have
\[
\delta^{4/(2p - 4)} \Abs{\E \iprod{v, X} | A_p}^4 \cdot \frac{p}{p - 4} \geq \frac{1}{\delta}
\]
For $p \leq 6$, if we let $q = p - 2$ we have that
\begin{align*}
\delta^{4/(2p - 4)} \frac{p}{p - 4} \geq \delta^{2/q} \frac{4}{q - 2} \; ,
\end{align*}
so optimizing over $q > 2$ and using Fact~\ref{fact:s-to-r-bound} (see below) yields that by choosing $q = 2 \frac{\log 1 / \delta}{\log 1/\delta - 1}$, we obtain that
\[
\Abs{\E \iprod{v, X} | A_{q + 2}}^4 \geq \frac{O(1)}{\delta^2 \log 1 / \delta} \; .
\]
Finally, in this case, we have
\[
\Pr A_{q + 2} \leq \delta^{1 + 1/(2 \log 1 / \delta)} = O(\delta) \; .
\]
This completes the proof of the lemma.
\end{proof}

\begin{fact}
\label{fact:s-to-r-bound}
Let $x \in (0, 1)$, and let $r \geq 2$.
Then we have
\[
\min_{s > r} x^{r/s} \frac{s}{s - r} \leq e x \log \frac{1}{x} \; ,
\]
and the minimum is attained at $s = r \cdot \frac{\log 1/x}{\log 1 / x - 1}$.
\end{fact}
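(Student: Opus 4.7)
The plan is to treat this as a one-variable calculus problem and locate the unique interior critical point of $f(s) := x^{r/s} \cdot s/(s-r)$ on $(r,\infty)$.

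First I would pass to logarithms: writing $L := \log(1/x) > 0$, define
\[
g(s) \;=\; \log f(s) \;=\; -\frac{rL}{s} + \log s - \log(s-r).
\]
Differentiating termwise gives
\[
g'(s) \;=\; \frac{rL}{s^2} + \frac{1}{s} - \frac{1}{s-r} \;=\; \frac{rL}{s^2} - \frac{r}{s(s-r)} \;=\; \frac{r}{s}\!\left(\frac{L}{s} - \frac{1}{s-r}\right).
\]
Setting $g'(s)=0$ reduces (after clearing the positive factor $r/s$) to $L(s-r)=s$, i.e.\ $s(L-1)=rL$, whose unique solution in $(r,\infty)$ is
\[
s^\star \;=\; r \cdot \frac{L}{L-1} \;=\; r \cdot \frac{\log (1/x)}{\log(1/x) - 1}.
\]
This is well-defined and lies in $(r,\infty)$ precisely when $L > 1$, i.e.\ when $x < 1/e$, which is the regime in which the fact is invoked (the proof of the preceding lemma applies it with $x=\delta$ small).

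Next I would verify that $s^\star$ is a minimizer by a simple sign check on $g'$: the factor $L/s - 1/(s-r)$ is negative for $s<s^\star$ and positive for $s>s^\star$ (since $L/s$ decreases in $s$ while $1/(s-r)$ also decreases but faster near $r$ and slower at infinity — explicitly, the equation $L(s-r)=s$ has exactly one root, and at $s\to r^+$ one has $g'\to -\infty$ while $g'(s)\to 0^+$ as $s\to\infty$). Hence $f$ attains its minimum on $(r,\infty)$ at $s^\star$.

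Finally I would plug $s^\star$ back in. Observe $r/s^\star = (L-1)/L = 1 - 1/L$, so
\[
x^{r/s^\star} \;=\; x \cdot x^{-1/L} \;=\; x \cdot \exp\!\Bigl(-\tfrac{1}{L}\log x\Bigr) \;=\; x \cdot e,
\]
because $-\tfrac{1}{L}\log x = 1$ by definition of $L$. Also $s^\star - r = r/(L-1)$, so $s^\star/(s^\star - r) = L = \log(1/x)$. Multiplying,
\[
f(s^\star) \;=\; (ex)\cdot \log(1/x) \;=\; e\,x\,\log(1/x),
\]
which matches the claimed bound with equality. The only mildly subtle point — the main ``obstacle'', such as it is — is the second-derivative/monotonicity check that $s^\star$ is indeed the global minimum on $(r,\infty)$ rather than a saddle; everything else is a short direct computation.
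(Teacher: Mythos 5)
Your proof is correct and follows essentially the same route as the paper's: take logarithms, differentiate, solve for the critical point $s^\star = r\log(1/x)/(\log(1/x)-1)$, and substitute back. The only addition is your (correct) observation that this requires $\log(1/x) > 1$, i.e.\ $x < 1/e$, together with the sign check on $g'$ — points the paper leaves implicit since the fact is applied with $x = \delta$ small.
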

\begin{proof}[Proof of Fact~\ref{fact:s-to-r-bound}]
By monotonicity of logarithm, it suffices to find the minimizer of the function
\[
f(s) = \frac{r}{s} \log x + \log s - \log (s - r) \; .
\]
Taking derivatives, we find that
\[
f'(s) = - \frac{r}{s^2} \log x + \frac{1}{s} - \frac{1}{s - r} \; .
\]
Thus solving for $f' (s) = 0$, the minimizer of $f$ must satisfy
\[
\frac{r}{s^2} \log \frac{1}{x} = \frac{r}{s (s - r)} \; ,
\]
or equivalently $s / (s - r) = \log 1 / x$ and $r/s = 1 - 1/\log (1/x)$.
Plugging these bounds into the original function yields the desired estimate.
\end{proof}

\end{document}